\documentclass[11pt]{article}
\usepackage[margin=1in]{geometry}
\newcommand{\keywords}[1]{%
  \par\noindent\textbf{Keywords:} #1
}
\usepackage{natbib}

\usepackage{comment,tabularx}
\usepackage{amsmath}%,romannum}
\usepackage[font=footnotesize,labelfont=bf]{caption}

\usepackage{longtable}
\usepackage{times}
\usepackage{bm}
\usepackage{amssymb}
\usepackage{graphicx}%http://ctan.org/pkg/graphicx
\usepackage{natbib}
\usepackage[plain,noend]{algorithm2e}
\usepackage{caption}
\usepackage{breqn}
\usepackage{multirow}
\usepackage{xcolor}
\usepackage{array}
\usepackage{tabularray}
\usepackage{diffcoeff}
\usepackage{subfigure}
\usepackage{amsthm}
\usepackage{appendix}
\usepackage{placeins}
\usepackage{hyperref}
\usepackage{url}

\newtheorem{proposition}{Proposition}
\newtheorem{lemma}{Lemma}
\newtheorem{assumption}{Assumption}
\newtheorem{remark}{Remark}
\usepackage{enumitem,color,verbatim,pgf,pgfarrows,pgfnodes}
\usepackage{tikz}
\usetikzlibrary{arrows.meta}

\def\bSig\mathbf{\Sigma}

\newcommand*{\rom}[1]{\expandafter\@slowromancap\romannumeral #1@}
\usepackage{makecell}
\usepackage{booktabs}
\usepackage{adjustbox}
\setcellgapes{2pt}

\newcommand{\R}{\mathbb{R}}
\newcommand{\s}{\mathbf{s}}
\newcommand{\X}{\mathbf{X}}
\newcommand{\Y}{\mathbf{Y}}
\newcommand{\D}{\mathbf{D}}
\newcommand{\btheta}{\boldsymbol{\theta}}
\newcommand{\bbeta}{\boldsymbol{\beta}}
\DeclareMathOperator*{\argmax}{arg\,max}
\DeclareMathOperator*{\argmin}{arg\,min}

\begin{document}

\title{Mixture-based Nonparametric Estimation of Spatial Covariance Functions with Applications to HIV Key Population Size Estimation across Sub-Saharan Africa}

\author{
Manushi Siriwardana, Hyebin Song, Le Bao, and Stephen Berg\\
Department of Statistics\\
Pennsylvania State University\\
University Park, PA, USA
}
\date{}
\maketitle

\label{firstpage}

\begin{abstract}
Consistent data on the sizes of key populations, such as female sex workers (FSWs), are often scarce, particularly at the sub-national level. Accurate size estimates are critical to effectively allocate resources and achieve HIV targets. Since FSW population sizes may be spatially correlated across areas, models that account for spatial dependence can improve estimation. An important component of such models is the covariance function, which characterizes the spatial dependence structure of the underlying process. In this work, we study spatial covariance functions to estimate FSW population sizes in Sub-Saharan Africa (SSA). Many spatial models rely on parametric covariance functions. However, parametric estimation can suffer from model mis-specification, potentially leading to inefficient or biased predictions. We therefore develop a robust non-parametric approach for estimating the covariance function of a stationary isotropic process in $\mathbb{R}^d$. We focus on a class of covariance functions that are valid in all dimensions, which includes popular kernels such as the exponential and Mat\'{e}rn kernels. Leveraging the fact that such covariance functions can be represented as infinite mixtures of scaled Gaussian kernels, we propose two estimation methods: weighted least squares and nonparametric maximum likelihood estimation to estimate the mixing measure of scaled Gaussian kernels. We also develop computationally efficient methods to solve these optimization problems using non-negative least squares and second-order descent updates. We evaluate the proposed methods through simulations and apply them to estimate the FSW population sizes at the sub-national level in SSA.
\end{abstract}

\keywords{
Key Population Size Estimation,
Spatial Covariance Estimation,
Stationary Isotropic Processes,
Gaussian Scale Mixtures,
Nonparametric Maximum Likelihood
}

%  As usual, the \maketitle command creates the title and author/affiliations
%  display 

\maketitle

\section{Introduction}
\label{sec:Int}

Key populations related to human immunodeficiency virus (HIV) infection are groups of people who are at a higher risk of being exposed to, acquiring, and transmitting HIV \citep{Xu2022}. In 2022, over half (55\%) of new HIV infections were concentrated among key populations and their sexual partners \citep{UNAIDS2024HIV}. These key populations include female sex workers (FSWs), cisgender men who have sex with men (MSM), people who inject drugs (PWID), and transgender women. Such groups of people may face barriers to accessing HIV treatments for various reasons, including discrimination, financial constraints, and criminalization \citep{Shannon2018}. Consequently, many countries in Sub-Saharan Africa (SSA) experience a high burden from the HIV epidemic. Key populations comprised 25\% of new HIV infections in SSA in 2022 \citep{UNAIDS2024HIV}.
The prevalence of HIV is particularly high among female sex workers and has not declined significantly over the years \citep{Shannon2018}. Criminalization of sex workers leads to restricted access and retention in treatments. \cite{Lyons2020} found that criminalizing and non-protective sex work laws lead to higher prevalence of HIV among female sex workers in SSA countries. Nevertheless, the HIV crisis among sex workers continues to receive inadequate attention \citep{Shannon2018}. Therefore, estimating the population sizes of FSWs in SSA is crucial to developing the necessary policies to prevent and treat HIV.

Accurate estimates of population sizes of key populations are critical for resource allocation and for achieving national and global HIV goals \citep{AbdulQuader2015}. However, there is a limited amount of reliable, consistent and timely data on the sizes of key populations, particularly at the sub-national level \citep{Abhirup2019}. Estimation of the population sizes of key populations can be demanding because these populations are difficult to reach and traditional methods of estimation are subject to bias \citep{Okal2013, Xu2022}. Therefore, data-driven  methods can be used to obtain final population size estimates (PSEs) by considering available non-sensitive data and estimates obtained from traditional methods while balancing the strengths and limitations of these methods \citep{Okal2013, Xu2022, Laga2023, Oliver2024}. Furthermore, the use of appropriate data-driven methods can be useful in deriving a single final estimate by considering multiple estimates observed at the same location and obtaining estimates for locations without any observed data \citep{Laga2023}.

Many statistical approaches used to produce PSEs employ Bayesian hierarchical models with spatial random effects, often specified through neighborhood-based areal models such as conditional autoregressive (CAR) structures~\citep{Laga2023, Oliver2024}. While these models are useful for borrowing information across spatial units, their induced spatial dependence is determined by a prespecified neighborhood or weight matrix. Consequently, the resulting estimates can be sensitive to the choice of areal units, definition of boundaries, and neighboring relationships~\citep{Earnest2007}. Moreover, CAR-type models often induce dependence between non-neighboring areas through paths in the graph, but this graph-based distance does not necessarily correspond to geographic distance. Therefore, prediction at a new location requires defining how the new location is connected to the existing neighborhood graph, which is not always straightforward.

In this paper, we take a covariance function based approach for modeling spatial dependence in PSEs. Covariance functions describe spatial dependence directly as a function of geographic distance and naturally support prediction at unobserved locations through kriging. The performance of spatial interpolation methods such as kriging can depend significantly on the selected covariance function \citep{Christakos1984, Barry1996}. Parametric methods, while widely used, can suffer from model mis-specification, leading to inefficient or biased predictions and misleading inferences \citep{Choi2013}. Hence, we develop a nonparametric estimator for stationary and isotropic covariance functions. 

Our approach leverages the fact that a stationary isotropic covariance function valid in all dimensions can be represented as an infinite mixture of scaled Gaussian kernels. We directly target estimating the mixing measure by approximating it with a finite discrete measure, and consider both weighted least squares (WLS) estimation and nonparametric maximum likelihood (NPML) estimators. Furthermore, we develop computationally efficient algorithms for the resulting constrained optimization problems,  using non-negative least squares for the WLS estimator and second-order descent updates for the NPML estimator. We use the proposed nonparametric approach to obtain the PSEs of FSWs in SSA at the sub-national level, and compare it with most frequently used parametric models, another nonparametric method, and the conditional autoregressive integrated nested Laplace approximation (CAR-INLA) model which treats spatial locations using spatially correlated random effects. The CAR-INLA model is used in related literature to predict logit proportions of key populations \citep{Oliver2024}. Therefore, in our study, we consider it to evaluate the performance of the proposed nonparametric covariance estimation method relative to a benchmark model while comparing it with other covariance-based methods.

\subsection{Related Work}

Estimating the population sizes of key populations has received increasing attention over the years \citep{Okal2013, Abhirup2019, Xu2022}. 
Existing approaches include traditional methods, such as capture-recapture, multiplier, Delphi, mapping, workbook method, respondent-driven sampling, and network scale-up methods, as well as data-driven approaches, including Bayesian estimation, stochastic simulation, and Laska, Meisner, and Siegel (LMS) estimation~\citep{Xu2022}.
Each method comes with limitations. Bayesian methods are vulnerable to subjectivity due to their dependence on prior distributions,  stochastic simulation methods can be impacted by the quality of data used in calibrating the  models, and the LMS method may demonstrate comparatively lower accuracies since it uses information only from one sample~\citep{Xu2022}. Therefore, multiple methods are often combined to overcome the limitations of traditional methods and optimize their respective strengths in producing a final PSE \citep{AbdulQuader2015, Xu2022}. For example, \cite{Okal2013} combined estimates from the multiplier, Wisdom of the Crowds, and literature-based methods with a surveillance survey to estimate the population size of MSM, FSWs, and intravenous drug users (IDUs) in Nairobi, Kenya, while \cite{Abhirup2019} used Bayesian hierarchical regression to obtain estimates of MSM population sizes in sub-national areas of Côte d’Ivoire using initial estimates obtained from several traditional methods. Despite substantial progress, the availability of population size estimates in SSA remains inconsistent and uneven across different countries and key populations \citep{Viswasam2020}. 

Among different key populations, FSWs generally represent the most studied area of key population research in SSA. Nevertheless, FSW PSEs are not available for all countries in SSA, and existing estimates are produced using heterogeneous methods \citep{Viswasam2020, Oliver2024}. \cite{Laga2023} fitted a Bayesian linear hierarchical model to integrate FSW PSEs and uncertainties in estimates in sub-national areas in SSA and Madagascar. Their spatial model used independent and identically distributed Gaussian random effects for the estimation method and country effects. Consequently, they address spatial structure solely using country-level random effects. This accounts for clustering within a country; however, it ignores spatial correlation structure across countries since random effects are assumed to be independent and identically distributed draws from a Gaussian distribution centered at zero. \cite{Oliver2024} used a Bayesian mixed-effects spatial regression to model logit-transformed PSEs of key populations at the provincial level, incorporating effects for the survey method, spatially correlated random effects between neighboring countries and provinces, and a study-level random effect that allowed capturing relationships among PSEs from the same study. Accounting for spatial structure solely based on spatially correlated random effects between neighboring entities assumes that the spatial covariance structure occurs only due to neighbors, and can be an ineffective strategy when there is insufficient data on nearby locations.

The use of spatial covariance functions allows understanding and capturing local as well as long range spatial relationships in data. Spatial correlation functions have been used in epidemiology and public health applications~\citep{Abhirup2019, Qu2023}. For example, \cite{Abhirup2019} used kriging to impute missing values in HIV prevalence data, which was one of the covariates used in their study. They modeled HIV prevalence using a Gaussian Process (GP) with a constant mean and an exponential covariance function. \cite{Qu2023} studied HIV areal count data with missing values and, under a likelihood-based approach, specified a spatiotemporal model for the complete data with several spatial correlation structures, including homogeneous, exponential, and CAR correlation functions. In public health and epidemiological research, spatial covariance functions, when used, are often estimated using parametric methods \citep{Abhirup2019, Qu2023}. However, the use of parametric methods can suffer from model mis-specification and complications in optimizations leading to biased predictions \citep{Stein1999, Warnes1987, Diggle2007, Geoga2023}. For example, the Mat\'ern smoothness parameter controls the differentiability of the spatial process and plays an important role in interpolation, although it may be difficult to estimate reliably in practice.

Nonparametric methods have been proposed in the literature for estimating spatial covariance functions to avoid various issues encountered in using parametric methods \citep{GentonGorsich2002, Choi2013}. Many approaches have been considered in the literature, including variogram-based estimators, kernel regression estimators, $Q_n$ estimators, tapered estimators, robust estimators, and estimators based on linear combinations of basis functions~\citep{Cressie1980,Guyon1982, RousseeuwCroux1992, Hall1994,Ma2000, Choi2013, Wang2023}. Among these approaches, methods that leverage spectral or mixture representations of valid covariance functions are most closely related to our work. They have received increasing attention because they ensure that the resulting covariance functions are positive definite \citep{GentonGorsich2002, Choi2013, Wang2023}. 
\cite{ShapiroBotha1991} proposed a constrained nonparametric estimator for stationary isotropic variograms by representing the corresponding covariance function as a non-negative linear combination of Bessel functions. \cite{GentonGorsich2002} developed a nonparametric method to estimate the covariance function based on optimal discretizations of the isotropic spectral representation of positive definite functions using Fourier–Bessel matrices. \cite{Huang2011} also used the spectral representation of isotropic variograms, but estimated the spectral density using spline methodology. \cite{Choi2013} considered estimation of covariance functions that are valid in all dimensions by using completely monotone functions and B-spline approximations of the mixing measure. In these methods, WLS is used to fit a valid representation to empirical variogram or covariogram estimates. 
More recently, \cite{Wang2023} proposed a sieve maximum likelihood estimator for covariance functions valid in all dimensions by approximating the density of the mixing measure with Bernstein polynomials. Our approach follows these general directions, but approximates the mixing measure directly as a finite discrete measure, a formulation commonly arising in nonparametric mixture estimation. Our approach is simpler to implement compared with the approach of \citet{Choi2013}, where the mixing measure was estimated as a monotonicity-constrained B-spline function, and the approach of \citet{Wang2023}, which used Bernoulli polynomials to represent a mixture density.

We also develop novel theoretical results for mixture covariance estimators. We show the existence of the NPML estimator over the space of measures, which has not been proven before, and prove that the NPML estimated measure can be represented as a discrete measure with finite support set. Furthermore, using results from discrete geometry, we show that when the dataset is collected on an $m\times m$ square lattice, the NPML estimator can be represented by a measure with order $m^2/\sqrt{\log m}$ points, which shows that the complexity of the mixing measure increases slightly more slowly than the sample size. Our results are flexible, holding for mixtures over a fairly general choice of covariance kernel function, and also when the mean function of the process is allowed to depend on spatially indexed covariates, as is typical in applications.

The remainder of this paper is organized as follows. Section~\ref{sec:Data_Description} introduces the motivating dataset and defines the problem. In Section~\ref{sec:Methods}, we focus on the proposed nonparametric method to estimate the spatial covariance functions. Section~\ref{sec:Simulation_Study} presents results from simulation studies conducted to evaluate the performance of the nonparametric method relative to parametric and nonparametric covariance-based methods and the CAR-INLA model. In Section~\ref{sec:Data_Analysis}, we apply the proposed nonparametric method along with other methods based on spatial covariance functions and the CAR-INLA model on the motivating dataset.

\section{Data Description}
\label{sec:Data_Description}
The HIV dataset used in this study consists of 787 PSEs of FSWs from 30 SSA countries between 2010 and 2023 obtained from \cite{Oliver2024}. The estimates have been compiled from studies conducted during the aforementioned period using four databases and four systematic reviews. Each record in the dataset contains details of an estimate derived from a traditional method used in the respective study, including study index, study area, country, year, study method, reference population, count estimate, and proportion estimate. Note that we model population proportion estimates as opposed to PSEs to allow comparison of PSEs in different settings \citep{Oliver2024}. We obtain proportion estimates using count estimates and area, gender, year-matched total population sizes presented by \cite{Oliver2024}. Furthermore, we consider logit-transformed proportions for modeling purposes similar to \cite{Laga2023}, to better satisfy the Gaussianity assumption underlying our model. Finally, we estimate the logit-transformed proportions of FSWs in SSA at sub-national level using the fitted models and use area-matched female population sizes aged 15-49 years in 2015 to obtain FSW PSEs.

Figure~\ref{fig:logit_proportion} shows the observed logit-transformed proportions of FSWs in different study areas in SSA. Multiple estimates of logit proportions that include replicates and repeated measurements are recorded at the same location in the dataset. Meanwhile, some sub-national areas do not seem to contain any estimates as shown in the figure. The logit proportions exhibit high variability, without a clear smooth pattern across space. This may indicate the importance of considering factors associated with logit proportions that may explain complex human behaviors. To correspond with the logit-transformed proportions of FSWs, we collect covariates data at sub-national level for all SSA countries. We use the covariates database used by \cite{Laga2023}. Consequently, we aim at estimating the PSEs of FSWs in SSA at sub-national level using spatial covariance functions while capturing heterogeneity using covariates data and accounting for spatial structure.

\begin{figure}[h!]
    \centering
    \includegraphics[width=0.56\textwidth]{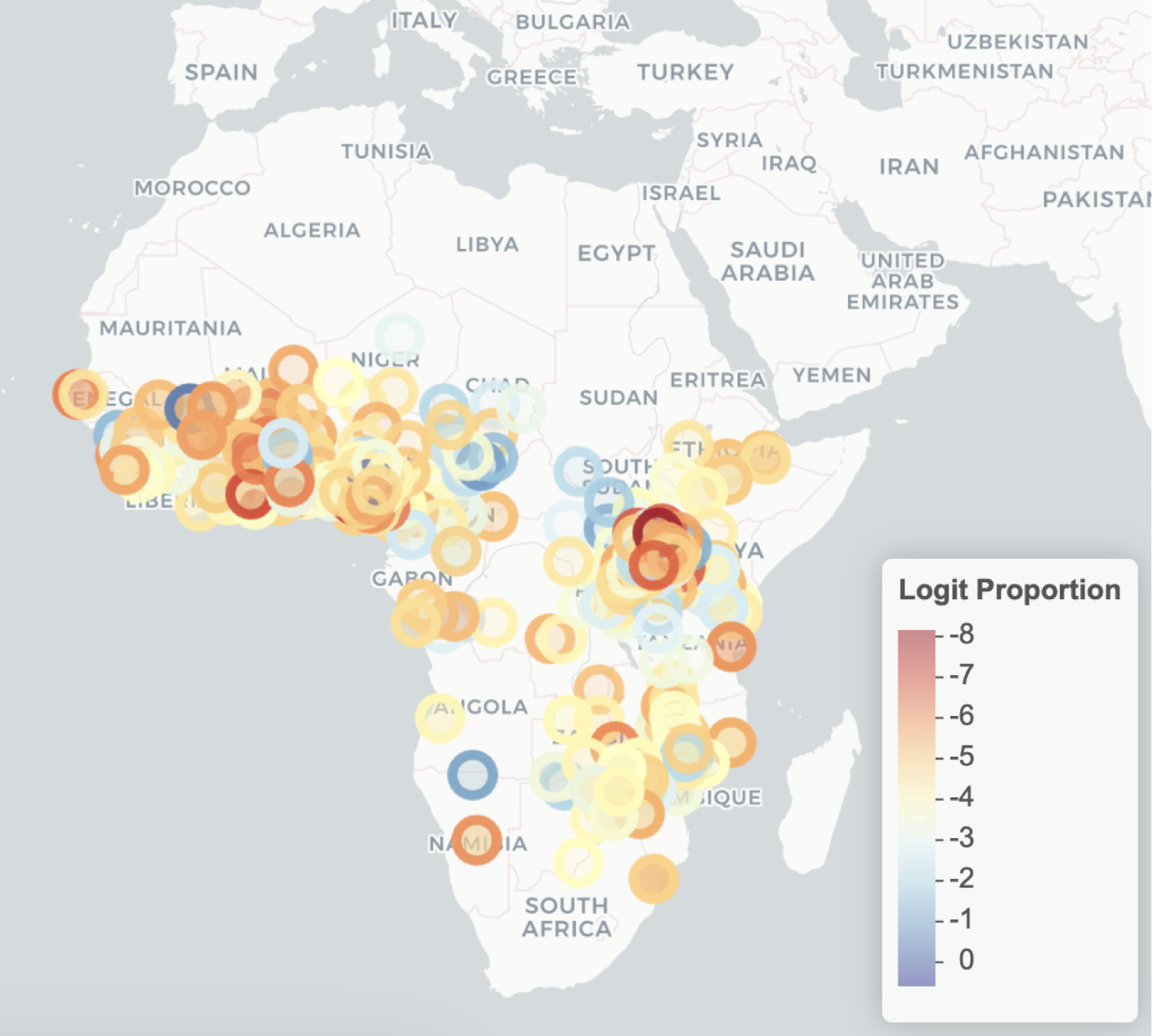}
    \caption{Observed logit transformed proportions of female sex workers in different study areas in sub-Saharan Africa.}
    \label{fig:logit_proportion}
\end{figure}

As noted earlier, the dataset contains multiple estimates of logit proportions observed at the same location. In cases with multiple estimates per location, we aggregate the observations to the level of unique locations. As the aggregation method, we use the arithmetic mean while assigning equal weights to each observation at a certain location assuming that each method used to obtain the estimates is equally informative. Given that $\tilde{Y}(\mathbf{s})_{j}$ is the $j^{th}$ estimate of the logit proportion observed at the location $\mathbf{s}$ where $\mathbf{s}\in \mathcal{D}$, $\mathcal{D}\subseteq \mathbb{R}$ is a domain of interest in $\mathbb{R}^d$, $n_{\mathbf{s}}$ is the number of estimates of the logit proportion observed at location $\mathbf{s}$ and $j=1,\dots,n_{\mathbf{s}}$, we obtain $\bar{Y}(\mathbf{s}) = \frac{1}{n_{\mathbf{s}}} \sum_{j=1}^{n_{\mathbf{s}}} Y(\mathbf{s})_j$. Consequently, the aggregated dataset contains 347 estimates of logit proportions observed at 347 unique locations.

\paragraph*{Covariates Data} We model the mean structure in the averaged logit proportions using an appropriate subset of covariates. Consequently, we consider year and province-matched reference female population sizes aged 15-49 years and administration area-matched percentages of urban population. A logarithmic transformation is applied to both these variables. We average the covariates data over $j=1,...,n_{\mathbf{s}}$ at each location $\mathbf{s}$ similar to logit proportions and obtain $\bar{\textbf{X}}(\mathbf{s}) = \frac{1}{n_{\mathbf{s}}} \sum_{j=1}^{n_{\mathbf{s}}} \tilde{\textbf{X}}(\mathbf{s})_j$.

Initially, we fit the full Bayesian hierarchical model using all covariates considered by \cite{Laga2023}, on the original unaggregated data. According to the fitted model, only the above two covariates appeared to be significant while also having the strongest estimated effects on the logit proportions. Therefore, we consider these two covariates in our analysis to improve interpretability while maintaining model simplicity.

\paragraph*{Spatial Resolution}
Since our main objective is to estimate the FSW PSEs in SSA at the sub-national level, it is important to map the estimates obtained from \cite{Oliver2024} to sub-national areas. We use the procedure followed by \cite{Laga2023} to map the estimates to administrative units using shapefiles and classify each estimate as city, sub-national, national, or miscellaneous. We use only the estimates from cities and sub-national areas in the analysis and consider the mapped administrative units as sub-national areas. Therefore, each sub-national area represents a unique location, $\mathbf{s}$.

We build the adjacency matrix for the CAR-INLA model using spatial polygons of sub-national areas mapped to estimates. Areas that share a common boundary or vertex are defined as neighbors. Meanwhile, covariance estimation methods capture spatial relationships based on pairwise distances between locations. Therefore, we use the coordinates of the centroid of the corresponding spatial polygon to represent the coordinates of each location, i.e., sub-national area.

\paragraph*{Checking the Validity of the Assumptions}
The proposed mixture-based nonparametric method assumes that the spatial process is stationary and that the covariance function is isotropic. These properties can be assessed by investigating the empirical variogram \citep{Handcock1994}. We use empirical directional variograms fitted to the residuals of a linear regression model to evaluate the assumption of isotropy.

The empirical directional variogram in a particular direction is obtained using pairs of residuals that are at most $45^\circ$ angle with the considered direction. For example, the north–south directional variogram considers pairs of residuals that are in the north-south direction and pairs with a connecting line that is at most $45^\circ$ with the north-south direction. We compare directional semivariograms in cardinal and non-cardinal directions with the isotropic semivariogram obtained from all pairs of residuals without considering a certain orientation. As shown in Figure~\ref{fig:directional_variogram}, the empirical directional semivariograms provide little evidence for the presence of significant deviations between the directional semivariograms and the isotropic semivariogram. Thus, there is little evidence of violation of the assumption of isotropy of the covariance function. Furthermore, the histogram and normal Q-Q plot of the logit-proportions indicated approximate normality (see Figures~\ref{fig:histogram_logit_proportions} and~\ref{fig:Qqplot_logit_proportions} in Appendix~\ref{sub: appendix_real_data}), supporting the use of a Gaussian process model.

\begin{figure}[h!]
    \centering
    \includegraphics[width=0.5\textwidth]{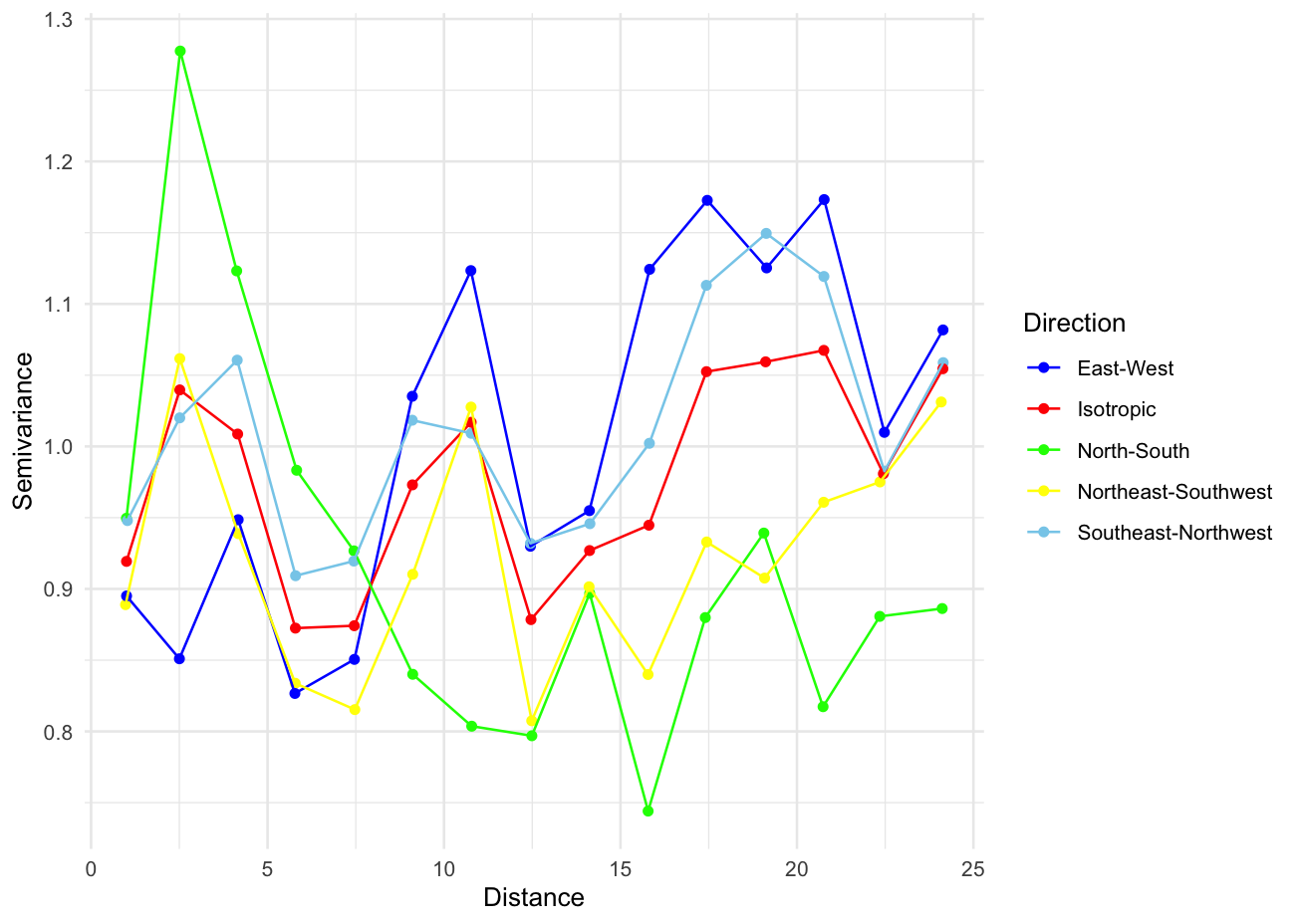}
    \caption{Empirical directional and isotropic semivariograms for the logit proportions.}
    \label{fig:directional_variogram}
\end{figure}

\section{Methodology}
\label{sec:Methods}

In this section, we propose a novel nonparametric method for estimating the covariance structure of a stationary and isotropic spatial process. The key idea is to leverage the fact that any stationary isotropic covariance function that is valid in all dimensions admits a Gaussian scale-mixture representation. We define the nonparametric maximum likelihood estimator over the space of non-negative mixing measures. We propose to estimate the mixing distribution using a finite discrete approximation. This leads to a flexible covariance estimator that avoids restrictive parametric assumptions while preserving the validity of the covariance function.

\subsection{Gaussian Scale-Mixture Model}
We consider a random field $\{Y(\mathbf{s}); \,\mathbf{s}\in \mathcal{D}\}$, where $\mathcal{D} \subseteq \mathbb{R}^d$ denotes the domain of interest. For each $\mathbf{s}\in \mathcal{D}$, we assume that
\begin{equation}
    Y(\mathbf{s})  = \textbf{X}(\mathbf{s})^\top \boldsymbol{\beta} + Z(\mathbf{s})  + \epsilon(\mathbf{s}),
\end{equation}
where $\textbf{X}(\mathbf{s})\in \R^p$ is a known vector of covariates, $\boldsymbol{\beta}\in\R^p$ is an unknown parameter vector, $Z(\mathbf{s})$ is a zero-mean weakly stationary process with an isotropic covariance function, and $\epsilon(\mathbf{s})$ is a zero-mean measurement error process independent of $Z(\mathbf{s})$. We assume that $\text{Var}(\epsilon(\mathbf{s})) = \sigma_e^2$ and $\text{Cov}(\epsilon(\mathbf{s}_1), \epsilon(\mathbf{s}_2)) = 0$, $\forall \mathbf{s}_1 \ne \mathbf{s}_2$. Our goal is to estimate the covariance structure $C_Y:\mathcal{D}\times\mathcal{D}\to\mathbb{R}_+$ of $Y(\mathbf{s})$, given by

\begin{equation}\label{eq: C_Y}
C_Y(\mathbf{s}_1, \mathbf{s}_2):=\text{Cov}(Y(\mathbf{s}_1),Y(\mathbf{s}_2))=C_Z(\mathbf{s}_1, \mathbf{s}_2)+\sigma_e^21[\mathbf{s}_1=\mathbf{s}_2],
\end{equation}
where $C_Z(\mathbf{s}_1,\mathbf{s}_2):=\text{Cov}(Z(\mathbf{s}_1), Z(\mathbf{s}_2))$.

By the isotropy assumption, the covariance of $Z(\mathbf{s})$ depends only on the Euclidean distance $h = \|\s_1-\s_2\|_2$.  Thus, there exists a function $c_Z:\mathbb{R}_+\to \mathbb{R}_+$ such that 
\begin{equation}
C_Z(\mathbf{s}_1,\mathbf{s}_2)=c_Z(\|\mathbf{s}_1-\mathbf{s}_2\|_2).
\end{equation}

Let $\mathcal{D}_d$ denote the class of stationary isotropic covariance functions $c:\R_{+}\to \R$ such that $C(\mathbf{s}_1,\mathbf{s}_2) = c(\|\mathbf{s}_1-\mathbf{s}_2\|_2)$ is a valid stationary isotropic covariance function on $\R^d$. Also let $\mathcal{D}_\infty = \cap_{d\ge 1} \mathcal{D}_d$ denote the class of radial covariance functions that are valid in every dimension. If $c_Z \in \mathcal{D}_d$, it admits an integral representation involving Bessel kernels \citep{Stein1999}. As $d \to \infty$, an appropriately scaled Bessel kernel converges to a Gaussian kernel. Consequently, if $c_Z\in \mathcal{D}_\infty$, that is, if it is valid in all dimensions, then $c_Z$ admits the following representation
\begin{equation}
    c_Z(h)=\int_{0}^\infty e^{-(h/\alpha)^2}\,d\mu_Z(\alpha)
    \label{eq:representation_isotropy_2}
\end{equation}
where $\mu_Z$ is a positive measure supported on $[0,\infty)$. Define the Gaussian Kernel at $\alpha$ by
\begin{equation*}
    K_G(0,h) := 1\{h=0\},\quad K_G(\alpha,h) := e^{-(h/\alpha)^2},\,\,\alpha>0.
\end{equation*}
Then the covariance function of $Y$ can be written as
\begin{equation}\label{eq: C_Y_mixture}
   C_Y(\mathbf{s}_1, \mathbf{s}_2) = \int_0^\infty K_G(\alpha, \|\mathbf{s}_1- \mathbf{s}_2\|_2)\,d\mu_Y(\alpha),
\end{equation}
where $\mu_Y := \mu_Z+\sigma_e^2 \delta_{0}$, with $\delta_0$ denoting the Dirac measure at $0$. Note that this is an infinite mixture of Gaussian kernels. 

Our method aims to directly estimate the mixing measure $\mu_Y$, and thus the covariance function $C_Y$, from the observed spatial data. This formulation is attractive because it includes many commonly used parametric covariance models, including the exponential and Mat\'{e}rn families, while avoiding the need to impose a specific parametric form~\citep{Genton2001, Tronarp2018}.

\subsection{Ideal NPML Estimator and Finite-grid Approximation}
Let $\mathcal{D}_n=\{\boldsymbol{s}_1,\dots,\boldsymbol{s}_n\}$ be the set of locations observed in the data and $\boldsymbol{Y}=[Y(\boldsymbol{s}_1),...,Y(\boldsymbol{s}_n)]^\top\in\mathbb{R}^n$ be the observed response vector. 
We further assume that $Z(\boldsymbol{s})$ and $\epsilon(\boldsymbol{s})$ are Gaussian processes.

Although the Gaussian scale-mixture representation in \eqref{eq: C_Y_mixture} is defined over $[0,\infty)$, for estimation we restrict the candidate support of the mixing measure to a compact set $A=[0,U]$. This corresponds to bounding the range of spatial scales of the estimator. In practice, $U$ is chosen large enough to include spatial components whose effective range is comparable to the maximum observed pairwise distance. 
Let $\mathcal{M}_A$ be the class of finite, positive measures supported on $A$. For $\mu \in \mathcal{M}_A$, define the covariance matrix $\Sigma(\mu) \in \R^{n\times n}$ by
\begin{equation}
    \Sigma(\mu):=\int_A K_G(\alpha,\mathbf{D}) d\mu(\alpha),
\end{equation} where $\D\in\R^{n\times n}$ denotes the pairwise distance matrix with entries $\D_{ij} = \|\boldsymbol{s}_i-\boldsymbol{s}_j\|_2$, and $K_G(\alpha,\D)$ is the matrix with $i,j$ entry $K(\alpha,\D_{ij})$. Equivalently,  $[\Sigma(\mu)]_{ij} = \int_AK_G(\alpha,\D_{ij})d\mu(\alpha)$. 

For a Gaussian vector $\mathbf{Y} \sim N_n(\mathbf{X}\boldsymbol{\beta},\Sigma)$ with a positive definite covariance matrix $\Sigma$, the log-likelihood is
\[\ell(\bbeta,\Sigma)=- \frac{n}{2}\log(2\pi)-\frac{1}{2}\log|\Sigma|-\frac{1}{2}(\boldsymbol{Y}-\mathbf{X}\boldsymbol{\beta})^\top \Sigma^{-1}(\boldsymbol{Y}-\mathbf{X}\boldsymbol{\beta}).\] 
Since for any positive definite $\Sigma$, the maximizer of the likelihood over $\bbeta$ is given as $\hat{\bbeta}_{\Sigma}:  = (\X^\top \Sigma^{-1}\X)^{-1}\X^\top \Sigma^{-1}\Y$, we can define the following profile log-likelihood, viewed as a function of $\Sigma$, by
\begin{equation}
    \ell_P(\Sigma)=-\frac{n}{2}\log(2\pi)-\frac{1}{2}\log |\Sigma|-\frac{1}{2}(\Y-\X\hat{\bbeta}_{\Sigma})^\top \Sigma^{-1}(\Y-\X\hat{\bbeta}_{\Sigma}).
\end{equation}

Finally, define the compact-support mixture covariance class $\mathcal{H}_A:=\{\Sigma(\mu):\,\, \mu \in \mathcal{M}_A\}$. We define the (ideal) NPML covariance estimator over $A$ by 
\begin{equation}\hat{\Sigma}_{ML} \in \argmax_{\Sigma:\; {\Sigma\in \mathcal{H}_A}} \ell_P(\Sigma).\label{eq:npmle}
\end{equation}

This estimator is well-defined in the sense that there exists at least one solution for the problem \eqref{eq:npmle} (see Proposition~\ref{prop:existence} in Section~\ref{sub:properties_NPML}).  Also, this ideal estimator is infinite-dimensional because the optimization is over all finite positive measures supported on $A$. However, even when the optimization problem is defined such that the objective is maximized over all positive measures, the resulting solution is often finite and discrete \citep{Lindsay1983, lindsay1993uniqueness}. In our setting, under mild conditions, Proposition~\ref{prop:finite_support} shows an analogous property that an NPML covariance matrix admits at least one finitely supported representing measure.

Motivated by this property, we propose a finite-grid approximation which leads to a tractable estimation procedure. 
While finite mixtures of Gaussian kernels do not, in general, reproduce covariance models such as exponential or Mat\'{e}rn family, we expect that they provide good approximations given a sufficiently fine grid. We evaluate the quality of this finite grid approximation empirically in Section~\ref{sub_sec:effectiveness_finite_mixture_approximations}. The results show that the approximation error is small for moderately sized grids. In the settings considered, the grid size of $s\ge 10$ was sufficient for good approximations of Mat\'{e}rn covariance functions.

We now define the finite-grid approximation of the NPML estimator $\hat{\Sigma}_{ML}$. 
To improve robustness to the original scale of the map, we perform estimation on a normalized distance scale. Let $h_{lm}=\|\mathbf{s}_l-\mathbf{s}_m\|_2$ be the distance between the random field observed at two arbitrary locations $\mathbf{s}_l$ and $\mathbf{s}_m$. Thus, for any $s_h>0$, it follows that
\begin{equation}
    C_Y(\mathbf{s}_l, \mathbf{s}_m)= \int_{0}^\infty e^{-((h_{lm}/s_h)/(\alpha/s_h))^2}\,d\mu_Y(\alpha) =\int_{0}^\infty e^{-(\tilde{h}_{lm}/\tilde{\alpha})^2}\,d\tilde{\mu}_Y(\tilde{\alpha})
\end{equation}
where we define $\tilde{h}_{lm}=h_{lm}/s_h$, $\tilde{\alpha}=\alpha/s_h$, and $\tilde{\mu}_Y$ denotes the induced measure by the change of variable.

We then approximate the re-parametrized mixing measure $\tilde{\mu}_Y$ by a discrete measure supported on a finite grid. 
Let $\{\tilde\alpha_1,\dots,\tilde\alpha_s\}\subset [0,\infty)$ be a grid with $\tilde \alpha_1 = 0$, and write 
\begin{equation}\label{def: finite_mixing_measure}
    \tilde{\mu}_Y(\cdot;\,\boldsymbol{\theta}): = \sum_{i=1}^s\theta_i\delta_{\{\tilde \alpha_i\}}
\end{equation} where $\delta_{\{\tilde \alpha_i\}}$ denotes the Dirac measure at $\tilde \alpha_i$, and $\theta_i$ is the corresponding weight. This leads to the finite-mixture approximation
\begin{equation}
    C_Y(\mathbf{s}_l, \mathbf{s}_m;\boldsymbol{\theta}) = \int_{0}^\infty e^{-(\tilde{h}_{lm}/\tilde{\alpha})^2}\,d\tilde{\mu}_Y(\tilde{\alpha};\,\boldsymbol{\theta})=1[\tilde{h}_{lm}=0]\theta_1+\sum_{i=2}^s \exp(-\tilde{h}_{lm}^2/\tilde{\alpha}_i^2)\theta_i
    \label{eq:finite_mixture}
\end{equation}
where the weights $\theta_i$ are parameters to be estimated from the data. Note that $\tilde\alpha_i$ are the scale parameters of the Gaussian kernels, where a lower value of $\tilde{\alpha}$ corresponds to a faster decay of $\exp(-\tilde{h}^2 / \tilde{\alpha}^2)$.

% how to set A
We let $s_h$ be the maximum pairwise distance between the observed set of locations so that $0\le \tilde{h}_{lm} \le 1$ for all $l,m$. We select the grid of $\tilde{\alpha}_i$ values, $\{0,\tilde{\alpha}_{\min}, \dots, \tilde{\alpha}_{\max}\}$, such that the \textit{effective range} of the fastest and slowest decaying components, that is, $\exp(-\tilde{h}^2 / \tilde{\alpha}_{\min}^2)$ and $\exp(-\tilde{h}^2 / \tilde{\alpha}_{\max}^2)$, corresponds to  $1/n$ and $0.9$, respectively. The effective range of a correlation function is defined as the distance at which the correlation becomes small, typically below 0.05. Therefore, the effective range of $\exp(-\tilde{h}^2 / \tilde{\alpha}^2)$ is $\tilde{h}$ such that $\exp(-\tilde{h}^2 / \tilde{\alpha}^2) = 0.05$, that is, $\tilde{h} = \tilde{\alpha} \log(20)^{1/2}$. 
Therefore, we set $\tilde{\alpha}_2 = \tilde{\alpha}_{\min}  = (1/n)\log(20)^{-1/2}$ and $\tilde{\alpha}_s = \tilde{\alpha}_{\max}  = (0.9)\log(20)^{-1/2}$, with $\{\tilde{\alpha}_3,\dots,\tilde{\alpha}_{s-1}\}$ chosen by linearly interpolating between $\tilde{\alpha}_2$ and $\tilde{\alpha}_s$ on the logarithmic scale.

 Under the finite-mixture representation in \eqref{eq:finite_mixture}, the covariance structure is parameterized by the non-negative weight vector $\boldsymbol{\theta}\in\R_{+}^s$. Specifically, the covariance matrix can be written as
\begin{align}\Sigma(\boldsymbol{\theta}):=\sum_{i=1}^{s}\theta_iK_G(\tilde{\alpha}_i, \tilde{\mathbf{D}})=\sum_{i=1}^{s}\theta_i\Sigma_G(\tilde{\alpha}_i),\label{eq:matrix_finite_mixture}
\end{align}
where $\tilde{\mathbf{D}}$ is the pairwise distance matrix on the normalized distance scale with entries $\tilde{h}_{jk}$, and $\Sigma_G(\tilde{\alpha}_i)\in\mathbb{R}^{n\times n}$. The first component $\tilde{\alpha}_1=0$ represents the nugget effect so that $\Sigma_G(\tilde{\alpha}_1) = \mathbf{I}_n$. For $i=2,\dots,s$,
\begin{align}
    [\Sigma_G(\tilde{\alpha}_i)]_{jk} = \exp (-\tilde{h}_{jk}^2/\tilde{\alpha}_i^2).
\end{align}
Then the optimization problem \eqref{eq:npmle} is reduced to the finite-dimensional constrained optimization problem
\begin{equation}\label{eq:constrained_opt_problem}
\hat{\boldsymbol{\theta}}_{ML}=\argmax_{\boldsymbol{\theta}; {\boldsymbol{\theta}\ge \boldsymbol{0}}} \ell_P(\Sigma(\boldsymbol{\theta})).
\end{equation}
The corresponding finite-grid Gaussian mixture estimator, which we denote as FGM, is
\begin{equation}
    \hat{\Sigma}_{\rm FGM} = \Sigma(\hat{\boldsymbol{\theta}}_{ML}).
\end{equation}

% briefly disucss WLS
The same finite-grid representation also naturally leads to a weighted-least squares version of the estimator. Least-squares fitting of empirical covariances or semivariograms is commonly used in spatial covariance estimation, and provides a simple moment-based alternative to likelihood-based estimation. 
Recall $\tilde{h}_{ij}=\|\boldsymbol{s}_i - \boldsymbol{s}_j\|_2/s_h$ denotes the normalized pairwise distance. Choose bin endpoints $0=b_0<b_1<\dots<b_L\le 1$, and let $\bar{b}_k:=(b_{k-1}+b_k)/2$ denote the mid-point of the $k$th bin $(b_{k-1},b_k]$. Define $N_k$ as the set of pairs of observation locations whose separation distance falls in the $k$th lag interval, i.e.,
\begin{align*}
   N_k=\{(i,j):\,\,\tilde{h}_{ij}\in(b_{k-1},b_{k}]\},\quad k=1,\dots,L.
\end{align*}
Let $\hat{C}_E$ be the empirical estimator of $C$ defined by
\begin{align*}
    \widehat{C}_E(\tilde h)=
    \begin{cases}
    \frac{1}{|\mathcal{D}_n|} \sum_{\mathbf{s} \in \mathcal{D}_n}\{Y(\mathbf{s})-\bar{Y}\}^2& \tilde{h}=\bar{b}_0= 0\\
      \frac{1}{|N_k|} \sum_{(i,j) \in N_k}\{Y(\mathbf{s}_i)-\bar{Y}\}\{Y(\mathbf{s}_j)-\bar{Y}\} & \tilde{h} \in \{\bar{b}_1,\dots,\bar{b}_{L}\}\\
    \end{cases}
\end{align*}
where $|N_k|$ is the cardinality of the set $N_k$ and $\bar{Y}$ is the sample mean.

Given non-negative weights $w_k$, we can define the finite-grid WLS estimator as 
\begin{equation}\hat{\boldsymbol{\theta}}_{WLS}=\argmin_{\boldsymbol{\theta};\,\, {\boldsymbol{\theta}\ge \boldsymbol{0}}} \sum_{k=0}^{L}w_k \{ \hat{C}_E(\bar{b}_k) - \sum_{i=1}^s K_G(\tilde\alpha_i, \bar{b}_{k})\theta_i\}.\label{eq:wls_minimizer}
\end{equation}

\subsection{Computation}
\label{sub_sec:estimation}

We now discuss how to solve the constrained optimization problem in \eqref{eq:constrained_opt_problem}. First, the following Lemma shows that when the weight vector $\boldsymbol{\theta}$ has at least one non-zero element, the corresponding covariance matrix $\Sigma(\boldsymbol{\theta})$ is positive definite, and therefore the profile log-likelihood $\ell_P$ is well-defined.
\begin{lemma}
    For $\btheta \in \mathbb{R}_{\ge 0}^s$ with at least one $\theta_j >0$, the matrix $\Sigma(\btheta)$ defined in \eqref{eq:matrix_finite_mixture} is positive definite. Consequently, the inverse matrix $\Sigma(\btheta)^{-1}$ is well-defined.
\end{lemma}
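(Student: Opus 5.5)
We argue from the decomposition $\Sigma(\btheta)=\sum_{i=1}^s\theta_i\Sigma_G(\tilde\alpha_i)$. The plan is to show that every summand is positive semidefinite, and that each summand with $\theta_i>0$ is in fact positive definite. A nonnegative combination of positive semidefinite matrices that contains at least one positive definite term with positive weight is positive definite. Indeed, for $\mathbf{v}\neq\mathbf{0}$,
\[
\mathbf{v}^\top\Sigma(\btheta)\mathbf{v}=\sum_i\theta_i\,\mathbf{v}^\top\Sigma_G(\tilde\alpha_i)\mathbf{v}\ \ge\ \theta_j\,\mathbf{v}^\top\Sigma_G(\tilde\alpha_j)\mathbf{v}>0 .
\]
Invertibility then follows, since a positive definite matrix has strictly positive eigenvalues.

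The nugget component is immediate: $\Sigma_G(\tilde\alpha_1)=\mathbf{I}_n$ is positive definite. For $i\ge2$, the matrix has entries $\exp(-\|\mathbf{s}_j-\mathbf{s}_k\|_2^2/(s_h\tilde\alpha_i)^2)$. We treat it as the Gram matrix of the Gaussian kernel evaluated at the $n$ locations, which must be distinct since $\mathcal{D}_n$ is a set of locations. Using Bochner's theorem, we write
\[
e^{-\|\mathbf{x}\|^2/a^2}=\int_{\R^d}e^{i\boldsymbol{\omega}^\top\mathbf{x}}\,g_a(\boldsymbol{\omega})\,d\boldsymbol{\omega},
\]
where $g_a$ is a Gaussian density, strictly positive everywhere. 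This gives
\[
\mathbf{v}^\top\Sigma_G\mathbf{v}=\int\Big|\sum_j v_j e^{i\boldsymbol{\omega}^\top\mathbf{s}_j}\Big|^2 g_a(\boldsymbol{\omega})\,d\boldsymbol{\omega}\ \ge\ 0 .
\]

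The main obstacle is strictness of this inequality. Equality would force the trigonometric sum $\sum_j v_je^{i\boldsymbol{\omega}^\top\mathbf{s}_j}$ to vanish for almost every $\boldsymbol{\omega}$, and hence everywhere by continuity, because $g_a>0$. To rule this out, I would show that exponentials $e^{i\boldsymbol{\omega}^\top\mathbf{s}_j}$ with distinct frequency vectors $\mathbf{s}_j$ are linearly independent. First choose a direction $\mathbf{u}$ such that the projections $\mathbf{u}^\top\mathbf{s}_j$ are distinct; this is possible since finitely many hyperplanes cannot cover $\R^d$. Restricting to $\boldsymbol{\omega}=t\mathbf{u}$ reduces the problem to one variable. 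There, linear independence of $e^{it\lambda_j}$ for distinct real $\lambda_j$ follows from a Vandermonde argument: differentiate repeatedly at $t=0$ to obtain the system $\sum_j v_j\lambda_j^k=0$ for $k=0,\dots,n-1$, which forces $\mathbf{v}=\mathbf{0}$.

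This establishes that every $\Sigma_G(\tilde\alpha_i)$ is strictly positive definite when the locations are distinct, which is stronger than needed. If one only wanted to rely on the nugget term, the argument would still require $\theta_1>0$; the strict definiteness of the Gaussian components is what allows the lemma to cover any nonzero $\btheta$.
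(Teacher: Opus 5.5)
Your proof is correct and follows the same route as the paper. Each summand $\Sigma_G(\tilde\alpha_i)$ is positive definite (the identity for the nugget, strict positive definiteness of the Gaussian kernel otherwise), so any nonnegative combination with some $\theta_j>0$ is positive definite. The only difference is that the paper cites strict positive definiteness of the Gaussian kernel as a well-known fact, whereas you prove it directly via its Gaussian Fourier representation and the linear independence of $e^{i\boldsymbol{\omega}^\top\mathbf{s}_j}$ for distinct $\mathbf{s}_j$; this also makes explicit the distinct-locations requirement that the paper leaves implicit.
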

\begin{proof}
Note that when $\tilde{\alpha}=0$, $\Sigma_G(\tilde{\alpha}) = \mathbf{I}_n$. When $\tilde\alpha >0$, each element of $\Sigma_G(\tilde{\alpha})_{jk}$ is formed by a Gaussian kernel $K_{\alpha}(\mathbf{s}_i,\mathbf{s}_j)= \exp(-\|\mathbf{s}_i -\mathbf{s}_j\|_2^2/\alpha^2)$. It is a well-known result that for any $\alpha>0$, $K_\alpha$ is a strictly positive definite kernel (e.g.,~\citealp{hofmann2008kernel}). Consequently, $\Sigma_G(\tilde{\alpha}_i)$ is a positive definite matrix for $i=1,\dots,s$. Since each $\theta_i \ge 0$ and at least one is strictly positive by assumption, it follows that $\Sigma(\btheta) = \sum_{i=1}^s \theta_i\Sigma_G(\tilde{\alpha}_i)$ is positive definite.
\end{proof}

To solve the minimization problem instead, define the negative profile log-likelihood $\mathcal{L}_P(\btheta):= -\ell_P\{\Sigma(\btheta)\}$. Furthermore, we let $\Sigma_i := \Sigma_G(\tilde{\alpha}_i)$ for simplicity of exposition. Lemma~\ref{lem: L_P_deriv} presents the gradient and expected Hessian of $\mathcal{L}_P(\btheta)$. Define 
\begin{align*}
    P_\btheta:=\Sigma(\btheta)^{-1} - H_\btheta, \quad H_\btheta:=\Sigma(\btheta)^{-1}\X(\X^\top \Sigma(\btheta)^{-1}\X)^{-1} \X^\top\Sigma(\btheta)^{-1}.
\end{align*}
\begin{lemma}\label{lem: L_P_deriv}
    The gradient $g(\btheta) := \frac{d}{d \theta}\mathcal{L}_P(\btheta)$ has elements
    \begin{align*}
        g(\btheta)_i =  \frac{1}{2}\{ \operatorname{tr}(\Sigma(\btheta)^{-1}\Sigma_i) - \Y^\top P_\btheta \Sigma_i P_\btheta \Y\}.
    \end{align*}
    The expected Hessian $H(\btheta):= E[ \frac{d^2\mathcal{L}_P(\btheta)}{d\theta d \theta^\top} ] $ has elements
    \begin{align*}
    &H(\btheta)_{ij}  =\operatorname{tr}(  P_\btheta \Sigma_i P_\btheta \Sigma_j) - \frac{1}{2} \operatorname{tr}(\Sigma(\btheta)^{-1}\Sigma_j \Sigma(\btheta)^{-1}\Sigma_i)=2\mathcal{I}_R(\btheta)_{ij} - \mathcal{I}_0(\btheta)_{ij},
\end{align*}
where $\mathcal{I}_R(\btheta)$ and $\mathcal{I}_0(\btheta)$ are defined as
\begin{align*}
    \mathcal{I}_R(\btheta)_{ij}:= \frac{1}{2}\operatorname{tr}(  P_\btheta \Sigma_i P_\btheta \Sigma_j),\quad \mathcal{I}_0(\btheta)_{ij} := \frac{1}{2} \operatorname{tr}(\Sigma(\btheta)^{-1}\Sigma_j \Sigma(\btheta)^{-1}\Sigma_i).
\end{align*}
\end{lemma}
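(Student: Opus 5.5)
We will work with the profile likelihood written in terms of the projection-type matrix $P_\btheta$. The key identity is that $\Y-\X\hat{\bbeta}_{\Sigma} = \Sigma P_\btheta \Y$, with $\Sigma=\Sigma(\btheta)$. This follows by direct substitution of $\hat{\bbeta}_\Sigma=(\X^\top\Sigma^{-1}\X)^{-1}\X^\top\Sigma^{-1}\Y$. Consequently
\[
(\Y-\X\hat{\bbeta}_{\Sigma})^\top \Sigma^{-1}(\Y-\X\hat{\bbeta}_{\Sigma}) = \Y^\top P_\btheta \Sigma P_\btheta \Y = \Y^\top P_\btheta \Y,
\]
where the second equality uses $P_\btheta\Sigma P_\btheta = P_\btheta$. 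This in turn follows from $P_\btheta\Sigma = I - \Sigma^{-1}\X(\X^\top\Sigma^{-1}\X)^{-1}\X^\top$ together with $P_\btheta\X=0$. Thus
\[
\mathcal{L}_P(\btheta)=\frac{n}{2}\log(2\pi)+\frac12\log|\Sigma(\btheta)|+\frac12\Y^\top P_\btheta\Y .
\]
Since $\partial\Sigma(\btheta)/\partial\theta_i=\Sigma_i$ by linearity of \eqref{eq:matrix_finite_mixture}, the gradient is obtained from two facts:
\begin{itemize}
\item $\partial_i\log|\Sigma|=\operatorname{tr}(\Sigma^{-1}\Sigma_i)$ (Jacobi's formula);
\item $\partial_i P_\btheta=-P_\btheta\Sigma_iP_\btheta$.
\end{itemize}
The second fact is the main computational step. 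We verify it by differentiating $\Sigma^{-1}$ (giving $-\Sigma^{-1}\Sigma_i\Sigma^{-1}$) and $(\X^\top\Sigma^{-1}\X)^{-1}$ (giving $(\X^\top\Sigma^{-1}\X)^{-1}\X^\top\Sigma^{-1}\Sigma_i\Sigma^{-1}\X(\X^\top\Sigma^{-1}\X)^{-1}$) inside $P_\btheta$. Collecting the resulting terms, they factor as $-(\Sigma^{-1}-H_\btheta)\Sigma_i(\Sigma^{-1}-H_\btheta)$. This gives the stated $g(\btheta)_i$.

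For the Hessian, we differentiate $g(\btheta)_i$ once more in $\theta_j$, again using $\partial_j\Sigma^{-1}=-\Sigma^{-1}\Sigma_j\Sigma^{-1}$ and $\partial_jP_\btheta=-P_\btheta\Sigma_jP_\btheta$. This yields
\[
\frac{\partial^2\mathcal{L}_P}{\partial\theta_i\partial\theta_j}=-\frac12\operatorname{tr}(\Sigma^{-1}\Sigma_j\Sigma^{-1}\Sigma_i)+\frac12\Y^\top\big(P_\btheta\Sigma_jP_\btheta\Sigma_iP_\btheta+P_\btheta\Sigma_iP_\btheta\Sigma_jP_\btheta\big)\Y .
\]
We then take expectations under the model $\Y\sim N(\X\bbeta,\Sigma(\btheta))$. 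Because $P_\btheta\X=0$, the mean contribution vanishes, and $E[\Y^\top M\Y]=\operatorname{tr}(M\Sigma)$ for $M$ of the form $P_\btheta(\cdot)P_\btheta$. Using $P_\btheta\Sigma P_\btheta=P_\btheta$ and cyclicity of trace, each quadratic term has expectation $\operatorname{tr}(P_\btheta\Sigma_iP_\btheta\Sigma_j)$. This gives
\[
H(\btheta)_{ij}=\operatorname{tr}(P_\btheta\Sigma_iP_\btheta\Sigma_j)-\tfrac12\operatorname{tr}(\Sigma^{-1}\Sigma_j\Sigma^{-1}\Sigma_i).
\]
Rewriting this in terms of the definitions of $\mathcal{I}_R$ and $\mathcal{I}_0$ gives $2\mathcal{I}_R(\btheta)_{ij}-\mathcal{I}_0(\btheta)_{ij}$. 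Symmetry of the final expression in $i,j$ follows from cyclicity of trace.

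The only delicate points are the two algebraic identities $P_\btheta\Sigma P_\btheta=P_\btheta$ and $\partial_iP_\btheta=-P_\btheta\Sigma_iP_\btheta$. Everything else is routine matrix calculus. The preceding lemma guarantees that $\Sigma(\btheta)$ is invertible for any nonzero $\btheta$, so all the derivatives above are well defined.
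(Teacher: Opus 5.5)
Your proposal is correct and takes essentially the same route as the paper. Both reduce $\mathcal{L}_P$ to $\tfrac12\log|\Sigma(\btheta)|+\tfrac12\Y^\top P_\btheta\Y$, apply Jacobi's formula together with $\partial_i P_\btheta=-P_\btheta\Sigma_iP_\btheta$ (which the paper isolates as an auxiliary lemma proved by the same factorization you describe), and obtain the expected Hessian from $P_\btheta\X=0$ and $P_\btheta\Sigma(\btheta)P_\btheta=P_\btheta$. The only cosmetic differences are that you derive the quadratic form from $\Y-\X\hat{\bbeta}_\Sigma=\Sigma P_\btheta\Y$ rather than by expanding $(\mathbf{I}_n-\mathbf{A})^\top\Sigma^{-1}(\mathbf{I}_n-\mathbf{A})$. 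You also take expectations of the two Hessian quadratic forms separately instead of first merging them by transposition.
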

The proof of Lemma~\ref{lem: L_P_deriv} is deferred to Section~\ref{supp:lem:L_P_deriv}.

% Line search algorithm
We propose to minimize $\mathcal{L}_P(\btheta)$ subject to $\btheta \ge 0$ using a second-order descent algorithm. At each iteration $t$, we find the search direction by solving the quadratic subproblem
\begin{align*}
    \tilde{\btheta}_{t+1} = \arg \min_{\btheta; \btheta \ge 0} g(\btheta_t)^\top (\btheta - \btheta_t)+ \frac{1}{2}(\btheta - \btheta_t)^\top M(\btheta_t)(\btheta - \btheta_t)
\end{align*}
where $g(\btheta)$ is the gradient function of $\mathcal{L}_P(\btheta)$ and $M(\btheta)$ is any positive definite approximation to the Hessian. Note that the resulting direction $\mathbf{p}_t = \tilde{\btheta}_{t+1} - \btheta_t$ satisfies $g(\btheta_t)^\top \mathbf{p}_t <0$, and hence is a descent direction regardless of the specific choice of $M$. While $\mathcal{I}_R(\btheta)$ and $\mathcal{I}_0(\btheta)$ are positive semi-definite (PSD) matrices, the expected Hessian $H(\btheta)$, as a difference of two PSD matrices, is not necessarily PSD and therefore cannot be directly used as $M$. We use 
\begin{align*}
    M(\btheta_t) = \mathcal{I}_R(\btheta_t) + \epsilon_t \mathbf{I}_s
\end{align*}
where $\epsilon_t = \epsilon 1[\mathcal{I_R}(\btheta_t)\mbox{ is not PD}]$ for a small $\epsilon>0$. The matrix $\mathcal{I}_R(\btheta)$ is a natural approximation of $H(\btheta)$ since $\mathcal{I}_R(\btheta)-H(\btheta)= \mathcal{I}_0(\btheta)-\mathcal{I}_R(\btheta)$ tends to be small when $p \ll n$. Also, while it is not guaranteed, $\mathcal{I}_R(\btheta_t)$ is typically positive definite in practice when $p\ll n$ and the correction $\epsilon_t \mathbf{I}_s$ is rarely needed.

We use line search methods to determine how far to move along the direction $\mathbf{p}_{t}$. This guarantees that at each iteration, we decrease the function value. We set $\boldsymbol{\theta}_{t+1}=\boldsymbol{\theta}_{t}+\lambda \boldsymbol{p}_t$, where $0<\lambda\leq 1$ is a step size chosen by using a backtracking line search to ensure the Armijo sufficient decrease condition \begin{align*}
    \mathcal{L}_p(\boldsymbol{\theta}_{t}+\lambda \boldsymbol{p}_t)-\mathcal{L}_p(\boldsymbol{\theta}_t)\leq c\lambda g(\boldsymbol{\theta}_t)^\top \boldsymbol{p}_t
\end{align*} is satisfied \citep[p.~37]{NocedalWright1999}. We use the initial step size $\bar{\lambda}=1$ and $\rho=c=(1/2)$ for the line search parameters.

\subsection{Theoretical Properties of the NPML estimator}
\label{sub:properties_NPML}
In this subsection, we establish some theoretical properties of the NPML estimator. Specifically, we show that the NPML covariance estimator exists, and that at least one NPML mixing measure can be chosen to have finite support. We establish results under a general mixture kernel framework, where an isotropic covariance function $C(h)$ admits an integral representation
\begin{equation*}
    C(h)=\int_{A}K(\alpha,h)\,\mu(d\alpha)
\end{equation*}
for a base covariance kernel $K:A\times [0,\infty)\to[0,\infty)$, and $A$ is a set of allowable values of $\alpha$. The results for the Gaussian mixture estimator are obtained as the special case where $K=K_G$.

\begin{assumption}[Assumptions on the base covariance kernel] \label{cond:kernel}
We assume the the base covariance kernel $K$ satisfies the following conditions:
\begin{enumerate}
    \item (Normalization): $K(\alpha,0)=1$ for each $\alpha\in A$.
    \item (Positive definiteness): for each finite $\alpha$ and choice of distinct locations $\mathbf{s}_1,...,\mathbf{s}_n\in\mathbb{R}^d$, the matrix $K(\alpha,\D)$ is positive definite, where $\D$ denotes the pairwise distance matrix of the $n$ locations, and $K(\alpha,\D)$ is the matrix with $i,j$ entry $K(\alpha,\D_{ij})$.
    \item (Continuity): for each $h\in [0,\infty)$, the function $\alpha\mapsto K(\alpha,h)$ is continuous on $A$.
\end{enumerate} 
    
\end{assumption}

The normalization and positive definiteness conditions imply $|K(\alpha,h)|< 1$, for all $h>0 $, for each finite $\alpha$. To see this, consider the matrix $\D\in\mathbb{R}^{2\times 2}$ resulting from two points, $\mathbf{s}_1$ and $\mathbf{s}_2$, such that $\|\mathbf{s}_1-\mathbf{s}_2\|_2=h>0$. Then from positive definiteness, the determinant $|K(\alpha,\D)|=1-K(\alpha,h)^2> 0$. 

Now, for a compact set $A\subset[0,\infty)$ and a given distance matrix $\D\in\mathbb{R}^{n\times n}$, we define the set of mixture covariances 
$$\mathcal{H}_A(K)=\left\{\int_{A}K(\alpha,\D)\,\mu(d\alpha)\in\R^{n\times n}:\,\mu\in\mathcal{M}(A)\right\}.$$ The following lemma states that $\mathcal{H}_A(K)$ is closed. This property is important for showing that the NPMLE of $\Sigma$ over $\mathcal{H}_A(K)$ exists.

\begin{lemma}
Assume $A\subset [0,\infty)$ is compact and the base covariance kernel $K$ satisfies Assumption~\ref{cond:kernel}. Then $\mathcal{H}_A(K)$ is closed. \label{lem:closed}
\end{lemma}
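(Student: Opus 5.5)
Take a sequence $\Sigma_k=\int_A K(\alpha,\D)\,\mu_k(d\alpha)\in\mathcal{H}_A(K)$ converging to some $\Sigma\in\R^{n\times n}$. The goal is a measure $\mu\in\mathcal{M}(A)$ with $\Sigma=\int_A K(\alpha,\D)\,\mu(d\alpha)$. The argument bounds the total masses, extracts a weakly convergent subsequence by compactness of $A$, and passes to the limit entrywise using continuity of $\alpha\mapsto K(\alpha,h)$.

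\emph{Step 1 (mass bound).} By the normalization in Assumption~\ref{cond:kernel}, every diagonal entry satisfies $[\Sigma_k]_{ii}=\int_A K(\alpha,0)\,\mu_k(d\alpha)=\mu_k(A)$. Since $\Sigma_k\to\Sigma$, we get $\mu_k(A)\to\Sigma_{11}$, so $\sup_k\mu_k(A)<\infty$. This is where normalization is essential: it identifies the total mass with an observable matrix entry.

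\emph{Step 2 (compactness).} $A$ is a compact metric space, so by Prokhorov's theorem (equivalently, Banach--Alaoglu applied to $C(A)^*$ via Riesz representation), a uniformly bounded family of finite positive measures on $A$ is relatively compact in the weak topology. Pass to a subsequence with $\mu_k\Rightarrow\mu$ for some finite positive $\mu$ on $A$, meaning $\int f\,d\mu_k\to\int f\,d\mu$ for all $f\in C(A)$. Tightness is automatic by compactness. The case $\Sigma_{11}=0$ is harmless: all $\mu_k(A)\to0$, so $\mu=0$, which is allowed in $\mathcal{M}(A)$ and gives $\Sigma=0$.

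\emph{Step 3 (passing to the limit).} For each pair $(i,j)$, the function $f_{ij}(\alpha)=K(\alpha,\D_{ij})$ is continuous on $A$ by the continuity condition. Hence
\[
[\Sigma_k]_{ij}=\int_A f_{ij}\,d\mu_k\to\int_A f_{ij}\,d\mu .
\]
By uniqueness of limits, $\Sigma_{ij}=\int_A K(\alpha,\D_{ij})\,\mu(d\alpha)$ for every $(i,j)$, so $\Sigma\in\mathcal{H}_A(K)$.

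The main subtlety is at $\alpha=0$ for the Gaussian kernel. There $K_G(0,h)=1\{h=0\}$, and for $h>0$ continuity of $\alpha\mapsto K_G(\alpha,h)$ at $0$ holds because $e^{-(h/\alpha)^2}\to0=K_G(0,h)$. So the assumption is consistent with the Gaussian case, and the proof only uses continuity as stated. The one point to verify is that continuity is needed only at the finitely many values $h=\D_{ij}$, which it is; no uniformity in $h$ is required.
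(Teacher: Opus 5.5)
Your proof is correct and follows the paper's argument. The paper likewise bounds the total masses through the diagonal entries using $K(\alpha,0)=1$, extracts a weakly convergent subsequence via Prokhorov's theorem on the compact set $A$, and passes to the limit entrywise using continuity of $\alpha\mapsto K(\alpha,\D_{ij})$. The only differences are cosmetic: you observe $\mu_k(A)=[\Sigma_k]_{11}\to\Sigma_{11}$ directly instead of using Bolzano--Weierstrass, and you apply compactness to finite measures directly rather than normalizing to probability measures, which removes the paper's separate $m=0$ and $m>0$ cases.
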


\begin{proof}

Let $\Sigma^{(k)}$, $k=1,2,...$, be a sequence in $\mathcal{H}_A(K)$ such that $\Sigma^{(k)}\to S$ in the Frobenius norm for some $S\in\mathbb{R}^{n\times n}$, that is, $\|\Sigma^{(k)}-S\|_F\to 0$. We show $S\in \mathcal{H}_A(K)$. 

For each $k$, let $\mu_k$ denote a representing measure corresponding to $\Sigma^{(k)}$ such that $\Sigma^{(k)} = \int_A K(\alpha,\mathbf{D})d\mu_k(\alpha)$. Since $\Sigma^{(k)}$ converges, there exists $B<\infty$ such that $\|\Sigma^{(k)}\|_F\le B$ for all $k$. Now, $B\geq \|\Sigma^{(k)}\|_F\geq \sqrt{\sum_{i=1}^n (\Sigma^{(k)}_{ii})^{2}}$, for each $k$. By the condition $K(\alpha,0)=1$, for each diagonal entry, 
$\Sigma^{(k)}_{ii}=\int_{A}K(\alpha,0)\,\mu_k(d\alpha)=\int_{A}1\,\mu_k(d\alpha)=\mu_k(A)$. Substituting, we obtain $\mu_{k}(A)\leq B/\sqrt{n}$ for each $k$. Since $k$ was arbitrary, this shows that $\{\mu_k(A)\}$ is uniformly bounded. Then, by Bolzano-Weierstrass Theorem, there exists a subsequence $\mu_{k_j}$ such that $\mu_{k_j}(A) \to m$ for some $m\ge 0$. 

% when m=0
First suppose $m=0$. For each pair $(a,b)$, continuity of $\alpha \to K(\alpha, D_{ab})$ on compact $A$ implies that this function is bounded. Therefore, 
\begin{equation*}
|\int_A K(\alpha, D_{ab})d\mu_{k_j}(\alpha) | \le \sup_{\alpha \in A} |K(\alpha,D_{ab})| \mu_{k_j}(A) \to_j 0.
\end{equation*}
Thus, $\Sigma^{(k_j)}\to \mathbf{0}_{n\times n}$ elementwise, and thus in Frobenius norm. Since $\Sigma^{(k)}\to_k S$ by assumption, $S = \mathbf{0}_{n\times n}$. The zero matrix $\mathbf{0}_{n\times n}$ belongs to $\mathcal{H}_A(K)$ since it is represented by the null measure.

% when m>0
If $m>0$, for each $k$, define a probability measure $\nu_k := \mu_k/\mu_k(A)$. Since $A$ is compact, the sequence $\{\nu_{k_j}\}$ is tight. By Prohorov's theorem (ref. Theorem 5.1 in~\cite{billingsley2013convergence}), there exists a further subsequence $\{\nu_{k_{j_l}}\}$ and a probability measure $\nu$ such that $\nu_{k_{j_l}} \Rightarrow_l \nu$. 
For each pair $(a,b)$, since $\alpha \to K(\alpha,D_{ab})$ is continuous and bounded on $A$, we have
\begin{equation*}
    \Sigma^{(k_{j_l})}_{ab} = \int_A K(\alpha, D_{ab}) \mu_{k_{j_l}}(A) d\nu_{k_{j_l}} (\alpha) \to_l \int_A K(\alpha, D_{ab}) m d\nu (\alpha) = \int_A K(\alpha, D_{ab})  d\mu (\alpha)
\end{equation*}
for $\mu:=m\nu$. Therefore, $S = \int_A K(\alpha,\mathbf{D})d\mu(\alpha)$. This shows that $\mathcal{H}_A(K)$ is closed.

\end{proof}

The next proposition shows an existence of NPML covariance estimator $\hat{\Sigma}_{ML}$.
\begin{proposition}\label{prop:existence}
Assume $A = [0,U]$ with $0<U<\infty$, and suppose the base covariance kernel $K$ satisfies Assumption~\ref{cond:kernel}. Let $\mathbf{X} \in \R^{n\times p}$ have full column rank with $n>p$, and assume $\mathbf{Y}\notin \mathcal{C}(\X)$. Then the profile likelihood optimization problem
\begin{equation}\hat{\Sigma}_{ML}^{(K)} \in \argmax_{\Sigma:\; {\Sigma\in \mathcal{H}_A(K)}} \ell_P(\Sigma).\label{eq:npmle_K}
\end{equation}
attains its optimum over $\mathcal{H}_{[0,U]}(K)$. Moreover, there exist constants $0<L_0<M_0<\infty$ such that every representing measure $\hat{\mu}_{ML}^{(K)}$ of $\hat{\Sigma}_{ML}^{(K)}$ satisfies $L_0\le \hat{\mu}_{ML}^{(K)}([0,U]) \le M_0$.
\end{proposition}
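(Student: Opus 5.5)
The plan is to reduce the problem to a one-dimensional scale parameter times a matrix ranging over a compact set of positive definite matrices. Then I would show that the profile likelihood tends to $-\infty$ uniformly as the scale goes to $0$ or $\infty$, and finish with continuity on a compact set.

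\emph{Step 1 (normalized mixtures form a compact set of PD matrices).} Let $\mathcal{H}^1:=\{\int_A K(\alpha,\D)\,\nu(d\alpha):\nu \text{ a probability measure on } A\}$.
\begin{itemize}
\item Compactness: by the same Prohorov argument as in the $m>0$ case of Lemma~\ref{lem:closed}, $\mathcal{H}^1$ is closed. Since $|K(\alpha,h)|\le 1$, its entries are bounded, so $\mathcal{H}^1$ is compact.
\item Positive definiteness: for $\nu$ a probability measure and $v\neq 0$, $v^\top \Sigma v=\int_A v^\top K(\alpha,\D)v\,\nu(d\alpha)>0$, because the integrand is strictly positive for every $\alpha\in A$ by Assumption~\ref{cond:kernel}. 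This uses that the locations are distinct, and that the condition applies at $\alpha=0$ too, as it does for $K_G$. So every element of $\mathcal{H}^1$ is positive definite.
\item Uniform eigenvalue bounds: $\Sigma\mapsto\lambda_{\min}(\Sigma)$ is continuous, so on the compact set $\mathcal{H}^1$ it is bounded below by some $\lambda_0>0$. Also $\lambda_{\max}\le n$, since all entries have modulus at most $1$.
\end{itemize}
Every nonzero $\Sigma\in\mathcal{H}_A(K)$ can be written as $\Sigma=t\Sigma_1$ with $t=\mu(A)>0$ and $\Sigma_1\in\mathcal{H}^1$. Note that $t=\Sigma_{ii}$ for every representing measure, by normalization. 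The zero matrix is excluded, since $\ell_P$ is undefined or $-\infty$ there.

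\emph{Step 2 (coercivity in the scale).} Write $q(\Sigma):=(\Y-\X\hat\bbeta_\Sigma)^\top\Sigma^{-1}(\Y-\X\hat\bbeta_\Sigma)=\min_{\bbeta}(\Y-\X\bbeta)^\top\Sigma^{-1}(\Y-\X\bbeta)$. Then
\[
\ell_P(t\Sigma_1)=-\tfrac n2\log(2\pi)-\tfrac n2\log t-\tfrac12\log|\Sigma_1|-\tfrac{1}{2t}\,q(\Sigma_1).
\]
Let $d:=\min_{\bbeta}\|\Y-\X\bbeta\|_2>0$, which is positive because $\Y\notin\mathcal{C}(\X)$. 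The eigenvalue bounds from Step 1 give $d^2/n\le q(\Sigma_1)\le \|\Y\|_2^2/\lambda_0$ and $n\log\lambda_0\le\log|\Sigma_1|\le n\log n$. Hence
\[
\ell_P(t\Sigma_1)\le c_1-\tfrac n2\log t-\tfrac{d^2}{2nt},
\]
with $c_1$ independent of $\Sigma_1$. The right-hand side tends to $-\infty$ both as $t\to0$ (the $-d^2/(2nt)$ term dominates) and as $t\to\infty$ (the $-\frac n2\log t$ term dominates).

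\emph{Step 3 (choosing the bounds).} Fix any reference $\Sigma^*\in\mathcal{H}_A(K)\setminus\{0\}$, for example $K(U,\D)$, and let $v^*:=\ell_P(\Sigma^*)$. By Step 2, choose $0<L_0<M_0<\infty$ so that $\ell_P(t\Sigma_1)<v^*$ for all $\Sigma_1\in\mathcal{H}^1$ whenever $t<L_0$ or $t>M_0$. The set $\{t\Sigma_1:t\in[L_0,M_0],\ \Sigma_1\in\mathcal{H}^1\}$ is the continuous image of the compact set $[L_0,M_0]\times\mathcal{H}^1$, so it is compact, and it consists of PD matrices. On PD matrices, $\ell_P$ is continuous: $\Sigma^{-1}$, $\log|\Sigma|$ and $\hat\bbeta_\Sigma$ are all continuous there. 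Thus $\ell_P$ attains its maximum on this set. That maximum is at least $v^*$, so it is a global maximum over $\mathcal{H}_A(K)$.

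Finally, take any maximizer and any representing measure $\hat\mu$ of it. Its mass $\hat\mu([0,U])$ equals the diagonal entry $t$ of the maximizer. If $t$ lay outside $[L_0,M_0]$, the value would be strictly below $v^*$, contradicting maximality, so $L_0\le\hat\mu([0,U])\le M_0$.

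\emph{Main obstacle.} The delicate point is Step 1, specifically the uniform lower bound $\lambda_0$ on $\lambda_{\min}$ over normalized mixtures. Without it, $\log|\Sigma|$ could drift to $-\infty$ at fixed mass and break the coercivity in Step 2. It needs both the compactness of $\mathcal{H}^1$ (from the closedness argument) and strict positive definiteness of $K(\alpha,\D)$ for every $\alpha\in A$, including the endpoint $\alpha=0$.
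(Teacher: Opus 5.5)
Your proof is correct and follows the same route as the paper. Both show that $\ell_P$ is coercive in the total mass $|\mu|=\Sigma_{ii}$, using a uniform positive lower bound on $\lambda_{\min}$ per unit mass, the bound $\lambda_{\max}\le n|\mu|$, and $\Y\notin\mathcal{C}(\X)$. Both then conclude by compactness of the mass-restricted mixture class, obtained through the Prohorov/closedness argument, and continuity of $\ell_P$.

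The differences are cosmetic. You get $\lambda_0$ by minimizing $\lambda_{\min}$ over the compact normalized class, whereas the paper minimizes it over $\alpha\mapsto K(\alpha,\D)$; the two constants actually coincide. You also bound the quadratic form below by the OLS residual norm $d^2/n$, where the paper uses a test vector $v\perp\mathcal{C}(\X)$ with Cauchy--Schwarz.
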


\begin{proof}
We first show that the likelihood maximization can be restricted to representing measures whose total mass is bounded away from both zero and infinity. Let $|\mu|=\mu([0,U])$ be the total mass of $\mu$. We show that there exist constants $0<L_0<M_0<\infty$ such that if $|\mu|<L_0$ or $|\mu|>M_0$, then $\ell_P(\Sigma(\mu))<\ell_P(\Sigma(\mu_0))$ for some fixed non-trivial measure $\mu_0 \in \mathcal{M}_A$; hence, no maximizer can be represented by a measure with total mass smaller than $L_0$ or larger than $M_0$.

Define \begin{align*}
    P_{\mu}:=\Sigma(\mu)^{-1} - H_\mu, \quad H_\mu:=\Sigma(\mu)^{-1}\X(\X^\top \Sigma(\mu)^{-1}\X)^{-1} \X^\top\Sigma(\mu)^{-1}.
\end{align*} 
and 
\begin{align*}
\tilde \ell_P(\mu):=\ell_P(\Sigma(\mu))=-\frac{n}{2}\log(2\pi)-\frac{1}{2}\log |\Sigma(\mu)|-\frac{1}{2}\Y^\top P_{\mu}\Y.\end{align*}
Let $\mu_0$ be an arbitrary nontrivial measure on $A=[0,U]$. 

Observe that $\lambda_\star:=\inf_{\alpha\in [0,U]} \lambda_{\min}(K(\alpha,\D))>0$. This follows from continuity of $K(\alpha,\D)$ in $\alpha$ on the compact set $A$, so that $\inf_{\alpha \in [0,U]} \lambda_{\min}(K(\alpha,\D))=\lambda_{\min}(K(\alpha^\star,\D))$ for some $\alpha^\star\in[0,U]$, combined with the positive definiteness of $K(\alpha,\D)$ for each finite $\alpha$. Similarly, $\lambda^\star:=\sup_{\alpha\in [0,U]} \lambda_{\max}(K(\alpha,\D))\leq n<\infty$, where we use $K(\alpha,h)\leq 1$, for all finite $\alpha$, $h$, combined with the bound $|\lambda_{\max}(K(\alpha,\D))|\leq \sum_{i=1}^n |K(\alpha,\D_{ij})|\leq n$.

 Then \begin{align}\label{prop:existence:lmax}
\lambda_{\max}(\Sigma(\mu))=\lambda_{\max}\left(\int K(\alpha,\D)\mu(d\alpha)\right)= \sup_{v:\|v\|_2=1}\int v^\top K(\alpha,\D) v d\mu(\alpha)\leq \lambda^\star|\mu|.
\end{align} Similarly, $\lambda_{\min}(\Sigma(\mu))\geq \lambda_\star|\mu|$.

We observe \begin{align*}
\Y^\top P_{\mu}\Y&=\underset{{\bbeta}\in\mathbb{R}^{p}}{{\min}}\;(\Y-\X\bbeta)^\top \Sigma(\mu)^{-1}(\Y-\X\bbeta).
\end{align*} Fix a vector $v\in \mathbb{R}^{n}$ such that $v^\top \X=0$, $v^\top \Y\neq 0$, $\|v\|=1$. By assumption on $\Y$, $\Y\notin \mathcal{C}(\X)$, so such a vector $v$ exists.  Then for any $\bbeta$, \begin{align*}
(v^\top \Y)^2=\{v^\top(\Y-\X\bbeta)\}^2\leq (v^\top\Sigma(\mu)v)\{(\Y-\X\bbeta)^\top \Sigma(\mu)^{-1}(\Y-\X\bbeta)\}
\end{align*} and so \begin{align*}
\frac{(v^\top \Y)^2}{v^\top\Sigma(\mu)v}\leq  \;(\Y-\X\bbeta)^\top \Sigma(\mu)^{-1}(\Y-\X\bbeta)
\end{align*} for any $\bbeta$. 

In particular, $(v^\top \Y)^2/v^\top\Sigma(\mu)v \le \Y^\top P_\mu \Y$, and it follows that
\begin{align*}
-\tilde \ell_{P}(\mu)&= \frac{n}{2}\log(2\pi)+\frac{1}{2}\log |\Sigma(\mu)|+\frac{1}{2}\Y^\top P_{\mu}\Y\\
&\geq \frac{n}{2}\log(2\pi)+\frac{1}{2}\log |\Sigma(\mu)|+\frac{1}{2}\frac{(v^\top \Y)^2}{v^\top\Sigma(\mu)v}\\
%&\geq \frac{n}{2}\log(2\pi)+\frac{n}{2}\log \lambda_{\min}(\Sigma(\mu))+\frac{1}{2}\frac{(v^\top \Y)^2}{v^\top\Sigma(\mu)v}\\
&\geq \frac{n}{2}\log(2\pi)+\frac{n}{2}\log \lambda_{\min}(\Sigma(\mu))+\frac{1}{2}\frac{(v^\top \Y)^2}{\lambda_{\max}(\Sigma(\mu))}\\
&\geq \frac{n}{2}\log(2\pi)+\frac{n}{2}\log\{|\mu|\lambda_{\star}\}+\frac{1}{2}\frac{(v^\top \Y)^2}{|\mu|\lambda^{\star}}.
\end{align*} 
Now, consider a function $f(t)=\frac{n}{2}\log\{t\lambda_{\star}\}+\frac{1}{2}\frac{(v^\top \Y)^2}{t\lambda^{\star}}$ for $t>0$. We have $\lim_{t \to 0^+} f(t) = \infty$ and $\lim_{t \to \infty} f(t) = \infty$. Thus, there exist $0<L_0,M_0<\infty$ such that $\tilde \ell_P(\mu)<\tilde \ell_P(\mu_0)$ whenever $|\mu|<L_0$ or $|\mu|>M_0$. Thus no maximizer can exist outside the mass-restricted class.

Define the following mass-restricted covariance mixture class as:
$$\bar{\mathcal{H}}_{A,L_0,M_0}(K):=\{\Sigma(\mu)=\int_A K(\alpha,\D)\mu(d\alpha):\mu\in \mathcal{M}_A,\;\,L_0\le|\mu|\le M_0\}.$$ 
It suffices to show that there $\ell_P$ attains its maximum over $\bar{\mathcal{H}}_{A,L_0,M_0}(K)$.

We first note that $\bar{\mathcal{H}}_{A,L_0,M_0}(K)$ is bounded, since for any $\Sigma(\mu) \in \bar{\mathcal{H}}_{A,L_0,M_0}(K)$,
\begin{equation*}
    \|\Sigma(\mu)\|_F \le \sqrt{n} \|\Sigma(\mu)\|_{\rm op} \le \sqrt{n} \lambda^\star M_0 <\infty
\end{equation*}
where we use \eqref{prop:existence:lmax} for the second inequality. 
Also, $\bar{\mathcal{H}}_{A,L_0,M_0}(K)$ is closed. Recall that $\Sigma(\mu)_{ii} = \int_AK(\alpha,0)d\mu(\alpha) =|\mu|$ for any $1\le i\le n$.
We can write $\bar{\mathcal{H}}_{A,L_0,M_0}(K) $ as 
\begin{equation*}
    \bar{\mathcal{H}}_{A,L_0,M_0}(K) = \mathcal{H}_A(K) \cap \{\Sigma:\, L_0\le \Sigma_{11} \le M_0\}.
\end{equation*}
The first set $\mathcal{H}_A(K)$ is closed by Lemma~\ref{lem:closed}. The second set is also closed since $\Sigma \to \Sigma_{11}$ is a continuous map and $[L_0,M_0]$ is a closed interval. Therefore, $\bar{\mathcal{H}}_{A,L_0,M_0}(K)$ is closed.
Since $\bar{\mathcal{H}}_{A,L_0,M_0}(K)$ is closed and bounded in the finite-dimensional space $\R^{n\times n}$, it is compact.

Finally, since $\ell_P(\Sigma)$ is continuous in $\Sigma$ on $\bar{\mathcal{H}}_{A,L_0,M_0}(K)$, $\ell_P$ attains its maximum on $\bar{\mathcal{H}}_{A,L_0,M_0}(K)$. 
\end{proof}

\begin{remark}[Extension to unbounded support]
A full analysis for the NPMLE with unbounded candidate support set is beyond the scope of this work, but we outline some necessary considerations for extending our results to this setting. First, we think a reasonable unrestricted choice of the candidate support is the compactified real line $U=[0,\infty]$. Typically, $K(\alpha,\D)$ will continuously approach a degenerate limit as the range parameter $\alpha\to\infty$. For example, consider $K(\alpha,h)=\exp\{-h^2/(2\alpha^2)\}$. Then $\underset{\alpha\to\infty}{\rm \lim} K(\alpha,\D)=K(\infty,\D)=\mathbf{J}$, where $\mathbf{J}$ is the matrix of all ones. This matrix is rank one, regardless of the number of points $n$, and thus is not full rank for $n>1$. On the other hand, for each finite $\alpha$ and any collection of distinct points $s_1,...,s_n$, the matrix $K(\alpha,\D)$ is strictly positive definite, and the mixture covariance $\int_{[0,\infty]}\,K(\alpha,\D)\,\mu(d\alpha)$ is positive definite so long as $\mu([0,\infty))>0$. In this example, for the set $[0,\infty)$ \textit{not} including infinity, the set of mixture covariances $\mathcal{H}_{[0,\infty)}(K)$ is not closed. In the compactified real line case $U=[0,\infty]$ with $\infty\in U$, however, $\mathcal{H}_U(K)$ can be shown to be closed, so that including the boundary $\infty$ avoids the problem of non-closedness. However, our existence proof for the NPMLE would still require modification, since our proof uses strict lower boundedness of the eigenvalues of $K(\alpha,\D)$ over $\alpha\in[0,U]$, for $U<\infty$. In the unbounded setting, it seems necessary to consider the spectral properties of $K(\alpha,\D)$ in the so called ``flat limit'' as $\alpha\to\infty$, as in \citet{barthelme2021spectral}.

\end{remark}

Next, we show that the maximum likelihood estimator $\hat{\Sigma}_{ML}^{(K)}$ can be written as $$\hat{\Sigma}_{ML}^{(K)}=\int K(\alpha,\D)\hat{\mu}_{ML}^{(K)}(d\alpha),$$ for a discrete measure $\hat{\mu}_{ML}^{(K)}$ with a finite support set. Let $N$ denote the number of unique nonzero values of the interpoint distances $\|\mathbf{s}_i-\mathbf{s}_j\|_2$, $i,j=1,...,n$. Let $0=h_0\leq\dots\leq h_{N-1}$, denote the unique interpoint distance values. We show there exists a discrete representing measure $\hat{\mu}_{ML}^{(K)}$ for $\hat{\Sigma}_{ML}^{(K)}$, supported on $N$ or fewer points:

\begin{proposition}\label{prop:finite_support}
Assume the same conditions as in Proposition~\ref{prop:existence}.

Then there exists a discrete representing measure $\hat{\mu}_{ML}^{(K)}$ for $\hat{\Sigma}_{ML}^{(K)}$ supported on $N$ or fewer points.
\end{proposition}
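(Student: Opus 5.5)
The plan is a Carathéodory-type argument in a finite-dimensional moment space. By Proposition~\ref{prop:existence}, a maximizer $\hat{\Sigma}_{ML}^{(K)}\in\mathcal{H}_A(K)$ exists, with some representing measure $\hat\mu$ of total mass $m:=\hat\mu(A)\in[L_0,M_0]$, so $m>0$. The key observation is that every entry of $\Sigma(\mu)$ depends on the pair $(i,j)$ only through the distance $\D_{ij}$. Hence $\Sigma(\mu)$ is completely determined by the vector of ``moments''
\[
\Big(\int_A K(\alpha,h_k)\,\mu(d\alpha)\Big)_{k=0}^{N-1}\in\R^{N},
\]
where $h_0=0,h_1,\dots,h_{N-1}$ are the distinct interpoint distances. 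This is the reading of $N$ used in the statement, which indexes the distinct values as $h_0,\dots,h_{N-1}$. By the normalization in Assumption~\ref{cond:kernel}, the $k=0$ coordinate equals the total mass $|\mu|$.

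Define the continuous curve $\phi:A\to\R^{N-1}$ by $\phi(\alpha)=(K(\alpha,h_1),\dots,K(\alpha,h_{N-1}))$. Continuity of $\phi$ follows from the continuity part of Assumption~\ref{cond:kernel}. Write $\nu=\hat\mu/m$, which is a probability measure on $A$. Then the off-diagonal moment vector of $\hat\mu$ is $m\int\phi\,d\nu$, i.e.\ $m$ times a point in the closed convex hull of $\phi(A)$.

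Since $A$ is compact and $\phi$ is continuous, $\phi(A)$ is compact. The convex hull of a compact set in $\R^{N-1}$ is compact, hence closed. A standard separating-hyperplane argument then shows that $\int\phi\,d\nu$ lies in $\mathrm{conv}(\phi(A))$ itself: if it did not, some linear functional $\ell$ would satisfy $\ell(\int\phi\,d\nu)>\sup_{\alpha\in A}\ell(\phi(\alpha))$, which contradicts $\int\ell\circ\phi\,d\nu\le\sup_A \ell\circ\phi$.

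Carathéodory's theorem in $\R^{N-1}$ now gives points $\alpha_1,\dots,\alpha_r\in A$ and weights $w_i\ge0$ with $\sum_i w_i=1$, $r\le N$, and $\int\phi\,d\nu=\sum_i w_i\phi(\alpha_i)$. Set $\tilde\mu=m\sum_{i=1}^r w_i\delta_{\alpha_i}$. This measure has total mass $m$, so it matches the $h_0$ moment, and by construction it matches every $h_k$ moment for $k\ge1$. Therefore $\Sigma(\tilde\mu)=\hat{\Sigma}_{ML}^{(K)}$, and $\tilde\mu$ is a representing measure supported on at most $N$ points.

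The main obstacle is the step in the second paragraph: passing from an arbitrary (possibly diffuse) probability measure to membership in the convex hull rather than only its closure. That step needs both compactness of $A$ and continuity of $K$ in $\alpha$, which is exactly why the compact support $[0,U]$ is assumed.

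A smaller bookkeeping point is the count. Fixing the total mass uses the diagonal moment, so only the $N-1$ off-diagonal coordinates enter Carathéodory, and this yields $N$ points. If one counts $N$ as the number of nonzero distances instead, the same argument gives at most $N+1$ points. One could tighten this to $N$ by noting that the moment vector of the maximizer lies on the boundary of the moment cone, but the matching convention for $N$ settles the count directly.
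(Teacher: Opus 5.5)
Your proposal is correct and follows the paper's proof essentially step for step. You normalize a representing measure by its mass $m>0$ from Proposition~\ref{prop:existence}, place the normalized off-diagonal moment vector in the convex hull of the compact curve $\{K(\alpha,\mathbf{h}):\alpha\in[0,U]\}\subset\R^{N-1}$, apply Carath\'eodory, and rescale by $m$, reading $N$ the same way as the paper's indexing $h_0=0,\ldots,h_{N-1}$. Your separating-hyperplane argument usefully supplies the membership step that the paper only asserts.

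One caution on your closing aside: the maximizer's moment vector need not lie on the boundary of the moment cone. Unlike the classical mixture-density likelihood, the Gaussian profile likelihood is not monotone in the moments, and the maximizer can be an interior point of the cone. So that route to saving a point does not work. A point can still be saved by the Fenchel--Bunt refinement of Carath\'eodory, because the curve is connected, being the continuous image of $[0,U]$.
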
 

The number of unique interpoint distances $N$ is a fundamental quantity in discrete geometry. \citet{erdos1946sets} studied this ``distinct distances problem" and showed the upper bound $N=O(n/\sqrt{\log n})$ for an $m\times m$ square lattice with $n=m^2$ points. For an $m\times m$ square lattice arrangement of points, $n=m^2$, and the number of support points in the discrete representing measure in Proposition~\ref{prop:finite_support} is bounded of the order $m^2/\sqrt{\log m}$. For a general arrangement of $n$ points in the plane, \citet{guth2015erdHos} showed that the number of unique distances $N$ is \textit{at least} of the order $n/\log n$.

\begin{proof}
Let $0=h_0<h_1<\dots<h_{N-1}$ denote the distinct interpoint distances appearing in $\D$. For any isotropic covariance function $C$, the resulting covariance matrix can be written as $\Sigma=C(\D)$, where $C(\D)_{ij}=C(\D_{ij})$. Furthermore, given the distance matrix $\D$, the covariance matrix $\Sigma$ resulting from $C$ is completely determined by the vector $[C(h_0),\dots,C(h_{N-1})]$. Since $h_0=0$ and $K(\alpha,0)=1$, 
we write $\mathbf{h} = [h_1,\dots,h_{N-1}]^\top \in \R^{N-1}$, and define $K(\alpha,\mathbf{h}) := [K(\alpha,h_1), K(\alpha,h_1),\dots,K(\alpha,h_{N-1})]^\top \in \R^{N-1}$.

Let $\hat{\mu}_{ML}^{(K)}$ be a representing measure of $\hat{\Sigma}^{(K)}_{ML}$ so that 
\begin{equation*}
    \hat{\Sigma}^{(K)}_{ML} = \int_{A} K(\alpha,\D) d\hat{\mu}_{ML}^{(K)}(\alpha).
\end{equation*}
This covariance matrix $\hat{\Sigma}^{(K)}_{ML}$ is completely determined by the vector $\hat{\mathbf{a}}\in \R^{N}$, defined as
\[\hat{\mathbf{a}}_i = \int K(\alpha, h_i) d\hat{\mu}_{ML}^{(K)}(\alpha),\quad i=0,\dots,N-1.\] 
By Proposition~\ref{prop:existence}, the total mass $m:=\hat{\mu}_{ML}^{(K)}([0,U])$ is positive. Define the probability measure $\hat{\nu} := \hat{\mu}_{ML}^{(K)}/m$. Then
\begin{align*}
    &\frac{1}{m}\hat{a}_0 = \frac{1}{m}\int_A K(\alpha,0) d\hat{\mu}_{ML}^{(K)}(\alpha) = 1,\\
    &\hat{\mathbf{v}}:=\frac{1}{m}(\hat{a}_1,\dots,\hat{a}_{N-1})^\top = \int_A K(\alpha,\mathbf{h}) d\hat{\nu}(\alpha).
\end{align*}

Define $S:= \{K(\alpha,\mathbf{h}): \alpha \in [0,U\}\}\subset\mathbb{R}^{N-1}$. Since $\alpha\to K(\alpha,\mathbf{h})$ is continuous and $[0,U]$ is compact, $S$ is compact. Hence, its convex hull $\operatorname{Conv}(S)$ is compact and therefore closed. It follows that $\hat{\mathbf{v}}\in \operatorname{Conv}(S)$.  By Carath\'{e}odory's theorem, $\hat{\mathbf{v}}$ can be written as a convex combination of $N$ or fewer points from $S$. That is, there exist points $\alpha_1,\dots,\alpha_{N} \in [0,U]$, and weights $w_1,\dots,w_N \ge 0$ with $\sum_{i=1}^N w_i = 1$ such that 
\begin{equation*}
    \hat{\mathbf{v}} = \sum_{i=1}^N K(\alpha_i,\mathbf{h}) w_i
\end{equation*}
Then for the discrete measure $\hat{\mu}_{ML}^{(K)} = \sum_{i=1}^N (m w_i) \delta_{\alpha_i}$,
\begin{equation*}
    \hat{\mathbf{a}} = \sum_{i=1}^N\begin{bmatrix}
        1\\
        K(\alpha_i,\mathbf{h}) 
    \end{bmatrix} (m  w_i) = \int  \begin{bmatrix}
        K(\alpha_i, h_0)\\
        K(\alpha_i, h_1)\\
        \dots\\
        K(\alpha_i,h_{N-1}) 
    \end{bmatrix} d\hat{\mu}_{ML}^{(K)}(\alpha).
\end{equation*}
Therefore, $\hat{\mu}_{ML}^{(K)}$ is a discrete representing measure for $\hat{\Sigma}_{ML}^{(K)}$ supported on at most $N$ points in $[0,U]$.

\end{proof}

\begin{remark}[When the base function $K(\alpha,h)$ is a totally positive kernel] The kernel $K(\alpha,h)$ is said to be totally positive, if the matrix with entries $K(\alpha_i,h_j)$ has all of its minors nonnegative, whenever $\alpha_1<...<\alpha_n$ and $h_1<...<h_{n}$. Fundamental results regarding total positivity are given in \citet{karlin1968total}, and in a statistical context, \citet{lindsay1993uniqueness} used total positivity to establish uniqueness results and improved support size bounds for mixture density models. In \citet{song2024weighted}, a similar approach was used to bound the support set size for a nonparametric mixture covariance function estimation problem in a Markov chain setting. It may be possible to leverage total positivity to establish uniqueness results for the NPMLE mixing measure for mixtures of exponential and Gaussian covariances, since the exponential and Gaussian kernel families $K(\alpha,h)=\exp(-h/\alpha)$ and $K(\alpha,h)=\exp(-h^2/(2\alpha^2))$ are totally positive kernels. This would improve on Proposition~\ref{prop:finite_support}, which only establishes the existence of a particular discrete, finitely supported representing measure for $\hat{\Sigma}^{ML}$. Proposition~\ref{prop:finite_support} does not rule out potential alternative representing measures with more support points, or even alternative measures that are not discrete. 

However, the total positivity approach has at least one limitation, which is that not all interesting base covariance kernels are totally positive. For example, \citet{schoenberg1938metric} showed that any isotropic covariance function valid for all point arrangements in three dimensions can be written as $\int_{[0,\infty]}\frac{\sin (h/\alpha)}{(h/\alpha)}\mu(d\alpha)$. But $K(\alpha,h)=\frac{\sin(h/\alpha)}{h/\alpha}$ is not a totally positive kernel, since $\frac{\sin(h/\alpha)}{h/\alpha}$ takes negative values for certain values of $\alpha$ and $h$, e.g., $\alpha_1=1$ and $h_1=3\pi/2$, which leads to $\det\{K(\alpha_1,h_1)\}=K(\alpha_1,h_1)<0$.
\end{remark}

\section{Simulation Study}
\label{sec:Simulation_Study}

We investigate the approximation error involved in using finite mixture approximations to estimate spatial covariance functions. Furthermore, we perform simulation studies to evaluate the performance of our mixture-based nonparametric spatial covariance estimator by comparing it with most commonly used parametric methods and the nonparametric method proposed by \cite{Choi2013} for estimating spatial covariance functions along with the CAR-INLA model.

\subsection{Approximation Error of Finite Mixture Approximations}
\label{sub_sec:effectiveness_finite_mixture_approximations}
We evaluate the approximation error in using finite mixtures for estimating the spatial covariance functions by considering examples without the noise introduced by simulated data. For demonstrations, we consider a zero-mean Gaussian process on a regular two-dimensional grid of 900 spatial locations that covers the domain $[0,10]\times[0,10]$ using a sequence of 30 equally spaced values along each dimension. We use $S=\mathbf{Y} \mathbf{Y} ^\top=\Sigma$ where $\Sigma\in \mathbb{R}^{n\times n}$ is the true covariance matrix corresponding to $n$ observed locations to compute the log-likelihood and its gradients involved in obtaining the NPMLEs of the weights of the mixing measure, as explained in Section~\ref{sec:Methods}. For $\Sigma$, we consider the popular Mat\'{e}rn covariance function with $\nu=0.5$ and $1.5$. The Mat\'{e}rn function follows the form mentioned in Section~\ref{sub_sec:simulation_settings} and we set $\sigma=1$ for simplicity and $\theta=2$ such that the effective range of the correlation function is at least half of the maximum lag distance.

Consequently, we estimate the spatial covariance function using the proposed mixture-based nonparametric method using NPML estimation. We evaluate the approximation error of the  method under different grid sizes $s=5,10,50$ and $100$ by comparing the estimated covariances with the true covariances. As shown in Figures~\ref{fig:approximation_error_0.5} and~\ref{fig:approximation_error_1.5} corresponding to $\Sigma$ with $\nu=0.5$ and $1.5$ respectively, the approximation by the mixture-based nonparametric method using NPML estimation (\textit{FGM ML}) is reasonable except for very coarse grids and shows little variation for moderately large values of the grid size.

\begin{figure}[htbp!]
  \centering
  \subfigure[$\nu=0.5$]{%
    \includegraphics[width=0.48\textwidth]{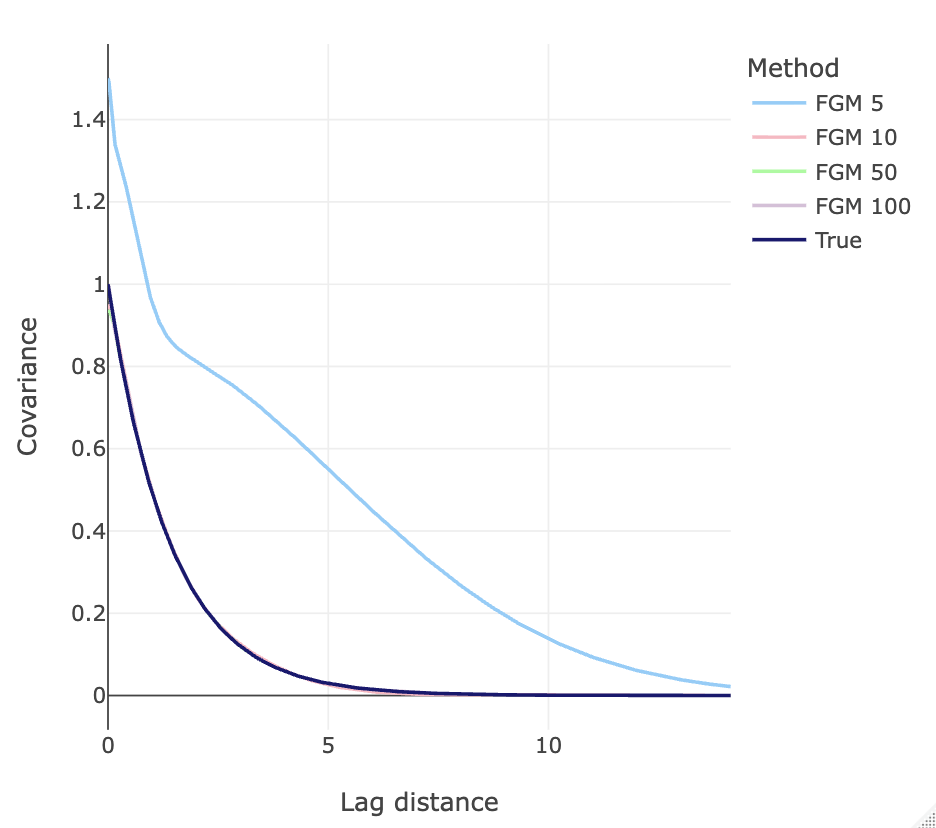}%
    \label{fig:approximation_error_0.5}%
  }
  \hfill
  \subfigure[$\nu=1.5$]{%
    \includegraphics[width=0.48\textwidth]{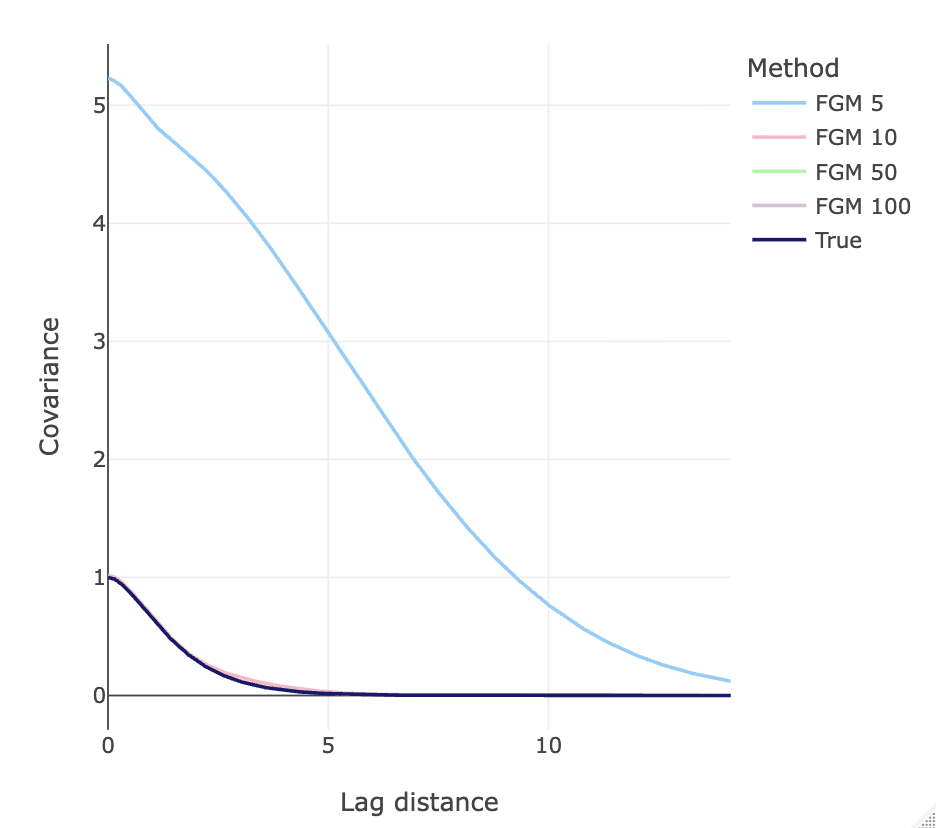}%
    \label{fig:approximation_error_1.5}%
  }
  \caption{Covariances estimated using mixture-based nonparametric method  using NPML estimation, \textit{FGM ML} when grid size, $s=5,10,50$ and $100$ relative to true covariance using Mat\'{e}rn with: 
    (a) $\nu=0.5$ and 
    (b) $\nu=1.5$.}
  \label{fig:approximation_error}
\end{figure}

\subsection{Empirical Comparison with Competing methods}
\paragraph*{Simulation Settings}
\label{sub_sec:simulation_settings}

We consider data generated on a regular two-dimensional grid of 400 spatial locations covering the domain $[0,10]\times[0,10]$ using a sequence of 20 equally spaced values along each dimension. We assume that the data-generating process follows a zero-mean Gaussian process. Given that $Y(\mathbf{s})$ is the Gaussian random field of interest, $Z(\mathbf{s})$ is a weakly stationary process with an isotropic covariance function $C_Z$, $C_Z(\mathbf{s}_1,\mathbf{s}_2):=\text{Cov}(Z(\mathbf{s}_1),Z(\mathbf{s}_2))$ such that $E[Z(\mathbf{s})]=0$ $\forall \mathbf{s}\in\mathcal{D}$ and that $\epsilon(\mathbf{s})$ is an error process independent of $Z(\mathbf{s})$ such that $E[\epsilon(\mathbf{s})]=0$, $\text{Var}(\epsilon(\mathbf{s}))=\sigma_{\epsilon}^2$ $\forall\mathbf{s}\in \mathcal{D}$ and $\text{Cov}(\epsilon(\mathbf{s}_1),\epsilon(\mathbf{s}_2))=0$, $\forall \mathbf{s}_1\neq \mathbf{s}_2$, the simulation model is given by:
\begin{equation*}
Y(\mathbf{s})=Z(\mathbf{s})+\epsilon(\mathbf{s})
\end{equation*}
The covariance function \(C_Y\) of \(Y(\mathbf{s})\) is,
\begin{equation*}
C_Y(\mathbf{s}_1, \mathbf{s}_2):=\text{Cov}(Y(\mathbf{s}_1),Y(\mathbf{s}_2))=C_Z(\mathbf{s}_1, \mathbf{s}_2)+\sigma_e^21[\mathbf{s}_1=\mathbf{s}_2].
\end{equation*}

To assess the performance of our nonparametric method, we consider data generated under varying configurations of nugget effects and different parametric covariance models. Consequently, we set the error standard deviation, i.e., $\sigma_e$ to 0, 1 and 2 to investigate the cases with no nugget effect and with varying magnitudes of nugget effects. Furthermore, we consider different parametric covariance structures, $C_Z$ of $Z(\mathbf{s})$, namely, the exponential, Mat\'{e}rn, spherical and linear Mat\'{e}rn models. Parametric models such as exponential, Mat\'{e}rn and spherical models are selected because they represent covariance structures that are most commonly observed in data. Moreover, the linear Mat\'{e}rn model is used to examine the performance of the proposed method under complex covariance structures. We consider the following covariance models for $Z(\mathbf{s})$ in the data generation process. Let $C:\mathbb{R}_+\to\mathbb{R}_+$ be a covariance function.\\

\begin{itemize}
\item Exponential model:
\begin{equation*}
C(h;\sigma^2,\theta)=\sigma^2\exp(-\frac{h}{\theta})
\end{equation*}

\item Mat\'{e}rn model:
\begin{equation*}
C(h;\sigma^2,\theta,\nu)=\sigma^2\frac{1}{\Gamma(\nu)2^{\nu-1}}(\frac{2\sqrt{\nu}h}{\theta})^\nu \kappa _{\nu}(\frac{2\sqrt{\nu}h}{\theta})
\end{equation*}

\item Spherical model:
\begin{equation*}
C(h;\sigma^2,\theta) = 
\begin{cases}
\sigma^2\{1-\frac{3h}{2\theta}+\frac{1}{2}(\frac{h}{\theta})^3\} & \text{if } h \leq \theta \\
0 & \text{if } h > \theta
\end{cases}
\end{equation*}

\item Linear Mat\'{e}rn model:
\begin{equation*}
C(h;\sigma^2,\theta_1,\theta_2,\nu)=\frac{1}{2}\sigma^2\frac{1}{\Gamma(\nu)2^{\nu-1}}
((\frac{2\sqrt{\nu}h}{\theta_1})^\nu \kappa _{\nu}(\frac{2\sqrt{\nu}h}{\theta_1})+
(\frac{2\sqrt{\nu}h}{\theta_2})^\nu \kappa _{\nu}(\frac{2\sqrt{\nu}h}{\theta_2}))
\end{equation*}
\end{itemize}

In these models, $h$ is the lag distance, $\sigma$ is the variance parameter, $\theta, \theta_1$ and $\theta_2$ are the range parameters, and $\nu$ is the smoothness parameter. $\nu$ determines the smoothness of the process with larger values implying greater smoothness. The exponential and Gaussian covariance functions are special cases of the Mat\'ern covariance function with $\nu$ set to $0.5$ and $\infty$, respectively. Note that $h\in [0, 14.14214]$ as a result of the configurations used to set up the grid. We set the variance parameter, i.e., $\sigma$ to 1 for simplicity, without loss of generality. The values of the range and smoothness parameters in the models are chosen such that the effective range of the correlation function is at least half of the maximum lag distance. Consequently, $\theta$ is set to 2, 2 and 5 in the exponential, Mat\'{e}rn and spherical models, respectively. $\theta_1$ and $\theta_2$ are set to 1 and 2 in the linear Mat\'{e}rn model. $\nu$ is set to 1.5 in the Mat\'{e}rn and linear Mat\'{e}rn models. Figure~\ref{fig:simulation_covariance} in Appendix~\ref{sub:appendix_simulation} demonstrates the above covariance functions used in simulating data. For each nugget effect - covariance model combination, we generate 100 independent realizations from the Gaussian random field, where each realization is a vector of length 400, corresponding to the observations at 400 distinct spatial locations in the two-dimensional grid.

\paragraph*{Comparison Methods}
\label{sub_sec:simulation_estmiating_covariance}

We assess the performance of the proposed nonparametric method by comparing it with the true covariance model with zero-mean (\textit{True}), parametric covariance-based methods, another nonparametric covariance-based method, and the CAR-INLA model. Consequently, we use four parametric methods that are the most frequently used, the nonparametric method proposed by \cite{Choi2013} and our mixture-based nonparametric method to estimate the covariance function of each realization from the Gaussian random field and make spatial predictions. The parameters in all parametric and nonparametric covariance-based methods are estimated using both the WLS and ML estimation methods for each simulated realization. Furthermore, we apply the CAR-INLA model to the simulated data to generate spatial predictions using neighborhood structures.

Although the simulated process has a true mean of zero, we fit all models under an intercept-only mean structure, with the intercept estimated from the data along with other parameters since we do not have knowledge of the true mean structure in reality. When the parameters of the covariance functions are estimated using WLS, a linear regression model is fitted to the simulated data to estimate the intercept as the first step. The spatial covariance parameters are then estimated using WLS at the residual level, assuming that the residuals of the fitted linear regression model are a realization from a stationary isotropic process. A similar approach is followed for the nonparametric method proposed by \cite{Choi2013} under ML estimation. The intercept is estimated using the ordinary least squares (OLS) method, and then the parameters of the nonparametric covariance estimator are estimated using ML estimation using the residuals of the linear regression model. Under parametric covariance-based methods and our mixture-based nonparametric method, we estimate the intercept jointly with the spatial covariance parameters using ML estimation. In the CAR-INLA model, the intercept and spatial random effects are estimated jointly in a Bayesian CAR model using INLA. The estimated intercept is obtained using the posterior marginal distributions of the coefficient.
%likfit gives GLS estimates for betas under ML estimation

The first three parametric models follow the forms mentioned earlier in Section~\ref{sub_sec:simulation_settings} and the Gaussian model is defined as follows, with notation consistent with the first three models.\\ 

\begin{itemize}
    \item Exponential model: Follows the form mentioned earlier in Section~\ref{sub_sec:simulation_settings}.
    \item Mat\'{e}rn model: Follows the form mentioned earlier in Section~\ref{sub_sec:simulation_settings}.
    \item Spherical model: Follows the form mentioned earlier in Section~\ref{sub_sec:simulation_settings}.
    \item Gaussian model:
\begin{equation*}
C(h;\sigma^2,\theta)=\sigma^2\exp(-\frac{h^2}{\theta^2})
\end{equation*}
\end{itemize}

In estimating the nugget effect and the covariance function with respect to parametric models, we estimate the error variance as well as the variance, range, and smoothness parameters using both the WLS and ML estimation methods for each simulated realization. We obtain parameter estimates using the WLS and ML methods using the \textit{variofit} and \textit{likfit} functions from the R package \textit{geoR} \citep{Diggle1998}, respectively. \textit{variofit} function estimates the parameters of the theoretical variogram by minimizing the weighted sum of squared differences between the empirical and theoretical variograms. We set the weights to be the number of pairs of observations contained in each bin. Note that the estimation of the Mat\'{e}rn model requires the estimation of four parameters due to the additional smoothness parameter. Thus, we fit the Mat\'{e}rn model under two scenarios: first, all four parameters are estimated simultaneously (\textit{Mat\'{e}rn}); second, $\nu$ is fixed at $1.5$ while the remaining three parameters are estimated (\textit{Mat\'{e}rn 1.5}). Consequently, we apply parametric covariance-based methods, namely \textit{exponential WLS, exponential ML, Mat\'{e}rn WLS, Mat\'{e}rn ML, Mat\'{e}rn 1.5 WLS, Mat\'{e}rn 1.5 ML, spherical WLS, spherical ML, Gaussian WLS,} and \textit{Gaussian ML}.

The nonparametric method proposed by \cite{Choi2013} leverages the idea that a valid covariance function can be constructed using completely monotone functions. They use the representation of completely monotone functions where the finite measure is a non-decreasing function and estimate the finite measure using a linear combination of B-splines over $[0,1]$ where the coefficients of B-splines are constrained to be non-negative. They define completely monotone functions by integrating the product of the support of the measure raised to the squared lag distance and the obtained B-spline basis functions over the support of the measure. The defined completely monotone functions serve as basis functions in deriving the estimator of the covariance function. Consequently, they approximate the completely monotone covariance function using a linear combination of completely monotone functions with non-negative coefficients. They utilize the Cox–de Boor recursion formula of B-splines to obtain an explicit form of these completely monotone functions that avoids the computation of the integral in them. Given that $p$ is the order, $m$ is the number of knots for the B-spline bases, $\{\beta_j,j=1,...,m+p\}$ are the non-negative coefficients to be estimated, and $\{f_j^{[q]},j=1,...,m+p\}$ are a set of completely monotone functions, the nonparametric method proposed by \cite{Choi2013} takes the following form. \
\begin{equation*}
\hat{C}(h)=\sum_{j=1}^{m+p}\beta_jf_j^{[q]}(h^2)
\end{equation*}

In our nonparametric method, we use the number of support points $s=100$, while in the nonparametric method proposed by \cite{Choi2013}, we set order $p=3$ and the number of knots $m=5$ as suggested by the authors of the paper. \cite{Choi2013} presents the performance of their method based on the coefficients estimated using the WLS estimation method only, arguing that the ML estimation method can suffer from numerical problems and is highly sensitive to the initial parameter values. In this paper, we evaluate the performance of their method based on both WLS and ML estimation methods (\textit{Bspl WLS} and \textit{Bspl ML}) to be consistent with the parametric methods and our mixture-based nonparametric method employed using WLS and NPML estimation (\textit{FGM WLS} and \textit{FGM ML}) as explained in Section~\ref{sec:Methods}. Additionally, \cite{Choi2013} slightly modified their method under the WLS estimation approach during data application to account for the observed nugget effect in the data. Given that $\hat{C}$ is the estimated covariance using their nonparametric model and $\hat{C}_E$ is the empirical covariance estimator, the estimated nugget effect $\hat{\eta}$ in their modified approach is as follows. Under the modified approach, $\hat{C}$ is computed by obtaining the covariance estimator without using the empirical estimator evaluated at zero. In this paper, we apply the nonparametric method proposed by \cite{Choi2013} along with this modified approach, which will be referred to as \textit{Bspl WLS Nugget}. Consequently, we apply nonparametric covariance-based methods, namely \textit{Bspl WLS, Bspl ML, Bspl WLS Nugget, FGM WLS,} and \textit{FGM ML}.
\begin{equation*}
\hat{\eta} = \hat{C}_E(0)-\hat{C}(0)
\end{equation*}

In the CAR-INLA model, spatial dependence is modeled via a prior using a neighborhood adjacency structure. We use a queen neighborhood structure so that grid cells that share an edge or vertex are considered neighbors. Such a structure assumes that diagonal neighbors are correlated almost as much as horizontal or vertical neighbors, and it is reasonable since we assume that the covariance functions considered in data generation and estimation are isotropic. Thus, it gives a closer approximation to the underlying Gaussian process in simulated data while making the \textit{CAR-INLA} model comparable to covariance-based methods.

\paragraph*{Comparison Metrics}
To evaluate the estimation and predictive performance of the proposed nonparametric method and other methods mentioned above, we use the integrated squared error (ISE) and the mean squared prediction error (MSPE) respectively. Let $C_0=C_Y$ be the true covariance structure that is used to simulate the data, and $\hat{C}$ be the estimated covariance function using a parametric or nonparametric method. ISE, which measures how accurately the methods estimate the covariance function is given by:

\begin{equation*}
\text{ISE}=\int_0^\infty \{C_0(h)-\hat{C}(h)\}^2dh.
\end{equation*}

The integral can be calculated over a finite interval $[0,h_m]$ where $h_m$ is one third of the maximum distance observed, since $C_0(h)\to 0$ as $h\to \infty$. Let $N$ be the number of discretized sub-intervals, $\Delta h=\frac{h_m}{N}$ and $h_i=\frac{h_m}{N}.i$ where $i=1,..,N$. In our simulation studies, we set $N=100$. Consequently, the ISE is approximated using the Riemann sum as follows.

\begin{equation*}
\text{ISE}\approx\sum_{i=1}^{N} \{C_0(h_i)-\hat{C}(h_i)\}^2\Delta h
\end{equation*}

MSPE which measures the predictive performance of the methods is calculated by comparing the predictions made on a test dataset using simple kriging based on the covariances estimated by different covariance-based methods and the CAR-INLA model. We assign 20\% of the observations corresponding to 20\% of the spatial locations on the grid in each simulated dataset to each test dataset. Given that $Y_{(i)}$ is the $i^{\text{th}}$ observed value of the Gaussian random field, $\hat{Y}_{(i)}$ is the predicted value of the $i^{\text{th}}$ observation and $n_{\text{test}}$ is the number of observations in the test set, we calculate the MSPE as follows. Note that only MSPE can be computed for CAR-INLA since it does not estimate a covariance function parameterized by lag distance.

\begin{equation*}
\text{MSPE}=\frac{1}{n_{\text{test}}}\sum_{i=1}^{n_{\text{test}}}\{Y_{(i)}-\hat{Y}_{(i)}\}^2
\end{equation*}

The final predictions for covariance-based methods and CAR-INLA are obtained as follows.

\paragraph*{The Proposed Nonparametric Method and Other Covariance-based Methods}
\label{sub_sec:gaussian_hiv}
Consider the model:
\begin{equation*}
Y(\mathbf{s})=\textbf{X}(\mathbf{s})^\top \boldsymbol{\beta}+Z(\mathbf{s})+\epsilon(\mathbf{s})
\end{equation*}

where $\mathbf{X}(\mathbf{s}) = [1]$ represents the design vector containing only the intercept term in this case and $\boldsymbol{\beta} = [\beta_0]$ is the corresponding coefficient vector. $Z(\mathbf{s})$ is a zero-mean weakly stationary process and $\epsilon(\mathbf{s})$ is an error process independent of $Z(\mathbf{s})$ such that $E[\epsilon(\mathbf{s})]=0$, $\text{Var}(\epsilon(\mathbf{s}))=\sigma_{\epsilon}^2$ $\forall\mathbf{s}\in \mathcal{D}$ and $\text{Cov}(\epsilon(\mathbf{s}_1)_j,\epsilon(\mathbf{s}_2)_{j})=0$, $\forall \mathbf{s}_1\neq \mathbf{s}_2$.
Suppose that $\hat{\bbeta}$'s are the OLS or ML estimates obtained as described above.
Let $r(\mathbf{s})$ be the residual at location $\mathbf{s}$. Then
\begin{equation*}
r(\mathbf{s})= Y(\mathbf{s})-\textbf{X}(\mathbf{s})^\top \hat{\boldsymbol{\beta}}.
\end{equation*}
Note that
\begin{equation*}
C_Y(\mathbf{s}_1, \mathbf{s}_2):=\text{Cov}(Y(\mathbf{s}_1),Y(\mathbf{s}_2))=C_Z(\mathbf{s}_1, \mathbf{s}_2)+\sigma_e^21[\mathbf{s}_1=\mathbf{s}_2].
\end{equation*}
where $C_Z(\mathbf{s}_1, \mathbf{s}_2):=\text{Cov}(Z(\mathbf{s}_1),Z(\mathbf{s}_2))$. Consequently, $C_Y$ can be estimated using the proposed nonparametric method as described in Section~\ref{sec:Methods}. Similarly, the covariance function of $Y(\mathbf{s})$ can be estimated using the four parametric methods and the nonparametric method proposed by \cite{Choi2013} using WLS and ML estimation. Additional details related to these models are provided at the beginning of this section.

We use simple kriging to make spatial predictions for the random field, $Y(\mathbf{s})$ based on the covariances estimated using the four parametric methods and the two nonparametric methods. Let $\hat{r}(\mathbf{s})$ be the prediction for $r(\mathbf{s})$ at the location $\mathbf{s}$ and $\hat{C}$ be the estimated covariance. Then, using kriging,
\begin{equation}
\hat{r}(\mathbf{s}) =
\begin{pmatrix}
r_{1} \\
\vdots \\
r_{n}
\end{pmatrix}^{\!\top}
\begin{pmatrix}
\hat{C}(h_{11}) & \cdots & \hat{C}(h_{1n}) \\
\vdots & \ddots & \vdots \\
\hat{C}(h_{n1}) & \cdots & \hat{C}(h_{nn})
\end{pmatrix}^{-1}
\begin{pmatrix}
\hat{C}(h_1) \\
\vdots \\
\hat{C}(h_n)
\end{pmatrix}
\label{eq:kriging_prediction}
\end{equation}
where $\mathbf{s}_1,...,\mathbf{s}_n$ are the locations in the training set, $h_{ij} = \|\mathbf{s}_i - \mathbf{s}_j\|_2$ and $h_{i} = \|\mathbf{s} - \mathbf{s}_i\|_2$ for $i,j=1,\dots,n$.

The final prediction for $Y(\mathbf{s})$ is obtained by mapping the OLS or ML estimate to the spatial prediction at the location $\mathbf{s}$, obtained using kriging based on estimated covariances. Then, the final prediction for $Y(\mathbf{s})$ is given by;
\begin{equation*}
\hat{Y}(\mathbf{s})=\textbf{X}(\mathbf{s})^\top \boldsymbol{\hat{\beta}}+\hat{r}(\mathbf{s})
\end{equation*}

Note that the final prediction for the \textit{True} model is simply the spatial prediction from kriging using the true covariance $C_0$ instead of $\hat{C}$ since we simulate using a zero-mean Gaussian process.

\paragraph*{CAR-INLA Model}
\label{sub_sec:car_inla_hiv}

Let $\epsilon'(\mathbf{s})$ be the error term associated with the response observed at the location $\mathbf{s}$. Assume that $\epsilon'(\mathbf{s})\sim \mathcal{N}(0,1)$. Consider the model:
\begin{equation*}
Y(\mathbf{s})=\mu(\mathbf{s})+\epsilon'(\mathbf{s})
\end{equation*}
\begin{equation*}
\mu(\mathbf{s})=\textbf{X}(\mathbf{s})^\top\boldsymbol{\beta'}+\psi(\mathbf{s})
\end{equation*}
where $\mathbf{X}(\mathbf{s}) = [1]$ is the design vector that contains only the intercept term in this case, $\boldsymbol{\beta} = [\beta_0]$ is the vector of fixed effect coefficients to be estimated and $\psi(\mathbf{s})$ is the spatial random effect for location $\mathbf{s}$. Let $n'_{\mathbf{s}}$ be the number of neighbors of node $\mathbf{s}$ and $\tau$ be the precision parameter. Then

\begin{equation*}
\psi(\mathbf{s}) \mid \psi(\mathbf{k}),\, \mathbf{s} \ne \mathbf{k},\, \tau \sim \mathcal{N}(\frac{1}{n'_{\mathbf{s}}}\sum_{\mathbf{s}\sim \mathbf{k}}\psi(\mathbf{k}),\,\frac{1}{n'_{\mathbf{s}}\tau})
\end{equation*}
Here, $\mathbf{s} \sim \mathbf{k}$ indicates that the two nodes $\mathbf{s}$ and $\mathbf{k}$ are neighbors. Define $\theta_1=\log \tau$ so that the prior can be defined on $\theta_1$. We fit the CAR-INLA model assuming that the response follows a Gaussian distribution. We consider the Besag intrinsic conditional autoregressive (ICAR) model for spatial effects and define the prior on $\theta_1$ as $\theta_1=\log(\tau) \sim \text{Log-Gamma}(1,\, 0.5\times 10^{-4})$.

\paragraph*{Results}
Tables~\ref{tab:ISE_ML_s1} and~\ref{tab:MSPE_ML_s1} summarize our results in terms of the ISE and MSPE values, respectively, calculated with respect to the CAR-INLA model and different covariance-based methods using ML estimation in the data simulated under $\sigma_e = 1$. The results corresponding to the ISE and MSPE values for the data simulated under $\sigma_e = 0$ and $2$ are given in Tables~\ref{tab:ISE_ML} and~\ref{tab:MSPE_ML} in Appendix~\ref{sub:appendix_simulation}, respectively. Furthermore, additional simulation results with covariance-based methods estimated using WLS and REML methods are given in Tables~\ref{tab:ISE_WLS}-~\ref{tab:MSPE_REML} in Appendix~\ref{sub:appendix_simulation}. The method names and abbreviations appearing in the tables follow the definitions introduced earlier in this section.

First, note that the means of ISEs are relatively higher when covariances are estimated using the WLS method, particularly in parametric methods such as exponential, Mat\'{e}rn and Gaussian models. We mainly rely on results based on ML estimation for subsequent comparisons of ISEs. Additionally, ISE values based on ML estimates appear to be better than those based on REML estimates. According to the tables, parametric methods based on ML estimates generally yield the lowest mean ISEs across all simulation models and nugget effect configurations, however, they often result in considerably high failure percentages. For example, as shown in Table~\ref{tab:ISE_ML_s1}, when data are generated under $\sigma_e = 1$ (i.e., in the presence of a nugget effect), the exponential, Mat\'{e}rn or Gaussian models based on ML estimates yield the lowest mean ISEs in all simulation models. Nevertheless, the \textit{Mat\'{e}rn ML} and \textit{Gaussian ML} models result in failure percentages of 15-24\% and 14\%, respectively, in  this scenario. Furthermore, parametric models appear to result in significantly higher mean ISEs and standard errors when misspecified. For instance, the exponential model based on ML estimates leads to relatively lower mean ISEs in many cases. However, it results in a relatively higher mean ISE and a standard error when data are generated using the Mat\'{e}rn model without a nugget effect. It must be further noted that the spherical model based on ML estimates results in relatively higher mean ISEs, both when correctly and incorrectly specified under different nugget effect settings. Meanwhile, our mixture-based nonparametric method based on NPML estimates, \textit{FGM ML} appears to produce favorable results in terms of estimation error with no failures across varying simulation settings. Our method seems to perform better than the other nonparametric method using ML estimation, especially when there is a nugget effect.

We further evaluate the prediction error of different methods using MSPE values.
%In general, the estimation and prediction performance of all models decrease as the nugget effect increases.
It should be noted that the minor differences in the failure percentages corresponding to the calculation of the ISE and MSPE values arise from the training-test data split used in the calculation of the MSPE. In general, ML estimation results in lower prediction errors compared to WLS estimation, across different estimation methods, except in a few cases, particularly when the \textit{Bspl WLS Nugget} method is used in the presence of a nugget effect as opposed to \textit{Bspl ML}. However, \textit{Bspl WLS} often report lower performance than its ML counterpart, \textit{Bspl ML}. Furthermore, ML estimates often produce similar or slightly better results than REML estimates in terms of mean MSPEs. Among models based on ML estimates, the Mat\'{e}rn model generally yields the lowest mean MSPEs after the true covariance matrix. Nevertheless, as noted earlier, the Mat\'{e}rn model appears to fail in certain cases when all four parameters are estimated using the ML method, and in some cases the failure percentages are as high as 52\%. The failure percentages of the Mat\'{e}rn model using ML estimation tend to increase as the magnitude of the nugget effect increases. When $\sigma_e=2$ in the simulated data, the model fails in approximately 42\% of cases or more, irrespective of the simulation model being used. While fixing $\nu$ at 1.5 and estimating the remaining three parameters using the ML method appears to avoid failures in the estimation of the Mat\'{e}rn model, it sometimes reduces the performance of the model. For example, \textit{Mat\'{e}rn ML} with unfixed $\nu$ outperforms \textit{Mat\'{e}rn 1.5 ML} across all simulation models under $\sigma_e=2$. In fact, the \textit{Mat\'{e}rn ML} model yields the lowest mean MSPEs after the true covariance matrix when data are simulated with $\sigma_e=2$ across all simulation models except for the case where our \textit{FGM ML} model yields the best performance under data simulated using the spherical model. Consequently, our nonparametric \textit{FGM ML} method seems to outperform the Mat\'{e}rn model with $\nu$ fixed at 1.5 across all simulation models under this nugget effect setting. In general, our \textit{FGM ML} method yields prediction errors as low as the best performing parametric model and the true covariance model under different configurations of the nugget effect and simulation models without any failures. Furthermore, the \textit{FGM ML} method outperforms the \textit{CAR-INLA} model and the other nonparametric method under ML estimation, \textit{Bspl ML} in all simulation settings, especially when there is no nugget effect and when there is a relatively stronger nugget effect (i.e., $\sigma_e = 2$) while the \textit{CAR-INLA} model outperforms the parametric \textit{spherical ML} model. Additionally, the \textit{CAR-INLA} model outperforms the other nonparametric method under ML estimation, \textit{Bspl ML}. Consequently, the proposed mixture-based nonparametric method based on NPML estimates, \textit{FGM ML} seems to perform well in terms of mean MSPEs and their standard errors with no failures, across all simulation models, and in cases with and without nugget effects, and as the strength of the nugget effect increases.

\begin{table}[ht]
\centering
\caption{Mean of ISEs, its standard error (in parentheses) (unit: $10^{-2}$) and percentage of failures of four parametric methods (exponential, Mat\~ern, spherical and Gaussian) with initial value for $\sigma_e = 10^{-1}$ and two nonparametric methods (B-splines method with order $p=3$ and number of knots $m=5$ and finite Gaussian mixtures method with number of support points $s=100$) using ML estimation when $\sigma_e = 1$}
\label{tab:ISE_ML_s1}
\vspace{-0.5em}
\scriptsize
\par\vspace{0.8em}
\par\smallskip
\begin{tabular}{ccccc}
  \hline
Method & Exponential & Matérn & Linear Matérn & Spherical \\ 
  \hline
\makecell{Exponential ML} & \makecell{19.49 \\[0.6ex] (2.29); 0\%} & \makecell{18.80 \\[0.6ex] (3.33); 0\%} & \textbf{\makecell{14.83 \\[0.6ex] (1.11); 0\%}} & \makecell{36.15 \\[0.6ex] (6.45); 0\%} \\ 
  \makecell{Matérn ML} & \textbf{\makecell{18.06 \\[0.6ex] (1.66); 15\%}} & \textbf{\makecell{13.57 \\[0.6ex] (1.84); 24\%}} & \makecell{15.10 \\[0.6ex] (1.24); 21\%} & \makecell{22.78 \\[0.6ex] (3.05); 22\%} \\ 
  \makecell{Matérn 1.5 ML} & \makecell{19.67 \\[0.6ex] (1.81); 0\%} & \makecell{14.73 \\[0.6ex] (2.28); 0\%} & \makecell{15.00 \\[0.6ex] (1.11); 0\%} & \makecell{30.90 \\[0.6ex] (5.64); 0\%} \\ 
  \makecell{Spherical ML} & \makecell{73.48 \\[0.6ex] (0.18); 0\%} & \makecell{73.46 \\[0.6ex] (0.16); 0\%} & \makecell{73.37 \\[0.6ex] (0.13); 0\%} & \makecell{90.99 \\[0.6ex] (0.23); 0\%} \\ 
  \makecell{Gaussian ML} & \makecell{23.77 \\[0.6ex] (1.46); 5\%} & \makecell{14.15 \\[0.6ex] (1.25); 14\%} & \makecell{19.19 \\[0.6ex] (1.17); 0\%} & \textbf{\makecell{20.59 \\[0.6ex] (1.81); 14\%}} \\ 
  \makecell{Bspl ML} & \makecell{126.75 \\[0.6ex] (11.58); 0\%} & \makecell{96.98 \\[0.6ex] (8.04); 0\%} & \makecell{85.98 \\[0.6ex] (2.53); 0\%} & \makecell{169.55 \\[0.6ex] (26.29); 0\%} \\ 
  \makecell{FGM ML} & \makecell{23.20 \\[0.6ex] (2.00); 0\%} & \makecell{19.17 \\[0.6ex] (2.08); 0\%} & \makecell{19.51 \\[0.6ex] (1.22); 0\%} & \makecell{28.72 \\[0.6ex] (3.55); 0\%} \\ 
   \hline
\end{tabular}
\end{table}

\begin{table}[ht]
\centering
\caption{Mean of MSPEs, its standard error (in parentheses) (unit: $10^{-2}$) and percentage of failures of true covariance model, CAR-INLA model, and four parametric methods (exponential, Mat\~ern, spherical and Gaussian) with initial value for $\sigma_e = 10^{-1}$ and two nonparametric methods (B-splines method with order $p=3$ and number of knots $m=5$ and finite Gaussian mixtures method with number of support points $s=100$) using ML estimation when $\sigma_e = 1$}
\label{tab:MSPE_ML_s1}
\vspace{-0.5em}
\scriptsize
\par\vspace{0.8em}
\par\smallskip
\begin{tabular}{ccccc}
  \hline
Method & Exponential & Matérn & Linear Matérn & Spherical \\ 
  \hline
\makecell{True} & \makecell{133.73 \\[0.6ex] (2.19); 0\%} & \makecell{122.86 \\[0.6ex] (2.02); 0\%} & \makecell{138.10 \\[0.6ex] (2.27); 0\%} & \makecell{123.67 \\[0.6ex] (2.00); 0\%} \\ 
%  \makecell{True Int} & \makecell{133.76 \\[0.6ex] (2.19); 0\%} & \makecell{122.88 \\[0.6ex] (2.02); 0\%} & \makecell{138.13 \\[0.6ex] (2.27); 0\%} & \makecell{123.69 \\[0.6ex] (2.00); 0\%} \\ 
   \hline
\makecell{CAR-INLA} & \makecell{134.89 \\[0.6ex] (2.19); 0\%} & \makecell{124.45 \\[0.6ex] (2.03); 0\%} & \makecell{139.49 \\[0.6ex] (2.28); 0\%} & \makecell{125.53 \\[0.6ex] (2.03); 0\%} \\ 
  \makecell{Exponential ML} & \makecell{134.17 \\[0.6ex] (2.20); 0\%} & \makecell{123.50 \\[0.6ex] (2.04); 0\%} & \makecell{137.87 \\[0.6ex] (2.30); 0\%} & \makecell{124.72 \\[0.6ex] (2.04); 0\%} \\ 
  \makecell{Matérn ML} & \textbf{\makecell{133.00 \\[0.6ex] (2.28); 27\%}} & \textbf{\makecell{122.36 \\[0.6ex] (2.20); 29\%}} & \textbf{\makecell{136.61 \\[0.6ex] (2.32); 28\%}} & \makecell{124.84 \\[0.6ex] (2.46); 30\%} \\ 
  \makecell{Matérn 1.5 ML} & \makecell{134.33 \\[0.6ex] (2.20); 0\%} & \makecell{123.23 \\[0.6ex] (2.03); 0\%} & \makecell{137.88 \\[0.6ex] (2.28); 0\%} & \textbf{\makecell{124.62 \\[0.6ex] (2.04); 0\%}} \\ 
  \makecell{Spherical ML} & \makecell{184.99 \\[0.6ex] (3.75); 0\%} & \makecell{190.62 \\[0.6ex] (4.02); 0\%} & \makecell{194.19 \\[0.6ex] (3.73); 0\%} & \makecell{187.73 \\[0.6ex] (4.19); 0\%} \\ 
  \makecell{Gaussian ML} & \makecell{135.17 \\[0.6ex] (2.28); 5\%} & \makecell{123.93 \\[0.6ex] (2.11); 4\%} & \makecell{138.68 \\[0.6ex] (2.28); 1\%} & \makecell{127.08 \\[0.6ex] (2.39); 22\%} \\ 
  \makecell{Bspl ML} & \makecell{177.87 \\[0.6ex] (3.11); 0\%} & \makecell{162.22 \\[0.6ex] (2.84); 0\%} & \makecell{173.35 \\[0.6ex] (3.06); 0\%} & \makecell{169.17 \\[0.6ex] (2.93); 0\%} \\ 
  \makecell{FGM ML} & \makecell{134.62 \\[0.6ex] (2.20); 0\%} & \makecell{123.58 \\[0.6ex] (2.02); 0\%} & \makecell{138.36 \\[0.6ex] (2.29); 0\%} & \makecell{124.83 \\[0.6ex] (2.04); 0\%} \\ 
   \hline
\end{tabular}
\end{table}

Tables~\ref{tab:ISE_ML_s1},~\ref{tab:MSPE_ML_s1} and~\ref{tab:ISE_ML}-~\ref{tab:MSPE_REML} report overall failure percentages. These overall failures include cases where the model faces convergence issues and fails to produce estimates, which we call "convergence failures", and cases where the model converges but produces estimates that lead to non-finite covariances, which we call "numerical failures". Among the models that experienced failures, most were limited to convergence issues, whereas the \textit{Matérn WLS} model exhibited both convergence and numerical failures. Table~\ref{tab:failure_proportions_matern_wls} presents the convergence and numerical failure proportions of \textit{Matérn WLS} model and their standard errors by the simulation model and $\sigma_e$. The model seems to be more affected by numerical failures than by convergence failures, primarily due to the considerably high values of $\nu$ estimated by the model. These numerical failures occur in all combinations of the simulation model and $\sigma_e$. The higher proportions of numerical failures when computing ISEs appeared to result from the combination of large estimates of $\nu$ and small lag distances, which can cause the modified Bessel function of the second kind in the Matérn covariance function to attain extremely large values, leading to numerical overflow.

\begin{table}[!h]
\centering
\caption{\label{tab:failure_proportions_matern_wls}Convergence and numerical failure proportions and their standard errors (in parenthesis) for the Matérn WLS model}
\scriptsize
\centering
\begin{tabular}[t]{ccrrrr}
\toprule
\multicolumn{2}{c}{ } & \multicolumn{2}{c}{ISE} & \multicolumn{2}{c}{MSPE} \\
\cmidrule(l{3pt}r{3pt}){3-4} \cmidrule(l{3pt}r{3pt}){5-6}
$\sigma_e$ & Covariance & Convergence & Numerical & Convergence & Numerical\\
\midrule
0 & Exponential & 0.02 (0.01) & 0.15 (0.04) & 0.04 (0.02) & 0.03 (0.02)\\
0 & Matérn & 0.03 (0.02) & 0.08 (0.03) & 0.03 (0.02) & 0.03 (0.02)\\
0 & Linear Matérn & 0.00 (0.00) & 0.05 (0.02) & 0.03 (0.02) & 0.01 (0.01)\\
0 & Spherical & 0.01 (0.01) & 0.20 (0.04) & 0.02 (0.01) & 0.02 (0.01)\\
\midrule
1 & Exponential & 0.04 (0.02) & 0.21 (0.04) & 0.01 (0.01) & 0.01 (0.01)\\

1 & Matérn & 0.00 (0.00) & 0.09 (0.03) & 0.01 (0.01) & 0.01 (0.01)\\
1 & Linear Matérn & 0.00 (0.00) & 0.02 (0.01) & 0.03 (0.02) & 0.03 (0.02)\\
1 & Spherical & 0.03 (0.02) & 0.15 (0.04) & 0.04 (0.02) & 0.04 (0.02)\\
\midrule
2 & Exponential & 0.03 (0.02) & 0.16 (0.04) & 0.03 (0.02) & 0.03 (0.02)\\
2 & Matérn & 0.02 (0.01) & 0.07 (0.03) & 0.01 (0.01) & 0.00 (0.00)\\

2 & Linear Matérn & 0.00 (0.00) & 0.05 (0.02) & 0.02 (0.01) & 0.01 (0.01)\\
2 & Spherical & 0.04 (0.02) & 0.14 (0.03) & 0.04 (0.02) & 0.03 (0.02)\\
\bottomrule
\end{tabular}
\end{table}

\section{FSW Data Analysis}
\label{sec:Data_Analysis}
We use the proposed mixture-based nonparametric approach to estimate the FSW population sizes in SSA while comparing it with the CAR-INLA model and covariance-based methods, including the four parametric models and the nonparametric method proposed by \cite{Choi2013}. A description of the motivating HIV dataset and covariates data is provided in Section~\ref{sec:Data_Description}.

\paragraph*{Comparison Methods} We evaluate the predictive performance of our nonparametric methods, \textit{FGM WLS} and \textit{FGM ML} along with all other methods defined in Section~\ref{sec:Simulation_Study}, namely \textit{CAR-INLA}, parametric covariance-based methods such as \textit{exponential WLS, exponential ML, Mat\'{e}rn WLS, Mat\'{e}rn ML, Mat\'{e}rn 1.5 WLS, Mat\'{e}rn 1.5 ML, spherical WLS, spherical ML, Gaussian WLS} and \textit{Gaussian ML} and nonparametric covariance-based methods such as \textit{Bspl WLS, Bspl ML} and \textit{Bspl WLS Nugget}. The method names and abbreviations follow the
definitions introduced earlier in the previous section. We perform a five-fold cross-validation and evaluate the MSPE on the probability and logit scales, for each fold consisting of 20\% of the data (about 70 data points). Details related to computing the MSPE are provided in the previous section.

The application of the above models follows the same procedure described in Section~\ref{sec:Simulation_Study}, except for the specification of the mean structure. In Section~\ref{sec:Simulation_Study}, the mean structure consists only of an intercept; however, in this section it includes both an intercept and covariates. In our nonparametric method, we set the number of support points $s=100$, consistent with simulation studies. Meanwhile, in the nonparametric method proposed by \cite{Choi2013}, we set the number of knots $m=20$ and order $p=3$, with the number of knots slightly higher than the value suggested by the authors in the paper. During residual diagnostics, we detected two significant outliers in the data. Consequently, we evaluate the performance of the models with and without these two outliers.

\paragraph*{Results} Table~\ref{tab:coef_comparison_methods_with} presents the regression coefficient estimates obtained using OLS estimation, ML estimation with our \textit{FGM ML} model, and the CAR-INLA method, along with their standard errors based on data with outliers. We obtain similar results based on data without outliers (refer Table~\ref{tab:coef_comparison_methods_without} in Appendix~\ref{sub: appendix_real_data}). Both covariates appear to be significant at a significance level of 5\% according to OLS and ML estimates. The three methods agree with each other in terms of the direction of the relationship between the logit proportions and the covariates. It is surprising to see the negative correlation between the logit proportions and the logarithmic transformed reference population. One possible explanation could be that even though the absolute values of FSWs are higher in regions with larger populations, it might be the case that FSWs grow slower than the total reference population \citep{Abhirup2019}.

\begin{table}[ht]
\centering
\caption{Coefficient estimates and their standard errors (in parentheses) across methods based on data with outliers}
\label{tab:coef_comparison_methods_with}
\begin{tabular}{lccc}
\hline
& OLS & ML & CAR-INLA \\
\hline
Intercept
& -4.147 (0.054)
& -4.072 (0.157)
& -4.108 (0.056) \\

Urban Population Percentage
& 0.233 (0.062)
& 0.217 (0.068)
& 0.165 (0.083) \\

Reference Population
& -0.579 (0.046)
& -0.538 (0.045)
& -0.568 (0.055) \\
\hline
\end{tabular}
\end{table}

\subparagraph*{Estimating the Covariance Function}
Figures~\ref{fig:cov_est_mle} and~\ref{fig:cov_est_wls} display the covariances estimated utilizing different covariance estimation methods using ML and WLS methods, respectively, together with empirical covariances, based on data with and without outliers. Under the ML method, the estimated covariances appear approximately similar in both cases, that is, with and without outliers, with minor differences in the rate of decay. The covariances estimated using the proposed mixture-based nonparametric method based on NPML estimates, \textit{FGM ML} (dark blue line) decay relatively slowly, similar to certain parametric models such as \textit{exponential ML} and \textit{Mat\'{e}rn ML}. Meanwhile, the covariances estimated using \textit{spherical ML} model tend to decay very quickly, particularly within a lag distance of 1. The \textit{Gaussian ML} model behaves similarly to the \textit{spherical ML} model when the outliers are removed; however, it encounters computational challenges when outliers are present. The covariances estimated using the nonparametric method proposed by \cite{Choi2013} based on ML estimation, \textit{Bspl ML} depicted unusually high values at lower lags. Therefore, the corresponding line has been removed from the plots. Under the WLS method, all parametric models behave similarly with respect to estimated covariances in cases with and without outliers. In general, the covariances estimated using the two nonparametric methods based on WLS estimation; \textit{FGM WLS} (dark blue line) and \textit{Bspl WLS} (light green line) follow similar patterns while oscillating around the empirical covariances (dotted line). Note that \textit{variofit} obtains parameter estimates of parametric models by considering the empirical variogram as opposed to the empirical covariance as described in Section~\ref{sec:Simulation_Study}.

\begin{figure}[h!]
  \centering
  \subfigure[With outliers]{%
    \includegraphics[width=0.48\textwidth]{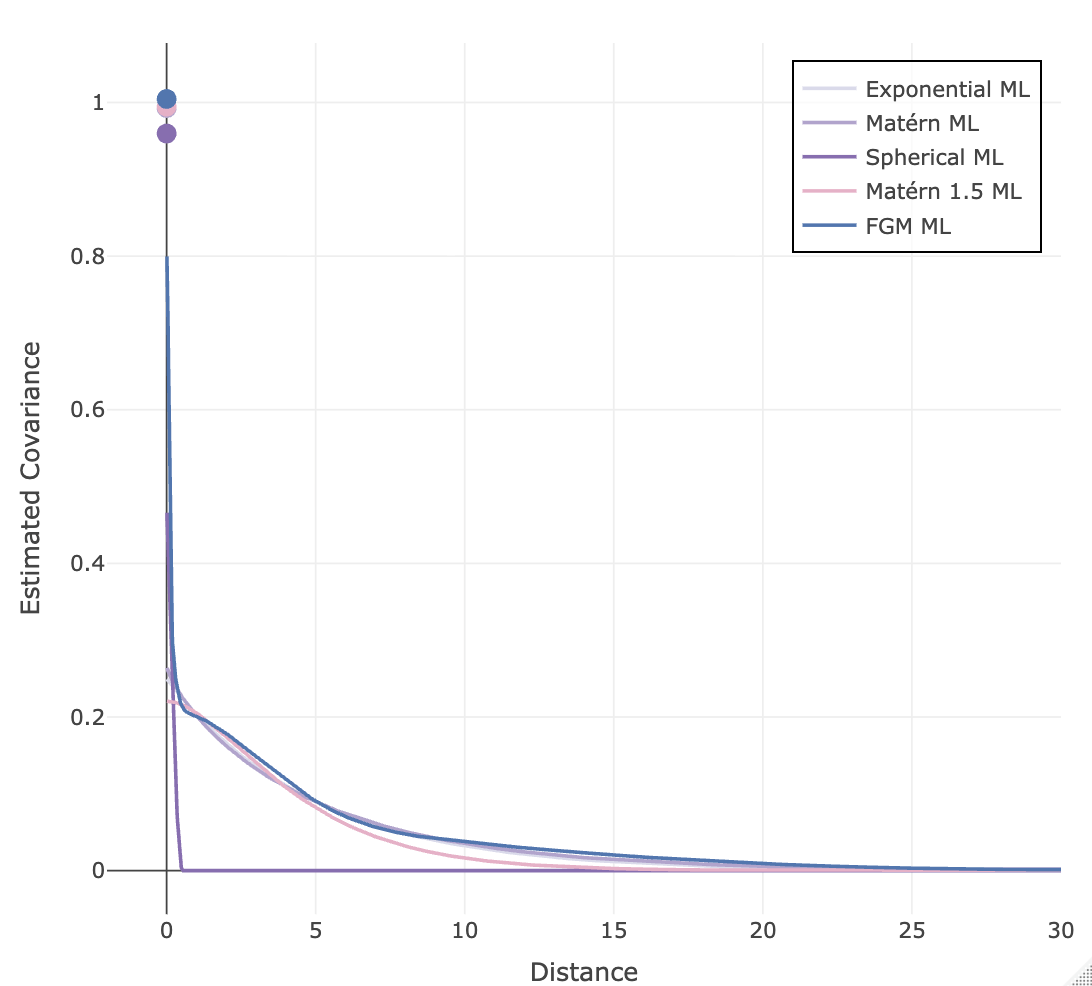}%
    \label{fig:cov_est_with_mle}%
  }
  \hfill
  \subfigure[Without outliers]{%
    \includegraphics[width=0.48\textwidth]{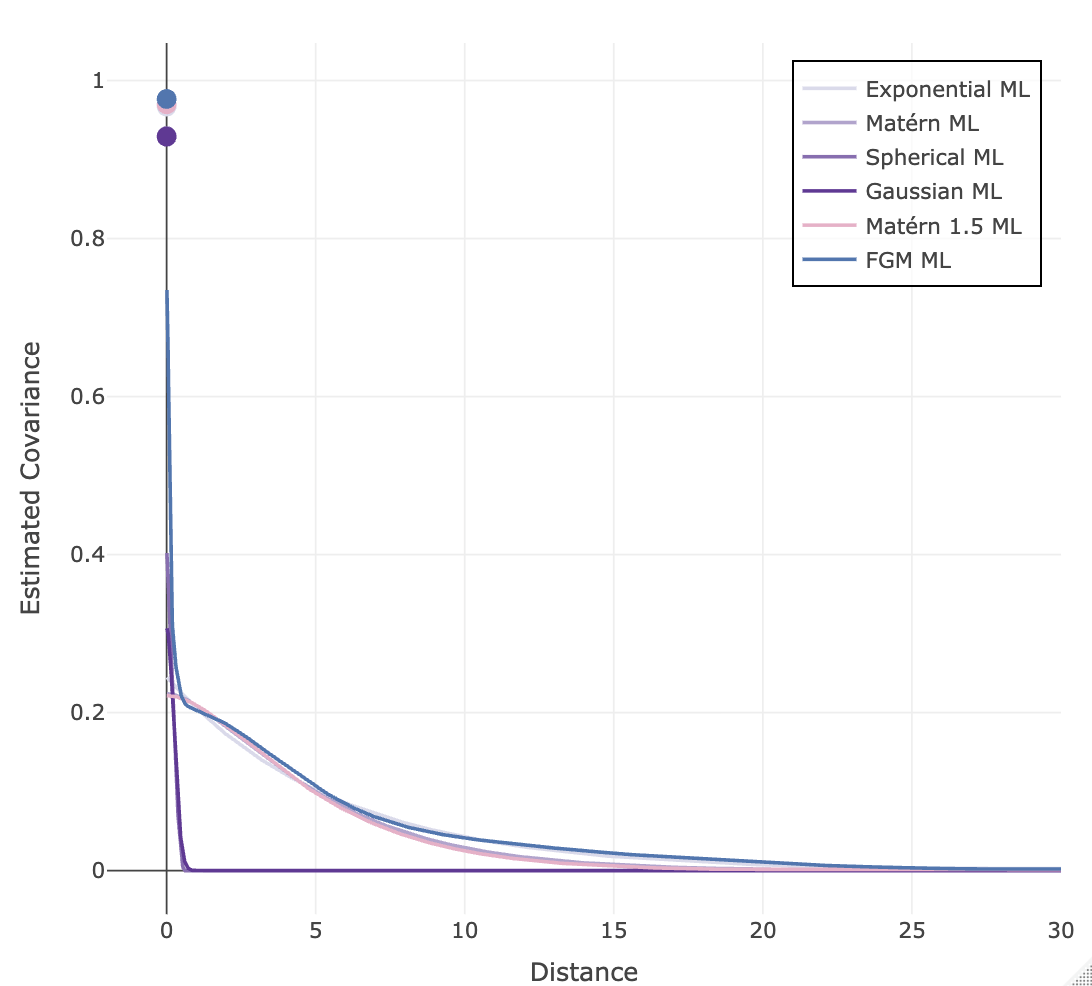}%
    \label{fig:cov_est_without_mle}%
  }
  \caption{Covariances estimated using the four parametric models and our nonparametric model using ML estimation: 
    (a) using the dataset with outliers, 
    (b) using the dataset without outliers.}
  \label{fig:cov_est_mle}
\end{figure}

\begin{figure}[h!]
  \centering
  \subfigure[With outliers]{%
    \includegraphics[width=0.48\textwidth]{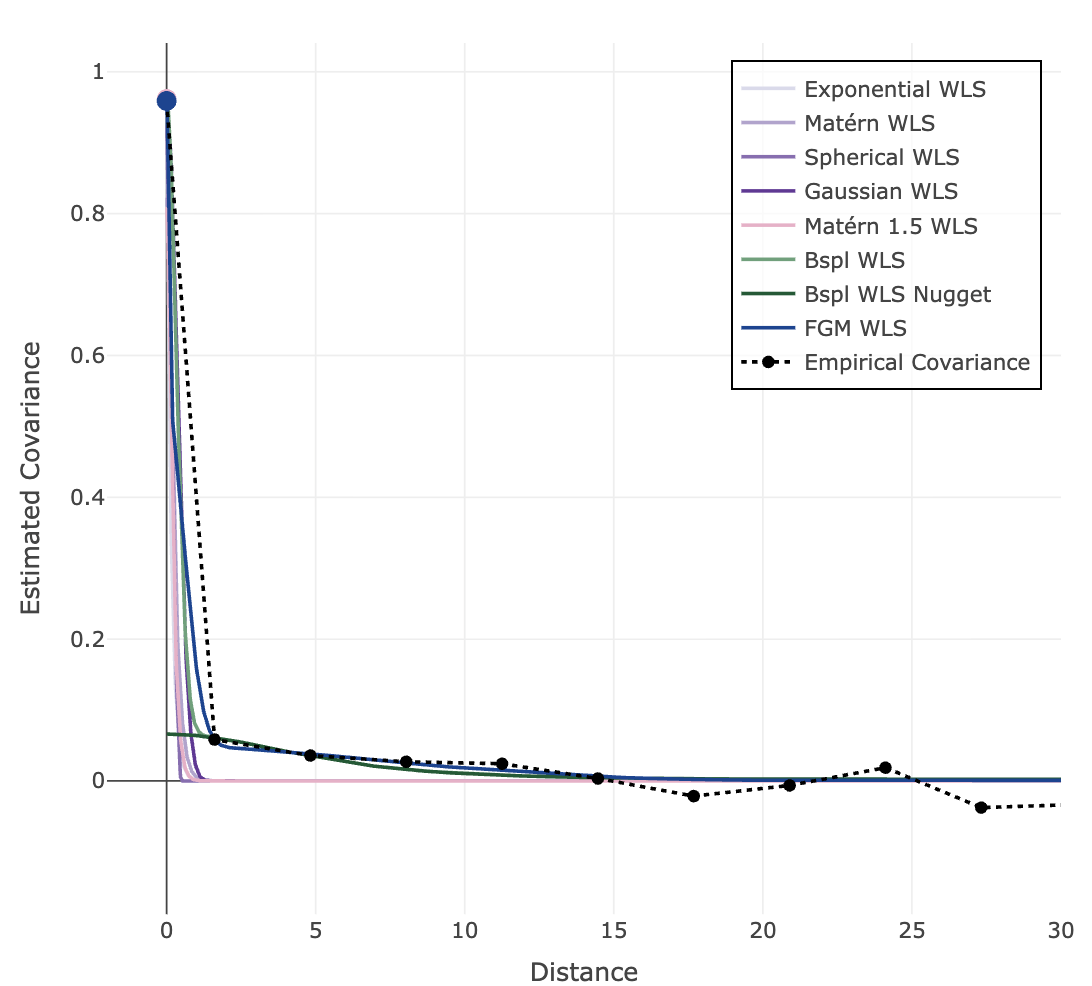}%
    \label{fig:cov_est_with_wls}%
  }
  \hfill
  \subfigure[Without outliers]{%
    \includegraphics[width=0.48\textwidth]{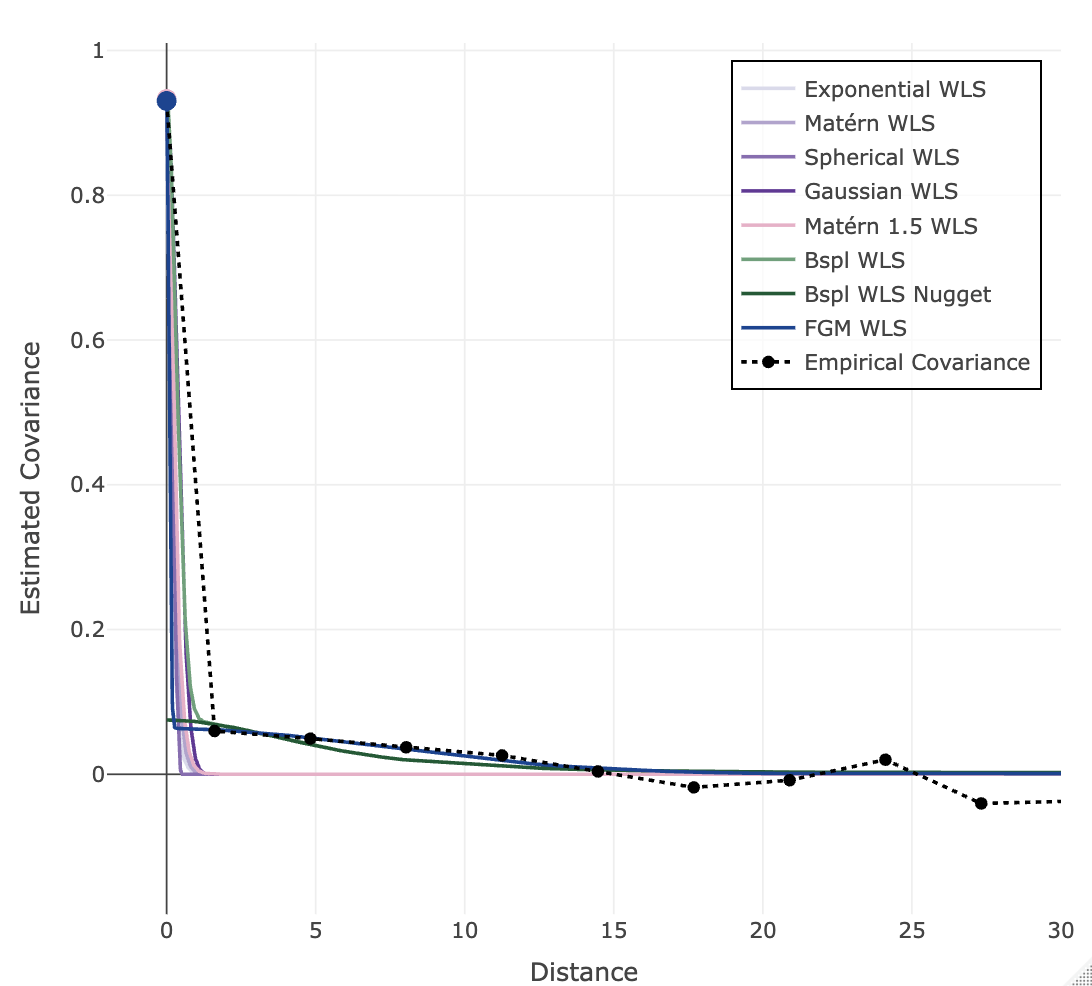}%
    \label{fig:cov_est_without_wls}%
  }
  \caption{Empirical covariances and covariances estimated using the four parametric models and the two nonparametric models using WLS estimation: 
    (a) using the dataset with outliers, 
    (b) using the dataset without outliers.}
  \label{fig:cov_est_wls}
\end{figure}

\subparagraph*{Prediction Error} Table~\ref{tab:mspe_realdata_ML1} summarizes the results of the five-fold cross-validation using means of MSPEs and their standard errors obtained on the logit and probability scales, from the CAR-INLA model and covariance-based methods using ML estimation. Additionally, we consider all covariance-based methods under WLS estimation as described earlier. The corresponding results can be found in Table~\ref{tab:mspe_realdata_WLS1} (see Appendix~\ref{sub: appendix_real_data}). Furthermore, Table~\ref{tab:mspe_realdata_REML1} in Appendix~\ref{sub: appendix_real_data} provides the results of the parametric models and our nonparametric method under REML estimation. The method names and abbreviations follow the
definitions introduced earlier in Section~\ref{sec:Simulation_Study}. As shown by the mean MSPEs on the logit scale in Table~\ref{tab:mspe_realdata_ML1}, all parametric methods based on ML estimates outperform the \textit{CAR-INLA} model regardless of the presence of outliers except for the \textit{Gaussian ML} model which reports a competitive mean MSPE. The spherical model seems to perform better than the \textit{CAR-INLA} model even with WLS estimates regardless of the presence of outliers. In general, the exponential model yields a relatively higher performance than other parametric models regardless of the estimation method used. It must also be noted that fixing the smoothness parameter, $\nu$ at 1.5 in the Mat'{e}rn covariance function results in slightly higher mean MSPEs when estimates are obtained using the WLS method. Additionally, parametric methods based on ML and REML estimates appear to outperform the corresponding model with the WLS estimation. The Gaussian model based on WLS estimates appears to be highly sensitive to the initial values used in optimizations. As shown in Tables~\ref{tab:mspe_realdata_WLS1} and~\ref{tab:mspe_realdata_WLS2}, the \textit{Gaussian WLS} model results in a considerably high mean MSPE and standard error when the initial value of $\sigma_e$ changes from $10^{-1}$ to $0$ (see Appendix~\ref{sub: appendix_real_data}). Similarly, nonparametric methods based on WLS estimates report poor performance compared to their counterparts based on ML and REML estimation. For some simulation seeds, our \textit{FGM WLS} method produced poor predictions due to parameter estimates that resulted in ill-conditioned covariance matrices. It seemed that the WLS method is sensitive to the empirical covariance estimator, which can be sensitive to the particular realization of the simulated data.

Among the \textit{CAR INLA} model and all covariance-based methods based on ML estimates, the proposed mixture-based nonparametric method, \textit{FGM ML} reports the lowest mean MSPE with a relatively low standard error on the logit scale. The exponential model based on the REML estimation with the initial value of $\sigma_e$ set to $10^{-1}$ appears to outperform our \textit{FGM ML} method. However, the performance of the \textit{exponential REML} model decreases considerably when the initial value of $\sigma_e$ is changed to $0$ in the case without outliers (see Table~\ref{tab:mspe_realdata_REML2}). In general, our \textit{FGM ML} outperforms the CAR-INLA model, all parametric methods based on WLS, ML and REML estimates and other nonparametric methods based on WLS and ML estimates. When we compare the mean MSPEs of different methods on the probability scale, the proposed \textit{FGM ML} method yields mean MSPEs as low as the best performing method.

% latex table generated in R 4.6.0 by xtable 1.8-8 package
% Sun Jul 26 17:32:12 2026
\begin{table}[ht]
\centering
\caption{Mean of MSPEs and its standard error (in parentheses) for the CAR-INLA model and the four parametric methods (exponential, Mat\'{e}rn, spherical and Gaussian) with initial value for $\sigma_e = 10^{-1}$ and the two nonparametric methods (B-splines method with order $p=3$ and number of knots $m=20$ and finite Gaussian mixtures method with number of grid points $s=100$) using ML estimation from five-fold cross-validation. Values on the logit scale are reported in units of $10^{-2}$, whereas values on the probability scale are reported in units of $10^{-4}$.} 
\label{tab:mspe_realdata_ML1}
\begin{tabular}{lcccc}
  \multicolumn{1}{c}{Method} & \multicolumn{2}{c}{Logit Scale} & \multicolumn{2}{c}{Probability Scale} \\

\cmidrule(lr){2-3} \cmidrule(lr){4-5}

 & With Outliers & Without Outliers & With Outliers & Without Outliers \\
 \hline
  \hline
CAR-INLA & 97.63 (10.43) & 94.06 (5.00) & 20.63 (7.35) & 20.66 (7.35) \\ 
  Exponential ML & 89.00 (11.82) & 84.98 (6.70) & 20.45 (7.21) & 20.35 (7.18) \\ 
  Mat\'{e}rn ML & 90.17 (12.39) & 86.14 (9.45) & 20.50 (7.29) & 23.11 (8.61) \\ 
  Mat\'{e}rn 1.5 ML & 90.08 (12.46) & 85.80 (7.36) & 20.50 (7.28) & 20.36 (7.22) \\ 
  Spherical ML & 95.53 (10.37) & 92.29 (4.76) & 20.33 (7.06) & \textbf{20.19 (6.96)} \\ 
  Gaussian ML & 98.51 (13.20) & 95.34 (7.72) & \textbf{13.72 (3.24)} & 25.89 (10.69) \\ 
  Bspl ML & 1027.99 (439.13) & 1048.64 (452.92) & 189.89 (88.30) & 190.79 (89.70) \\ 
  FGM ML & \textbf{88.17 (9.38)} & \textbf{82.42 (4.26)} & 20.56 (7.28) & 20.32 (7.16) \\ 
   \hline
\end{tabular}
\end{table}

\subparagraph*{Failure Proportions}
We observed earlier in Sections~\ref{sec:Simulation_Study} and~\ref{sec:Data_Analysis} that parametric models face certain convergence challenges. Therefore, we evaluate failure proportions of the \textit{CAR-INLA} model and all covariance-based methods with respect to the HIV dataset. Consequently, we perform 100 repeated runs of five-fold cross validation by considering 100 different random seeds to partition the dataset into five folds, on each version of the dataset, with and without outliers. In five-fold cross validation, each fold is kept for validation while using the remaining folds for training the models, resulting in five training sets within each run of cross-validation. Within each run, we estimate the parameters of all models, namely the \textit{CAR-INLA} model and all covariance-based methods using the WLS and ML estimation methods using data in each of these five training sets. Note that we also estimate parameters of parametric methods and our mixture-based nonparametric method using REML method. If the considered estimation method faces convergence issues in at least one of the five training sets under the considered random seed and fails to produce estimates, we consider the method to fail under that seed.

According to Tables~\ref{tab:failure_para} and~\ref{tab:failure_CARINLA_nonpara}, Mat\'{e}rn and Gaussian models fail to converge and produce estimates under certain seeds in both cases; with and without outliers in the dataset, under the ML and REML estimation methods. The \textit{Mat\'{e}rn ML} model seems to fail about 34\% and 52\% of the time in the two cases; with and without outliers, respectively, while the \textit{Mat\'{e}rn REML} model seems to have relatively lower failure percentages of about 16\% and 28\%, respectively. Surprisingly, the failure percentages of the Mat\'{e}rn model have increased slightly when the outliers are removed. Meanwhile, the \textit{Gaussian ML} model seems to have considerably higher failure percents of about 85\% and 70\% in the two cases. Furthermore, the failure percent of the Gaussian model increases to about 77\% under REML estimation in the case without outliers. It must be noted that the \textit{CAR-INLA} model, parametric methods such as exponential and spherical models, and all nonparametric methods, particularly our mixture-based nonparametric method based on NPML estimates, \textit{FGM ML} exhibit no convergence failures across all 100 repeated runs of five-fold cross-validation, both with and without outliers, indicating their robustness.

\begin{table}[!h]
\centering
\caption{Failure proportions of parametric covariance estimation methods and their standard errors (in parentheses) evaluated on the HIV data with and without outliers.}
\label{tab:failure_para}
\begin{tabular}{lccc|ccc}
\toprule
& \multicolumn{3}{c}{With Outliers} & \multicolumn{3}{c}{Without Outliers} \\
\cmidrule(lr){2-4} \cmidrule(lr){5-7}
Method & ML & REML & WLS & ML & REML & WLS\\
\midrule
Exponential & 0.00 (0.00) & 0.00 (0.00) & 0.00 (0.00) &
0.00 (0.00) & 0.00 (0.00) & 0.00 (0.00)\\

Mat\'ern & 0.34 (0.05) & 0.16 (0.04) & 0.00 (0.00) &
0.52 (0.05) & 0.28 (0.04) & 0.00 (0.00)\\

Mat\'ern 1.5 & 0.00 (0.00) & 0.00 (0.00) & 0.00 (0.00) &
0.00 (0.00) & 0.00 (0.00) & 0.00 (0.00)\\

Spherical & 0.00 (0.00) & 0.00 (0.00) & 0.00 (0.00) &
0.00 (0.00) & 0.00 (0.00) & 0.00 (0.00)\\

Gaussian & 0.85 (0.04) & 0.85 (0.04) & 0.00 (0.00) &
0.70 (0.05) & 0.77 (0.04) & 0.00 (0.00)\\
\bottomrule
\end{tabular}
\end{table}

\begin{table}[!h]
\centering
\caption{Failure proportions of the CAR-INLA model and nonparametric covariance estimation methods and their standard errors (in parentheses) evaluated on the HIV data with and without outliers.}
\label{tab:failure_CARINLA_nonpara}
\begin{tabular}{lcc}
\toprule
Method & With Outliers & Without Outliers\\
\midrule
CAR-INLA        & 0.00 (0.00) & 0.00 (0.00)\\
Bspl ML         & 0.00 (0.00) & 0.00 (0.00)\\
Bspl WLS        & 0.00 (0.00) & 0.00 (0.00)\\
Bspl WLS Nugget & 0.00 (0.00) & 0.00 (0.00)\\
FGM ML          & 0.00 (0.00) & 0.00 (0.00)\\
FGM REML        & 0.00 (0.00) & 0.00 (0.00)\\
FGM WLS         & 0.00 (0.00) & 0.00 (0.00)\\
\bottomrule
\end{tabular}
\end{table}

\subparagraph*{Number of Support Points in the Proposed Nonparametric Method}

To investigate the effect of the number of support points, $s$ empirically with respect to the HIV dataset, we fit our mixture-based nonparametric method using ML, REML and WLS estimation with 8 different grid sizes, $s = 10, 50, 100, 200, 300,$ $400, 500$ and $1000$ on both versions of the dataset; with and without outliers and compare the estimated covariances and MSPEs over different grid sizes. As shown in Figures~\ref{fig:grid_mle},~\ref{fig:grid_reml} and~\ref{fig:grid_wls} (see Appendix~\ref{sub: appendix_real_data} for the latter two figures), the estimated covariances using the proposed mixture-based nonparametric method remain nearly unchanged over different grid sizes regardless of the estimation method used and the presence of outliers. According to Table~\ref{tab:mspe_support_points} (see the Appendix~\ref{sub: appendix_real_data}), the MSPE changes slightly under different grid sizes under WLS estimation; however, it remains nearly the same under ML and REML estimation.

\begin{figure}[h!]
  \centering
  \subfigure[With outliers]{%
    \includegraphics[width=0.45\textwidth]{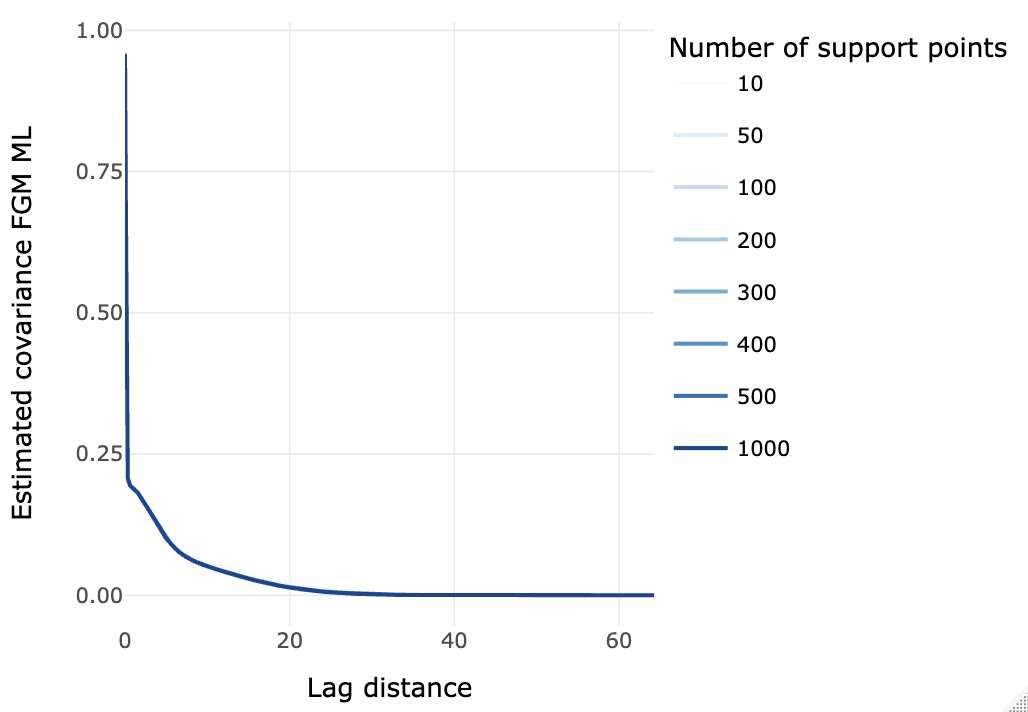}%
    \label{fig:grid_with_mle}%
  }
  \hfill
  \subfigure[Without outliers]{%
    \includegraphics[width=0.45\textwidth]{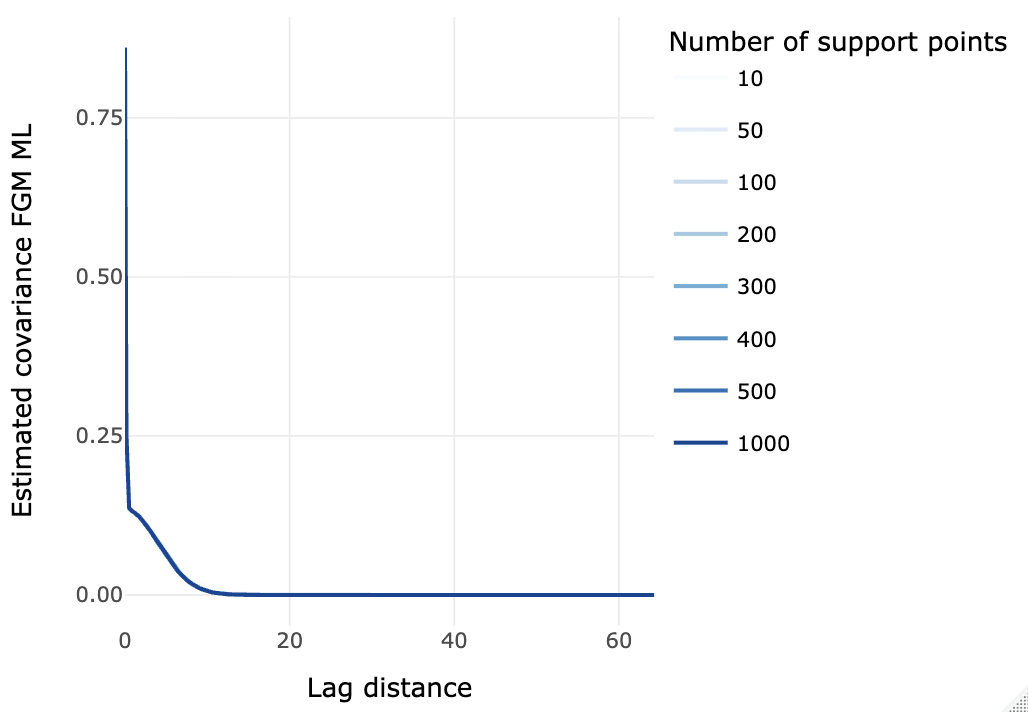}%
    \label{fig:grid_without_mle}%
  }
  \caption{Estimated covariances using the nonparametric method, FGM using ML estimation over different grid sizes: 
    (a) using the dataset with outliers, 
    (b) using the dataset without outliers.}
  \label{fig:grid_mle}
\end{figure}

\paragraph*{Estimation of FSW PSEs in SSA}
Empirical results suggest that our mixture-based nonparametric method using NPML estimation, \textit{FGM ML} outperforms other methods in estimating logit proportions of FSWs. Furthermore, all models based on ML estimates yield better results on the logit scale when outliers are removed (see Table~\ref{tab:mspe_realdata_ML1}). Consequently, we apply the \textit{FGM ML} method on the dataset without outliers to produce estimates of logit proportions and PSEs for FSWs in 44 SSA countries at sub-national level. As shown in Figures~\ref{fig:logit} and~\ref{fig:PSEs}, respectively, FSW logit proportions and PSEs appear to have spatial clusters, with areas depicting similar logit proportions and PSEs clustering together. Sub-national areas in countries such as Democratic Republic of the Congo, Ethiopia, Kenya, and South Africa appear to have relatively higher PSEs. Overall, most sub-national areas have PSEs below 4000. The estimated PSEs of subnational areas appear to vary in a wide range from 25.14 to 18391.38 with a median of 989.71.

\begin{figure}[h!]
  \centering
  \subfigure[Logit proportion estimates]{%
    \includegraphics[width=0.48\textwidth]{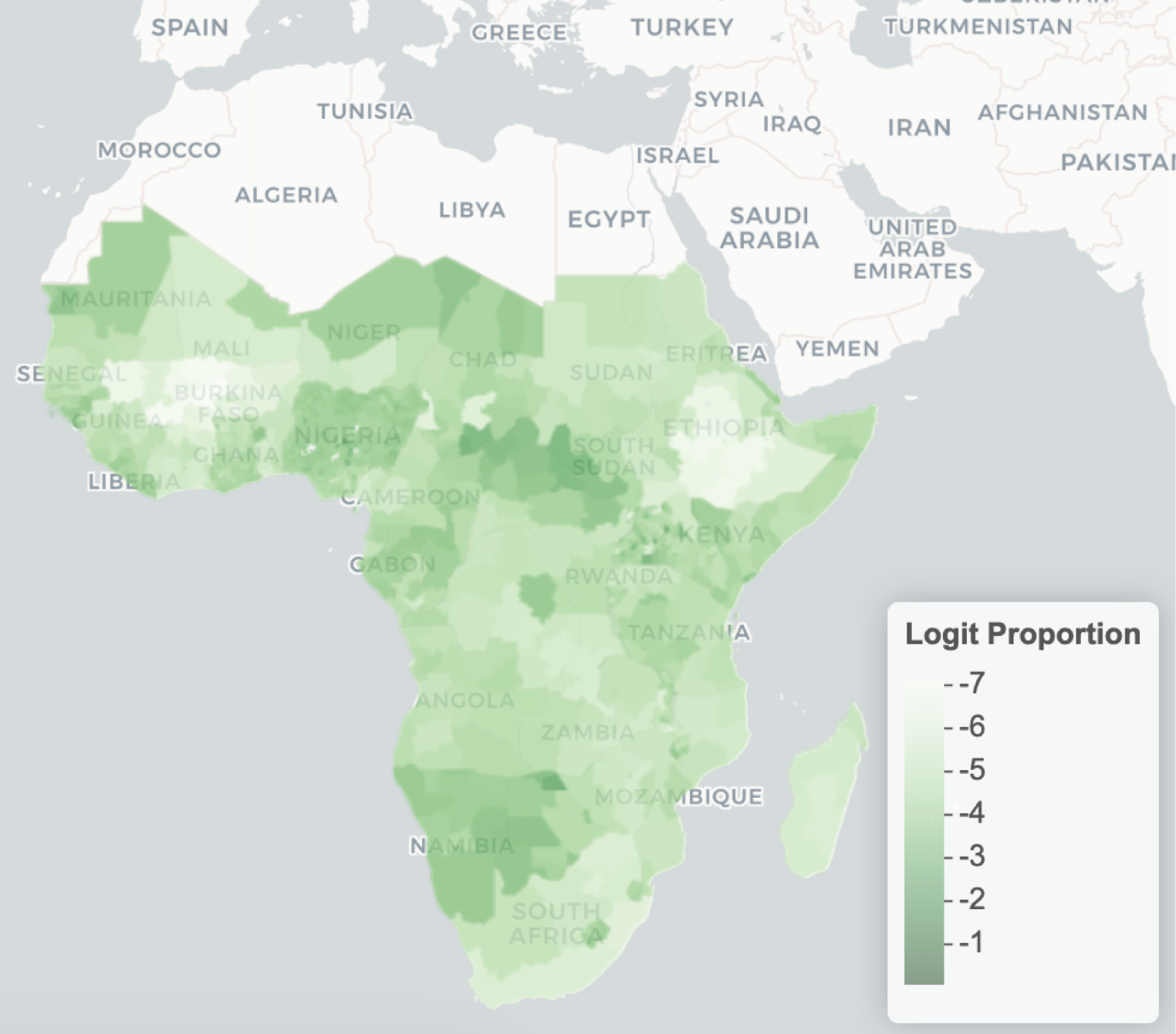}%
    \label{fig:logit}%
  }
  \hfill
  \subfigure[Population size estimates]{%
    \includegraphics[width=0.48\textwidth]{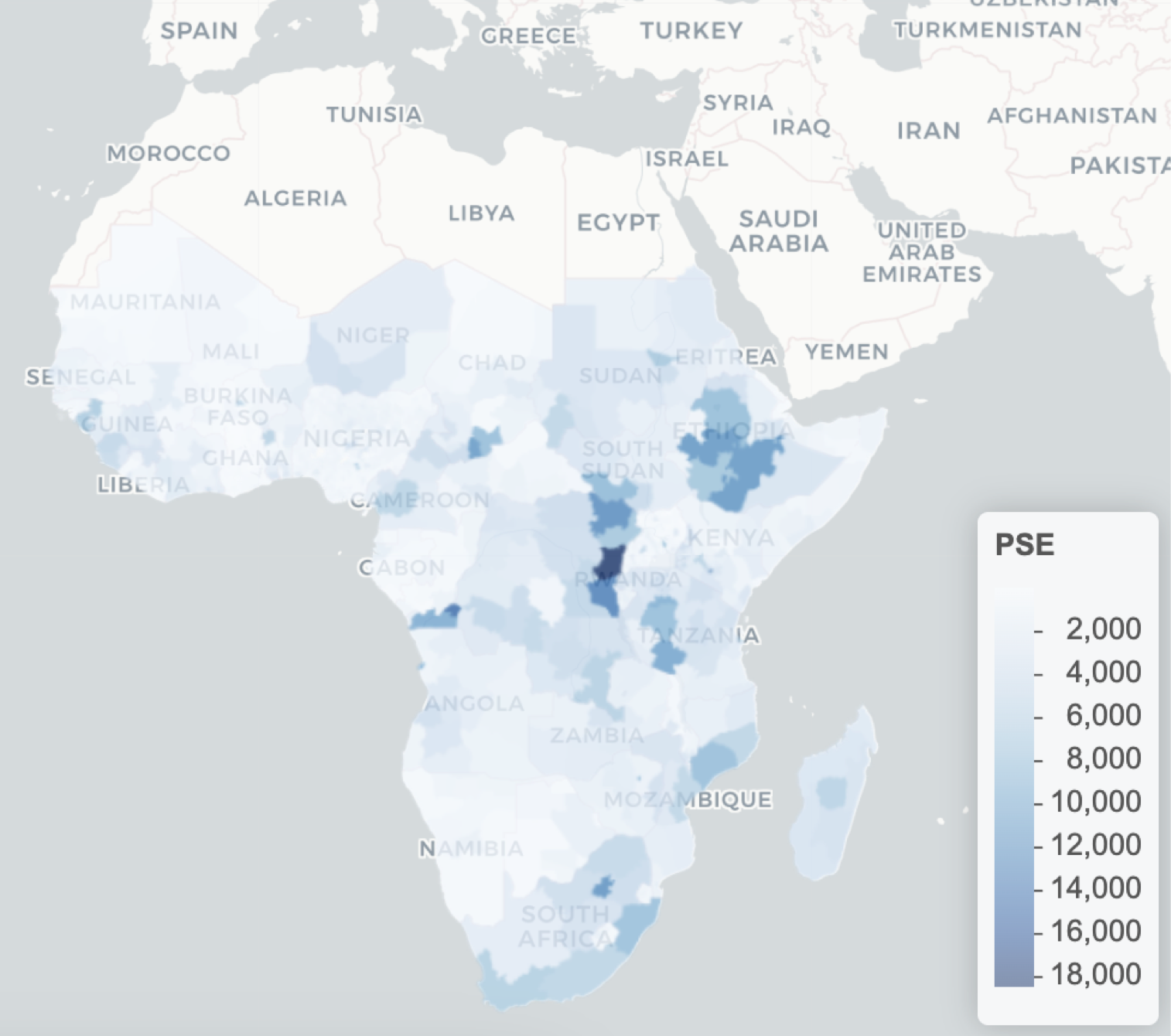}%
    \label{fig:PSEs}%
  }
  \caption{Estimates of logit proportions and population sizes of female sex workers in sub-Saharan Africa at sub-national level: 
    (a) Logit proportion estimates, 
    (b) Population size estimates.}
  \label{fig:estimates}
\end{figure}

\section{Discussion}
\label{sec:Discussion}

In this paper, we propose a novel nonparametric method based on mixtures of Gaussian kernels for estimating the spatial covariance functions with applications to obtaining PSEs of FSWs in SSA at sub-national level. We leverage the infinite mixture representation admitted by covariance functions of stationary isotropic processes valid in all dimensions and estimate the weights of the mixing measure using WLS and NPML estimation methods. Additionally, we develop computationally efficient methods to solve optimizations using non-negative least squares and second-order decent updates. We establish some theoretical properties of the NPML estimator. We show that the NPML covariance estimator exists, and that at least one NPML mixing measure can be chosen to have finite support. Furthermore, we empirically examine the approximation error in using finite approximations of infinite mixtures to estimate the covariance function relative to the true covariances, and show the effectiveness of the approach in estimating Mat\'{e}rn covariance functions across a broad range of grid resolutions, with particularly consistent performance across sufficiently fine grids. In addition, we investigate the estimated covariances from our method over different grid sizes relative to the HIV dataset and demonstrate the robustness of the method over the grid size regardless of the presence of outliers and the estimation method used. According to empirical results, the proposed nonparametric method performs well in terms of estimation as well as prediction errors, across varying simulation models, in cases with and without nugget effects and outliers, and as the strength of the nugget effect increases without any failures. Our nonparametric method using the NPML estimation approach accounts for local and long range spatial relationships in the data while capturing information in covariates. The method exhibits better performance than the CAR-INLA model and all parametric methods based on WLS, ML and REML estimates and other nonparametric methods based on WLS and ML estimates in estimating logit proportions of FSWs in SSA, thus can be useful in producing PSEs of FSWs and other key populations.

In general, ML performs similar or better than its REML counterpart. Furthermore, most models show better performance with ML estimates as opposed to WLS estimates. WLS estimates are often sensitive to empirical estimators, which are sensitive to the underlying data. Additionally, the Mat\'{e}rn model appears to result in considerably higher estimates of $\nu$ that lead to numerical failures under WLS estimation. In addition to the risk of model mis-specification and lower accuracies resulting from it, parametric methods involve certain computational challenges. We evaluate the failure percentages of different covariance estimation methods. Although nonparametric methods report no failures, certain parametric models such as Mat\'{e}rn and Gaussian models fail to produce parameter estimates, particularly in ML estimation. ML estimation involves obtaining the derivatives of the covariance function with respect to all its parameters. In Mat\'{e}rn covariance, the derivative with respect to the smoothness parameter $\nu$ is based on the derivative of the modified second-kind Bessel function with respect to $\nu$ \citep{Geoga2023}. Furthermore, \cite{Diggle2007} discussed the difficulty in estimating all three parameters in the Mat\'{e}rn model due to ridges or plateus in the log-likelihood surface and suggested selecting $\nu$. While fixing $\nu$ often appears to avoid failures in the Mat\'{e}rn function, estimating $\nu$ together with other parameters often produces higher accuracies, as indicated in simulation studies and real data analysis. These results agree with arguments made in the literature that emphasize the importance of $\nu$ \citep{Stein1999, Hong2021, Geoga2023}. \cite{Hong2021} argued that inappropriate $\nu$ can result in incorrect estimations of prediction error and inefficient predictions. Moreover, parametric models, particularly the Gaussian model, seem to be highly sensitive to the initial values of the parameters used in optimizations. The \textit{geoR} package itself encourages trying different initial values \citep{RibeiroDiggle2006}. Nevertheless, most covariance estimation methods based on ML estimates perform better than the \textit{CAR-INLA} model. Thus, covariance-based methods can be useful in producing FSW PSEs.

The proposed nonparametric method exhibits some limitations. Similar to popular kernels such as Mat\'{e}rn, the proposed nonparametric method requires that the process be stationary and isotropic. Moreover, at present, our method does not handle multiple estimates observed at the same location, hence, it can be extended to accommodate multiple estimates in the future. Another possible area of research would be to extend our nonparametric spatial covariance function to the spatio-temporal setting. Additionally, we use our mixture-based nonparametric method to estimate FSW PSEs in this study. The method can also be applied to produce PSEs of other key populations.

%%%%%%%%%%%%%%%
\bibliographystyle{plainnat}
\bibliography{refs}
\label{lastpage}

\section{Appendix}
\label{appendix}
\subsection{Proofs for Section~\ref{sec:Methods}}
\subsubsection{Proof for Lemma~\ref{lem: L_P_deriv}}\label{supp:lem:L_P_deriv}
\renewcommand{\L}{\mathcal{L}}
\newcommand{\A}{\mathbf{A}}
\begin{proof}
Recall $\Y \sim N_n(\mu, \Sigma(\btheta))$ with $\mu = \X\bbeta$, $\Sigma(\btheta) = \sum_{i=1}^s \theta_i \Sigma_i$, and
\begin{align*}
    \L_P(\btheta) &:= const +\frac{1}{2}(\Y-\X\hat{\bbeta}_\btheta )^\top \Sigma(\btheta)^{-1}(\Y-\X\hat{\bbeta}_\btheta)+\frac{1}{2}\log |\Sigma(\btheta)|.
\end{align*}
where we define
$\hat{\bbeta}_{\btheta}:  = (\X^\top \Sigma(\btheta)^{-1}\X)^{-1}\X^\top \Sigma(\btheta)^{-1}\Y$.

Let $\A:= \X(\X^\top\Sigma(\btheta)^{-1}\X)^{-1}\X^\top \Sigma(\btheta)^{-1}$. We have,
\begin{align*}
    (\Y-\X\hat{\bbeta}_\btheta )^\top \Sigma(\btheta)^{-1}(\Y-\X\hat{\bbeta}_\btheta) 
    &=\Y^\top(\mathbf{I}_n-\A)^{\top} \Sigma(\btheta)^{-1 }(\mathbf{I}_n-\A)\Y.
\end{align*}
Also,
\begin{align*}
    (\mathbf{I}_n-\A)^{\top} \Sigma(\btheta)^{-1 }(\mathbf{I}_n-\A) 
    %&= (\Sigma(\btheta)^{-1}-\A^\top \Sigma(\btheta)^{-1} )(\mathbf{I}_n-\A)\\
    &=\Sigma(\btheta)^{-1} - \Sigma(\btheta)^{-1}\A - \A^\top\Sigma(\btheta)^{-1} +\A^\top \Sigma(\btheta)^{-1} \A.
\end{align*}
Note $\Sigma(\btheta)^{-1}\A =\A^\top \Sigma(\btheta)^{-1} =  \Sigma(\btheta)^{-1} \X(\X^\top\Sigma(\btheta)^{-1}\X)^{-1}\X^\top \Sigma(\btheta)^{-1}$, and also
\begin{align*}
    \A^\top \Sigma(\btheta)^{-1} \A
    &= \Sigma(\btheta)^{-1} \X(\X^\top \Sigma(\btheta)^{-1}\X)^{-1}(\X^\top \Sigma(\btheta)^{-1} \X)(\X^\top \Sigma(\btheta)^{-1}\X)^{-1}\X^\top\Sigma(\btheta)^{-1} \\
    &= \Sigma(\btheta)^{-1}\X(\X^\top \Sigma(\btheta)^{-1} \X)^{-1}\X^\top\Sigma(\btheta)^{-1}.
\end{align*}
Therefore,
\begin{align*}
    (\mathbf{I}_n-\A)^{\top}\Sigma(\btheta)^{-1} (\mathbf{I}_n-\A) = \Sigma(\btheta)^{-1} - \Sigma(\btheta)^{-1}\X(\X^\top \Sigma(\btheta)^{-1}\X)^{-1} \X^\top\Sigma(\btheta)^{-1}=: P_\btheta,
\end{align*}
and
\begin{equation*}
    \L_P(\btheta)=const+\frac{1}{2}\Y^\top P_\btheta \Y + \frac{1}{2}\log |\Sigma(\btheta)|.
\end{equation*}

Taking the derivative with respect to $\btheta$, for each $i=1,\dots,s$,
\begin{align*}
    g(\btheta)_i = \frac{\partial}{\partial\theta_i} \L_P(\btheta) = \frac{1}{2}\Y^\top (\frac{\partial P_\btheta}{\partial \theta_i}) \Y + \frac{1}{2}\frac{\partial }{\partial \theta_i} \log |\Sigma(\btheta)|.
\end{align*}
We now use the following identity, whose proof is deferred to the end of the proof.
\begin{lemma}\label{lem:dP_theta}
For $i=1,\dots,s$
    \begin{align*}
        \frac{\partial P_\btheta}{\partial \theta_i}=-P_\btheta \Sigma_i P_\btheta.
    \end{align*}
\end{lemma}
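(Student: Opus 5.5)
We prove the identity by differentiating the two pieces of $P_\btheta = \Sigma(\btheta)^{-1} - H_\btheta$ separately and then recombining. The only inputs are that $\Sigma(\btheta)$ is linear in $\btheta$, so $\partial\Sigma(\btheta)/\partial\theta_i = \Sigma_i$, and the standard rule $\partial \Sigma^{-1}/\partial\theta_i = -\Sigma^{-1}\Sigma_i\Sigma^{-1}$. The latter holds wherever $\Sigma(\btheta)$ is invertible, which by the first Lemma of Section~\ref{sub_sec:estimation} covers every nonzero $\btheta\ge 0$. Write $\Sigma:=\Sigma(\btheta)$, $W:=\X^\top\Sigma^{-1}\X$ and $Q:=\Sigma^{-1}\X$, so that $H_\btheta = QW^{-1}Q^\top$.

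\emph{Step 1: the derivative of $H_\btheta$.} By the rule above,
\begin{align*}
\partial Q/\partial\theta_i &= -\Sigma^{-1}\Sigma_i Q,\\
\partial W/\partial\theta_i &= -Q^\top\Sigma_i Q,\\
\partial W^{-1}/\partial\theta_i &= W^{-1}Q^\top\Sigma_i Q W^{-1}.
\end{align*}
The product rule then gives
\begin{align*}
\frac{\partial H_\btheta}{\partial\theta_i} = -\Sigma^{-1}\Sigma_i H_\btheta - H_\btheta\Sigma_i\Sigma^{-1} + H_\btheta\Sigma_i H_\btheta ,
\end{align*}
where we used $QW^{-1}Q^\top = H_\btheta$ in each of the three terms.

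\emph{Step 2: combine.} Subtracting Step 1 from the derivative of $\Sigma^{-1}$ gives
\begin{align*}
\frac{\partial P_\btheta}{\partial\theta_i} = -\Sigma^{-1}\Sigma_i\Sigma^{-1} + \Sigma^{-1}\Sigma_i H_\btheta + H_\btheta\Sigma_i\Sigma^{-1} - H_\btheta\Sigma_i H_\btheta .
\end{align*}
Expanding $-(\Sigma^{-1}-H_\btheta)\Sigma_i(\Sigma^{-1}-H_\btheta)$ produces exactly these four terms with the same signs. Hence the right-hand side equals $-P_\btheta\Sigma_i P_\btheta$, which is the claim.

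The only point needing care is the sign bookkeeping in Step 1, specifically the derivative of the inverse $W^{-1}$, which contributes a double negative that turns into $+H_\btheta\Sigma_i H_\btheta$. Everything else is mechanical.

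As an alternative check, one can use the facts $P_\btheta\X=0$ and $P_\btheta\Sigma P_\btheta = P_\btheta$ and differentiate $P_\btheta\Sigma P_\btheta = P_\btheta$ directly. That route is messier, so the direct computation above is the plan.
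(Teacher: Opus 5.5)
Your proof is correct and takes essentially the same route as the paper. Both arguments differentiate $P_\btheta=\Sigma(\btheta)^{-1}-H_\btheta$ term by term, using the product rule and the inverse-derivative identity, and then collect the four resulting terms into $-P_\btheta\Sigma_iP_\btheta$; the paper writes this as $(\mathbf{I}_n-\X\mathbf{W}\X^\top\Sigma(\btheta)^{-1})^\top\mathbf{M}_i(\mathbf{I}_n-\X\mathbf{W}\X^\top\Sigma(\btheta)^{-1})$, which is the same factorization as your expansion of $-(\Sigma^{-1}-H_\btheta)\Sigma_i(\Sigma^{-1}-H_\btheta)$, and your signs, including the $+H_\btheta\Sigma_iH_\btheta$ term, are right.
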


Combining Lemma~\ref{lem:dP_theta} and the fact that $\frac{\partial}{\partial \theta_i} \log |\Sigma(\btheta)| = \operatorname{tr}(\Sigma(\btheta)^{-1}\Sigma_i)$, we have,

\begin{align*}
    g(\btheta)_i 
    &= -\frac{1}{2}\Y^\top P_\btheta \Sigma_i  P_\btheta \Y +\frac{1}{2} \operatorname{tr}(\Sigma(\btheta)^{-1}\Sigma_i )\\
    &= -\frac{1}{2}\{ (P_\btheta \Y)^\top \Sigma_i (P_\btheta \Y) - \operatorname{tr}(\Sigma(\btheta)^{-1}\Sigma_i)\}, \quad i=1,\dots,s.
\end{align*}

Now we compute Hessian and expected Hessian for $\L_P(\btheta)$. Using the chain rule and Lemma~\ref{lem:dP_theta} again,
\begin{align*}
    \frac{\partial}{\partial\theta_j} g(\btheta)_i
    &= -\frac{1}{2}\Y^\top (\frac{\partial}{\partial \theta_j} P_\btheta) \Sigma_i P_\btheta \Y  -\frac{1}{2}\Y^\top P_\btheta \Sigma_i(\frac{\partial}{\partial \theta_j} P_\btheta)\Y +\frac{1}{2} \operatorname{tr}(\frac{\partial}{\partial\theta_j}\Sigma(\btheta)^{-1}\Sigma_i)\\
    &= -\frac{1}{2}\Y^\top (-P_\btheta \Sigma_j P_\btheta) \Sigma_i P_\btheta \Y  -\frac{1}{2}\Y^\top P_\btheta \Sigma_i(-P_\btheta \Sigma_j P_\btheta)\Y +\frac{1}{2}\operatorname{tr}(\Sigma(\btheta)^{-1}\{-\frac{\partial}{\partial\theta_j}\Sigma(\btheta)\}\Sigma(\btheta)^{-1}\Sigma_i)\\
    &= \Y^\top P_\btheta \Sigma_i P_\btheta \Sigma_j P_\btheta \Y -\frac{1}{2}\operatorname{tr}(\Sigma(\btheta)^{-1}\Sigma_j \Sigma(\btheta)^{-1}\Sigma_i).
\end{align*}
since $\Y^\top (P_\btheta \Sigma_j P_\btheta) \Sigma_i P_\btheta \Y = (\Y^\top (P_\btheta \Sigma_j P_\btheta) \Sigma_i P_\btheta \Y)^\top =\Y^\top P_\btheta \Sigma_i P_\btheta \Sigma_j P_\btheta \Y .$

For the expected Hessian, note
\begin{align*}
    P_\btheta \X = \Sigma(\btheta)^{-1} \X- \Sigma(\btheta)^{-1}\X(\X^\top \Sigma(\btheta)^{-1}\X)^{-1} \X^\top\Sigma(\btheta)^{-1}\X =0.
\end{align*}
and
\begin{align*}
    P_\btheta \Sigma(\btheta)P_\btheta &= P_\btheta \Sigma(\btheta)[\Sigma(\btheta)^{-1} - \Sigma(\btheta)^{-1}\X(\X^\top \Sigma(\btheta)^{-1}\X)^{-1} \X^\top\Sigma(\btheta)^{-1}]\\
    &=P_\btheta - P_\btheta \X(\X^\top \Sigma(\btheta)^{-1}\X)^{-1} \X^\top\Sigma(\btheta)^{-1} = P_\btheta.
\end{align*}
Therefore $P_\btheta \Y\sim N(0, P_\btheta)$. Then,
\begin{align*}
    E[(P_\btheta \Y)^\top \Sigma_i P_\btheta \Sigma_j (P_\btheta \Y)] = \operatorname{tr}( (\Sigma_i P_\btheta \Sigma_j) P_\btheta ),
\end{align*}
and
\begin{align*}
   H(\btheta)_{ij} = E[ \frac{\partial}{\partial\theta_j} g(\btheta)_i] =  \operatorname{tr}(  P_\btheta \Sigma_i P_\btheta \Sigma_j) - \frac{1}{2} \operatorname{tr}(\Sigma(\btheta)^{-1}\Sigma_j \Sigma(\btheta)^{-1}\Sigma_i).
\end{align*}

\newcommand{\M}{\mathbf{M}}
\newcommand{\W}{\mathbf{W}}
\begin{proof}[Proof of Lemma~\ref{lem:dP_theta}]
For $i=1,\dots,s$, we have
\begin{align*}
    \frac{\partial}{\partial\theta_i}P_\btheta 
    &= \frac{\partial}{\partial\theta_i}\Sigma(\btheta)^{-1} - \frac{\partial}{\partial\theta_i}\{\Sigma(\btheta)^{-1}\X(\X^\top \Sigma(\btheta)^{-1}\X)^{-1} \X^\top\Sigma(\btheta)^{-1}\}\\
    &=\frac{\partial}{\partial\theta_i}\Sigma(\btheta)^{-1}
    -\frac{\partial}{\partial\theta_i}\{\Sigma(\btheta)^{-1}\}\X(\X^\top \Sigma(\btheta)^{-1}\X)^{-1} \X^\top\Sigma(\btheta)^{-1}\\
    &\quad -\Sigma(\btheta)^{-1}\X\{\frac{\partial}{\partial\theta_i}(\X^\top \Sigma(\btheta)^{-1}\X)^{-1} \}\X^\top\Sigma(\btheta)^{-1}\\
    &\quad -\Sigma(\btheta)^{-1}\X(\X^\top \Sigma(\btheta)^{-1}\X)^{-1} \X^\top\frac{\partial}{\partial\theta_i}\{\Sigma(\btheta)^{-1}\}.
\end{align*}
Let 
\begin{align*}
    \M_i&:=\frac{\partial}{\partial\theta_i}\Sigma(\btheta)^{-1} = -\Sigma(\btheta)^{-1}\Sigma_i \Sigma(\btheta)^{-1},\\
    \W&:= (\X^\top \Sigma(\btheta)^{-1}\X)^{-1}
\end{align*}
We have that
\begin{align*}
    \frac{\partial}{\partial\theta_i}(\X^\top \Sigma(\btheta)^{-1}\X)^{-1}  = -(\X^\top \Sigma(\btheta)^{-1}\X)^{-1}\X^\top \frac{\partial}{\partial\theta_i}\Sigma(\btheta)^{-1}\X(\X^\top \Sigma(\btheta)^{-1}\X)^{-1} = -\W \X^\top \M_i\X \W.
\end{align*}
Then,
\begin{align*}
    \frac{\partial P_\btheta}{\partial \theta_i}  
    %&= \M_i-\M_i\X\W \X^\top\Sigma(\btheta)^{-1}+\Sigma(\btheta)^{-1}\X(\W\X^\top\M_i\X\W)\X^\top \Sigma(\btheta)^{-1} -\Sigma(\btheta)^{-1}\X\W\X^\top\M_i\\
    &=\M_i-\M_i(\X\W \X^\top\Sigma(\btheta)^{-1})+(\X\W\X^\top \Sigma(\btheta)^{-1})^\top \M_i(\X\W\X^\top \Sigma(\btheta)^{-1}) -(\X\W\X^\top \Sigma(\btheta)^{-1})^\top\M_i\\
    &=(\mathbf{I}_n-\X\W \X^\top\Sigma(\btheta)^{-1})^\top\M_i(\mathbf{I}_n-\X\W\X^\top \Sigma(\btheta)^{-1})\\
    &=-(\mathbf{I}_n-\Sigma(\btheta)^{-1}\X\W \X^\top)\Sigma(\btheta)^{-1}\Sigma_i \Sigma(\btheta)^{-1}(\mathbf{I}_n-\X\W\X^\top \Sigma(\btheta)^{-1})\\
    &= -P_\btheta \Sigma_i P_\btheta
\end{align*}
since $P_\btheta = \Sigma(\btheta)^{-1}-\Sigma(\btheta)^{-1}\X\W \X^\top\Sigma(\btheta)^{-1}$.
\end{proof}

\end{proof}

\subsection{Additional Simulation Results}
\label{sub:appendix_simulation}

Figure~\ref{fig:simulation_covariance} and Tables~\ref{tab:ISE_ML}-\ref{tab:MSPE_REML} provide additional simulation results.

\begin{figure}[h!]
    \centering
    \includegraphics[width=0.8\textwidth]{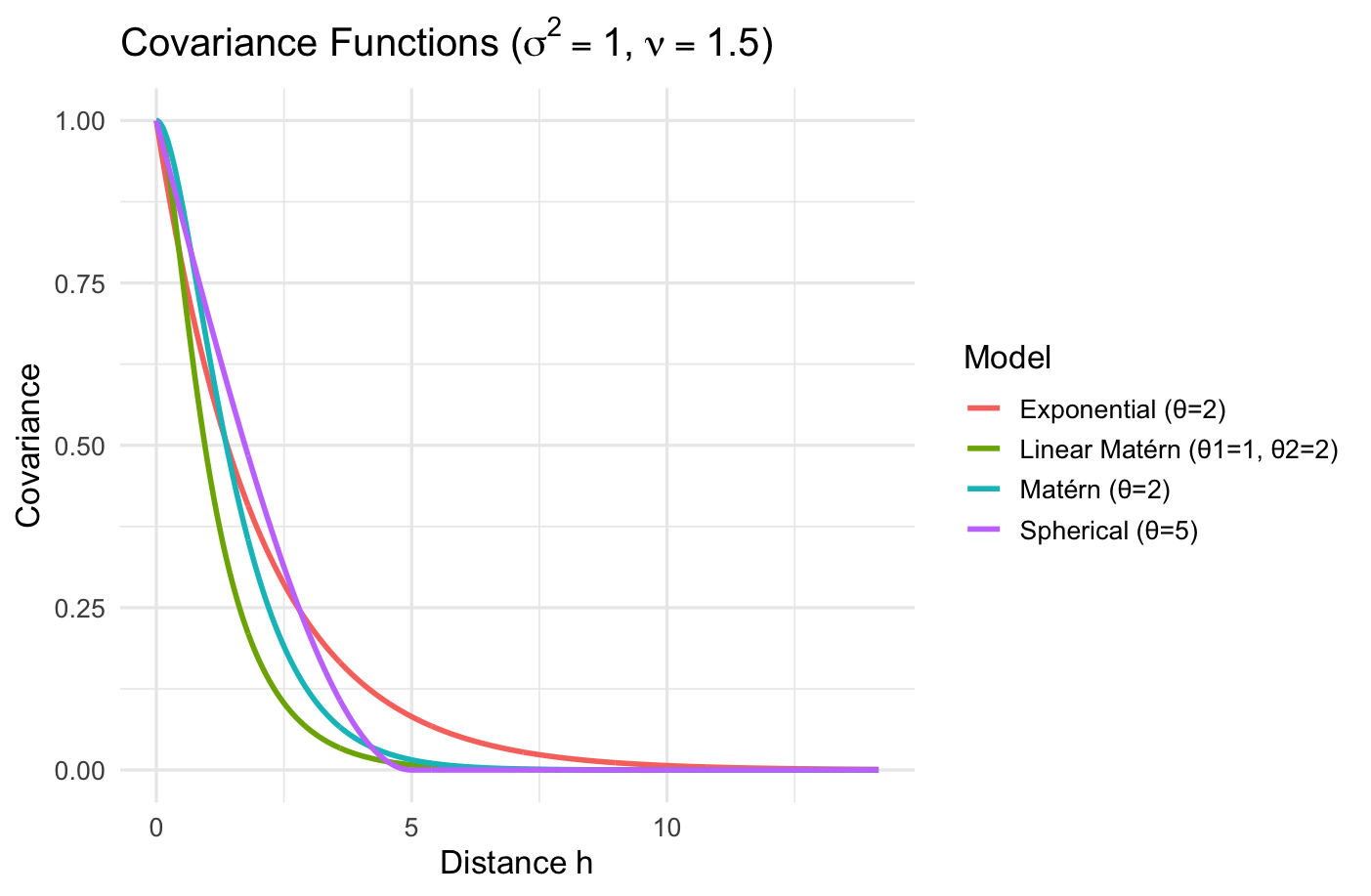}
    \caption{Covariance functions used in simulating data.}
    \label{fig:simulation_covariance}
\end{figure}

\begin{table}[ht]
\centering
\caption{Mean of ISEs, its standard error (in parentheses) (unit: $10^{-2}$) and percentage of failures of four parametric methods (exponential, Mat\~ern, spherical and Gaussian) with initial value for $\sigma_e = 10^{-1}$ and two nonparametric methods (B-splines method with order $p=3$ and number of knots $m=5$ and finite Gaussian mixtures method with number of support points $s=100$) using ML estimation}
\label{tab:ISE_ML}
\vspace{-0.5em}
\scriptsize
\par\vspace{0.8em}
\noindent\text{(a) When $\sigma_e = 0$}\par\smallskip
\begin{tabular}{ccccc}
  \hline
Method & Exponential & Matérn & Linear Matérn & Spherical \\ 
  \hline
\makecell{Exponential ML} & \textbf{\makecell{15.96 \\[0.6ex] (2.48); 0\%}} & \makecell{326.44 \\[0.6ex] (31.43); 0\%} & \makecell{36.77 \\[0.6ex] (6.06); 0\%} & \makecell{69.18 \\[0.6ex] (12.19); 0\%} \\ 
  \makecell{Matérn ML} & \makecell{16.16 \\[0.6ex] (1.46); 0\%} & \makecell{10.30 \\[0.6ex] (1.15); 0\%} & \makecell{14.83 \\[0.6ex] (1.02); 0\%} & \makecell{25.92 \\[0.6ex] (3.81); 0\%} \\ 
  \makecell{Matérn 1.5 ML} & \makecell{18.66 \\[0.6ex] (1.31); 0\%} & \textbf{\makecell{8.11 \\[0.6ex] (1.26); 0\%}} & \makecell{13.48 \\[0.6ex] (0.82); 0\%} & \textbf{\makecell{19.32 \\[0.6ex] (2.20); 0\%}} \\ 
  \makecell{Spherical ML} & \makecell{78.17 \\[0.6ex] (0.28); 0\%} & \makecell{78.88 \\[0.6ex] (0.32); 0\%} & \makecell{78.15 \\[0.6ex] (0.24); 0\%} & \makecell{96.03 \\[0.6ex] (0.35); 0\%} \\ 
  \makecell{Gaussian ML} & \makecell{32.21 \\[0.6ex] (1.26); 0\%} & \makecell{27.19 \\[0.6ex] (1.04); 0\%} & \makecell{29.38 \\[0.6ex] (0.90); 0\%} & \makecell{34.46 \\[0.6ex] (1.79); 0\%} \\ 
  \makecell{Bspl ML} & \makecell{25.67 \\[0.6ex] (5.76); 0\%} & \makecell{13.76 \\[0.6ex] (2.12); 0\%} & \makecell{13.10 \\[0.6ex] (1.01); 0\%} & \makecell{76.96 \\[0.6ex] (28.46); 0\%} \\ 
  \makecell{FGM ML} & \makecell{15.97 \\[0.6ex] (1.44); 0\%} & \makecell{11.67 \\[0.6ex] (1.41); 0\%} & \textbf{\makecell{12.42 \\[0.6ex] (0.94); 0\%}} & \makecell{22.62 \\[0.6ex] (3.07); 0\%} \\ 
   \hline
\end{tabular}
\par\vspace{0.8em}
\noindent\text{(c) When $\sigma_e = 2$}\par\smallskip
\begin{tabular}{ccccc}
  \hline
Method & Exponential & Matérn & Linear Matérn & Spherical \\ 
  \hline
\makecell{Exponential ML} & \textbf{\makecell{32.74 \\[0.6ex] (3.62); 0\%}} & \textbf{\makecell{26.48 \\[0.6ex] (3.57); 0\%}} & \textbf{\makecell{34.41 \\[0.6ex] (3.71); 0\%}} & \textbf{\makecell{37.34 \\[0.6ex] (5.55); 0\%}} \\ 
  \makecell{Matérn ML} & \makecell{57.19 \\[0.6ex] (5.62); 35\%} & \makecell{40.33 \\[0.6ex] (4.93); 43\%} & \makecell{56.78 \\[0.6ex] (7.35); 40\%} & \makecell{57.05 \\[0.6ex] (6.34); 39\%} \\ 
  \makecell{Matérn 1.5 ML} & \makecell{44.66 \\[0.6ex] (4.61); 0\%} & \makecell{28.59 \\[0.6ex] (3.90); 0\%} & \makecell{40.50 \\[0.6ex] (4.51); 0\%} & \makecell{52.47 \\[0.6ex] (7.01); 0\%} \\ 
  \makecell{Spherical ML} & \makecell{171.68 \\[0.6ex] (2.53); 0\%} & \makecell{169.46 \\[0.6ex] (2.56); 0\%} & \makecell{171.25 \\[0.6ex] (2.37); 0\%} & \makecell{188.11 \\[0.6ex] (2.77); 0\%} \\ 
  \makecell{Gaussian ML} & \makecell{48.05 \\[0.6ex] (7.42); 17\%} & \makecell{27.83 \\[0.6ex] (4.14); 8\%} & \makecell{37.50 \\[0.6ex] (6.45); 3\%} & \makecell{42.89 \\[0.6ex] (7.61); 26\%} \\ 
  \makecell{Bspl ML} & \makecell{1651.00 \\[0.6ex] (34.06); 0\%} & \makecell{1532.12 \\[0.6ex] (32.35); 0\%} & \makecell{1579.60 \\[0.6ex] (32.18); 0\%} & \makecell{1610.36 \\[0.6ex] (34.59); 0\%} \\ 
  \makecell{FGM ML} & \makecell{98.99 \\[0.6ex] (11.10); 0\%} & \makecell{90.37 \\[0.6ex] (11.18); 0\%} & \makecell{95.00 \\[0.6ex] (11.29); 0\%} & \makecell{101.25 \\[0.6ex] (11.64); 0\%} \\ 
   \hline
\end{tabular}
\end{table}

\begin{table}[ht]
\centering
\caption{Mean of MSPEs, its standard error (in parentheses) (unit: $10^{-2}$) and percentage of failures of true covariance model, CAR-INLA model, and four parametric methods (exponential, Mat\~ern, spherical and Gaussian) with initial value for $\sigma_e = 10^{-1}$ and two nonparametric methods (B-splines method with order $p=3$ and number of knots $m=5$ and finite Gaussian mixtures method with number of support points $s=100$) using ML estimation}
\label{tab:MSPE_ML}
\vspace{-0.5em}
\scriptsize
\par\vspace{0.8em}
\noindent\text{(a) When $\sigma_e = 0$}\par\smallskip
\begin{tabular}{ccccc}
  \hline
Method & Exponential & Matérn & Linear Matérn & Spherical \\ 
  \hline
\makecell{True} & \makecell{21.98 \\[0.6ex] (0.38); 0\%} & \makecell{5.54 \\[0.6ex] (0.12); 0\%} & \makecell{18.21 \\[0.6ex] (0.37); 0\%} & \makecell{13.32 \\[0.6ex] (0.23); 0\%} \\ 
%  \makecell{True Int} & \makecell{21.99 \\[0.6ex] (0.38); 0\%} & \makecell{5.54 \\[0.6ex] (0.12); 0\%} & \makecell{18.21 \\[0.6ex] (0.37); 0\%} & \makecell{13.33 \\[0.6ex] (0.23); 0\%} \\ 
   \hline
\makecell{CAR-INLA} & \makecell{24.65 \\[0.6ex] (0.43); 0\%} & \makecell{12.51 \\[0.6ex] (0.26); 0\%} & \makecell{26.95 \\[0.6ex] (0.50); 0\%} & \makecell{15.78 \\[0.6ex] (0.29); 0\%} \\ 
  \makecell{Exponential ML} & \textbf{\makecell{21.99 \\[0.6ex] (0.38); 0\%}} & \makecell{7.11 \\[0.6ex] (0.15); 0\%} & \makecell{19.55 \\[0.6ex] (0.37); 0\%} & \textbf{\makecell{13.54 \\[0.6ex] (0.24); 0\%}} \\ 
  \makecell{Matérn ML} & \makecell{22.15 \\[0.6ex] (0.39); 1\%} & \makecell{5.56 \\[0.6ex] (0.12); 0\%} & \makecell{17.75 \\[0.6ex] (0.35); 0\%} & \makecell{13.61 \\[0.6ex] (0.24); 1\%} \\ 
  \makecell{Matérn 1.5 ML} & \makecell{22.10 \\[0.6ex] (0.38); 0\%} & \textbf{\makecell{5.53 \\[0.6ex] (0.12); 0\%}} & \textbf{\makecell{17.69 \\[0.6ex] (0.35); 0\%}} & \makecell{13.61 \\[0.6ex] (0.24); 0\%} \\ 
  \makecell{Spherical ML} & \makecell{86.02 \\[0.6ex] (2.55); 0\%} & \makecell{91.53 \\[0.6ex] (2.75); 0\%} & \makecell{95.07 \\[0.6ex] (2.33); 0\%} & \makecell{88.84 \\[0.6ex] (3.08); 0\%} \\ 
  \makecell{Gaussian ML} & \makecell{22.92 \\[0.6ex] (0.38); 0\%} & \makecell{6.03 \\[0.6ex] (0.13); 0\%} & \makecell{18.48 \\[0.6ex] (0.38); 0\%} & \makecell{14.36 \\[0.6ex] (0.25); 0\%} \\ 
  \makecell{Bspl ML} & \makecell{24.09 \\[0.6ex] (0.43); 0\%} & \makecell{5.61 \\[0.6ex] (0.13); 0\%} & \makecell{17.92 \\[0.6ex] (0.36); 0\%} & \makecell{14.79 \\[0.6ex] (0.26); 0\%} \\ 
  \makecell{FGM ML} & \makecell{22.15 \\[0.6ex] (0.39); 0\%} & \makecell{5.61 \\[0.6ex] (0.12); 0\%} & \makecell{17.85 \\[0.6ex] (0.36); 0\%} & \makecell{13.61 \\[0.6ex] (0.24); 0\%} \\ 
   \hline
\end{tabular}
\par\vspace{0.8em}
\noindent\text{(c) When $\sigma_e = 2$}\par\smallskip
\begin{tabular}{ccccc}
  \hline
Method & Exponential & Matérn & Linear Matérn & Spherical \\ 
  \hline
\makecell{True} & \makecell{444.55 \\[0.6ex] (7.18); 0\%} & \makecell{438.04 \\[0.6ex] (7.08); 0\%} & \makecell{454.09 \\[0.6ex] (7.34); 0\%} & \makecell{433.86 \\[0.6ex] (6.98); 0\%} \\ 
%  \makecell{True Int} & \makecell{444.58 \\[0.6ex] (7.18); 0\%} & \makecell{438.03 \\[0.6ex] (7.09); 0\%} & \makecell{454.16 \\[0.6ex] (7.34); 0\%} & \makecell{433.84 \\[0.6ex] (6.98); 0\%} \\ 
   \hline
\makecell{CAR-INLA} & \makecell{455.33 \\[0.6ex] (7.94); 0\%} & \makecell{449.26 \\[0.6ex] (7.91); 0\%} & \makecell{470.16 \\[0.6ex] (8.32); 0\%} & \makecell{442.95 \\[0.6ex] (7.59); 0\%} \\ 
  \makecell{Exponential ML} & \makecell{447.74 \\[0.6ex] (7.31); 0\%} & \makecell{440.45 \\[0.6ex] (7.21); 0\%} & \makecell{455.71 \\[0.6ex] (7.47); 0\%} & \makecell{437.44 \\[0.6ex] (7.11); 0\%} \\ 
  \makecell{Matérn ML} & \textbf{\makecell{445.16 \\[0.6ex] (9.43); 42\%}} & \textbf{\makecell{437.00 \\[0.6ex] (7.59); 46\%}} & \textbf{\makecell{451.19 \\[0.6ex] (8.58); 42\%}} & \makecell{438.39 \\[0.6ex] (9.38); 52\%} \\ 
  \makecell{Matérn 1.5 ML} & \makecell{451.16 \\[0.6ex] (7.46); 0\%} & \makecell{443.67 \\[0.6ex] (7.18); 0\%} & \makecell{460.61 \\[0.6ex] (7.63); 0\%} & \makecell{441.98 \\[0.6ex] (7.56); 0\%} \\ 
  \makecell{Spherical ML} & \makecell{482.51 \\[0.6ex] (8.23); 0\%} & \makecell{488.35 \\[0.6ex] (8.48); 0\%} & \makecell{492.03 \\[0.6ex] (8.34); 0\%} & \makecell{485.16 \\[0.6ex] (8.51); 0\%} \\ 
  \makecell{Gaussian ML} & \makecell{449.19 \\[0.6ex] (8.19); 13\%} & \makecell{442.02 \\[0.6ex] (7.67); 6\%} & \makecell{458.16 \\[0.6ex] (7.67); 3\%} & \makecell{439.91 \\[0.6ex] (7.89); 11\%} \\ 
  \makecell{Bspl ML} & \makecell{631.02 \\[0.6ex] (10.83); 0\%} & \makecell{615.80 \\[0.6ex] (10.62); 0\%} & \makecell{625.94 \\[0.6ex] (10.76); 0\%} & \makecell{623.01 \\[0.6ex] (10.71); 0\%} \\ 
  \makecell{FGM ML} & \makecell{447.46 \\[0.6ex] (7.29); 0\%} & \makecell{440.70 \\[0.6ex] (7.19); 0\%} & \makecell{456.25 \\[0.6ex] (7.45); 0\%} & \textbf{\makecell{437.42 \\[0.6ex] (7.11); 0\%}} \\ 
   \hline
\end{tabular}
\end{table}

\begin{table}[ht]
\centering
\caption{Mean of ISEs, its standard error (in parentheses) (unit: $10^{-2}$) and percentage of failures of four parametric methods (exponential, Mat\~ern, spherical and Gaussian) with initial value for $\sigma_e = 10^{-1}$ and two nonparametric methods (B-splines method with order $p=3$ and number of knots $m=5$ and finite Gaussian mixtures method with number of support points $s=100$) using WLS estimation}
\label{tab:ISE_WLS}
\vspace{-0.5em}
\scriptsize
\par\vspace{0.8em}
\noindent\text{(a) When $\sigma_e = 0$}\par\smallskip
\begin{tabular}{ccccc}
  \hline
Method & Exponential & Matérn & Linear Matérn & Spherical \\ 
  \hline
\makecell{Exponential WLS} & \makecell{1414821319.60 \\[0.6ex] (690838356.23); 0\%} & \makecell{1900242750.54 \\[0.6ex] (1060831785.93); 0\%} & \makecell{1182889825.34 \\[0.6ex] (720422087.61); 0\%} & \makecell{1689303448.50 \\[0.6ex] (791120814.56); 0\%} \\ 
  \makecell{Matérn WLS} & \makecell{41455431.10 \\[0.6ex] (31836837.72); 15\%} & \makecell{125526806.67 \\[0.6ex] (122440227.27); 9\%} & \makecell{1172025.39 \\[0.6ex] (978574.42); 5\%} & \makecell{58884184.79 \\[0.6ex] (53459566.58); 20\%} \\ 
  \makecell{Matérn 1.5 WLS} & \makecell{216681920.26 \\[0.6ex] (90099378.21); 0\%} & \makecell{204972868.65 \\[0.6ex] (114006813.51); 0\%} & \makecell{127712850.16 \\[0.6ex] (65888755.83); 0\%} & \makecell{257095876.29 \\[0.6ex] (159663336.76); 0\%} \\ 
  \makecell{Spherical WLS} & \makecell{78.29 \\[0.6ex] (0.26); 0\%} & \makecell{79.09 \\[0.6ex] (0.29); 0\%} & \makecell{78.35 \\[0.6ex] (0.21); 0\%} & \makecell{96.21 \\[0.6ex] (0.32); 0\%} \\ 
  \makecell{Gaussian WLS} & \makecell{3763662.89 \\[0.6ex] (1653890.97); 0\%} & \makecell{3466330.44 \\[0.6ex] (2008282.10); 0\%} & \makecell{1783373.10 \\[0.6ex] (1206924.58); 0\%} & \makecell{5209052.50 \\[0.6ex] (2592122.42); 0\%} \\ 
  \makecell{Bspl WLS} & \textbf{\makecell{22.36 \\[0.6ex] (1.41); 0\%}} & \textbf{\makecell{15.87 \\[0.6ex] (1.27); 0\%}} & \textbf{\makecell{21.04 \\[0.6ex] (1.28); 0\%}} & \textbf{\makecell{24.31 \\[0.6ex] (1.98); 0\%}} \\ 
  \makecell{Bspl WLS Nugget} & \makecell{25.28 \\[0.6ex] (1.59); 0\%} & \makecell{17.56 \\[0.6ex] (1.41); 0\%} & \makecell{26.70 \\[0.6ex] (1.85); 0\%} & \makecell{26.02 \\[0.6ex] (2.12); 0\%} \\ 
  \makecell{FGM WLS} & \makecell{24.23 \\[0.6ex] (1.58); 0\%} & \makecell{17.14 \\[0.6ex] (1.33); 0\%} & \makecell{23.70 \\[0.6ex] (1.39); 0\%} & \makecell{24.49 \\[0.6ex] (2.00); 0\%} \\ 
   \hline
\end{tabular}
\par\vspace{0.8em}
\noindent\text{(b) When $\sigma_e = 1$}\par\smallskip
\begin{tabular}{ccccc}
  \hline
Method & Exponential & Matérn & Linear Matérn & Spherical \\ 
  \hline
\makecell{Exponential WLS} & \makecell{8567567935.30 \\[0.6ex] (7286526641.39); 0\%} & \makecell{1853461775.42 \\[0.6ex] (929187517.31); 0\%} & \makecell{1012324224.85 \\[0.6ex] (440114498.04); 0\%} & \makecell{8064674281.42 \\[0.6ex] (6890540694.87); 0\%} \\ 
  \makecell{Matérn WLS} & \makecell{5726348.02 \\[0.6ex] (3352044.88); 21\%} & \makecell{15527631.35 \\[0.6ex] (10825285.05); 9\%} & \makecell{21004938.22 \\[0.6ex] (16038765.59); 2\%} & \makecell{25941755.49 \\[0.6ex] (21596596.96); 16\%} \\ 
  \makecell{Matérn 1.5 WLS} & \makecell{305228511.64 \\[0.6ex] (151126191.22); 0\%} & \makecell{263199664.72 \\[0.6ex] (139555889.46); 0\%} & \makecell{200588768.96 \\[0.6ex] (122362106.88); 0\%} & \makecell{175932437.91 \\[0.6ex] (87701447.22); 0\%} \\ 
  \makecell{Spherical WLS} & \makecell{72.79 \\[0.6ex] (0.04); 0\%} & \makecell{73.13 \\[0.6ex] (0.06); 0\%} & \makecell{72.99 \\[0.6ex] (0.05); 0\%} & \makecell{90.29 \\[0.6ex] (0.05); 0\%} \\ 
  \makecell{Gaussian WLS} & \makecell{3495568.57 \\[0.6ex] (1869281.64); 0\%} & \makecell{3679565.57 \\[0.6ex] (2498354.84); 0\%} & \makecell{2502749.75 \\[0.6ex] (1639547.93); 0\%} & \makecell{3352681.51 \\[0.6ex] (1952862.97); 0\%} \\ 
  \makecell{Bspl WLS} & \makecell{40.06 \\[0.6ex] (1.28); 0\%} & \makecell{35.17 \\[0.6ex] (1.44); 0\%} & \makecell{41.26 \\[0.6ex] (0.94); 0\%} & \makecell{41.94 \\[0.6ex] (2.09); 0\%} \\ 
  \makecell{Bspl WLS Nugget} & \textbf{\makecell{26.23 \\[0.6ex] (1.65); 0\%}} & \textbf{\makecell{20.34 \\[0.6ex] (1.78); 0\%}} & \makecell{30.15 \\[0.6ex] (2.32); 0\%} & \makecell{27.41 \\[0.6ex] (2.25); 0\%} \\ 
  \makecell{FGM WLS} & \makecell{27.13 \\[0.6ex] (1.60); 0\%} & \makecell{20.37 \\[0.6ex] (1.56); 0\%} & \textbf{\makecell{29.94 \\[0.6ex] (1.70); 0\%}} & \textbf{\makecell{26.24 \\[0.6ex] (2.09); 0\%}} \\ 
   \hline
\end{tabular}
\par\vspace{0.8em}
\noindent\text{(c) When $\sigma_e = 2$}\par\smallskip
\begin{tabular}{ccccc}
  \hline
Method & Exponential & Matérn & Linear Matérn & Spherical \\ 
  \hline
\makecell{Exponential WLS} & \makecell{6297132322.78 \\[0.6ex] (2612061789.68); 0\%} & \makecell{54382863733.70 \\[0.6ex] (49885716704.61); 0\%} & \makecell{14193568401.23 \\[0.6ex] (11214683451.75); 0\%} & \makecell{4243318619.49 \\[0.6ex] (1549783877.31); 0\%} \\ 
  \makecell{Matérn WLS} & \makecell{9365579.32 \\[0.6ex] (4527320.90); 16\%} & \makecell{157634553.62 \\[0.6ex] (122996244.64); 8\%} & \makecell{26057727.69 \\[0.6ex] (11980667.69); 5\%} & \makecell{259547275.51 \\[0.6ex] (157820294.96); 15\%} \\ 
  \makecell{Matérn 1.5 WLS} & \makecell{360365718.88 \\[0.6ex] (206124815.18); 0\%} & \makecell{382851345.46 \\[0.6ex] (218093365.55); 0\%} & \makecell{185622753.10 \\[0.6ex] (88621767.73); 0\%} & \makecell{532679719.69 \\[0.6ex] (368471491.35); 0\%} \\ 
  \makecell{Spherical WLS} & \makecell{89.50 \\[0.6ex] (0.53); 0\%} & \makecell{88.01 \\[0.6ex] (0.52); 0\%} & \makecell{88.35 \\[0.6ex] (0.48); 0\%} & \makecell{106.09 \\[0.6ex] (0.57); 0\%} \\ 
  \makecell{Gaussian WLS} & \makecell{1130500.50 \\[0.6ex] (1130379.77); 0\%} & \makecell{2873996.98 \\[0.6ex] (2521308.24); 0\%} & \makecell{1357100.79 \\[0.6ex] (1126498.87); 0\%} & \makecell{2617620.04 \\[0.6ex] (2617490.75); 0\%} \\ 
  \makecell{Bspl WLS} & \makecell{327.62 \\[0.6ex] (8.82); 0\%} & \makecell{335.73 \\[0.6ex] (9.52); 0\%} & \makecell{321.35 \\[0.6ex] (8.67); 0\%} & \makecell{345.91 \\[0.6ex] (9.52); 0\%} \\ 
  \makecell{Bspl WLS Nugget} & \textbf{\makecell{32.60 \\[0.6ex] (2.41); 0\%}} & \textbf{\makecell{30.24 \\[0.6ex] (3.77); 0\%}} & \textbf{\makecell{40.27 \\[0.6ex] (3.90); 0\%}} & \textbf{\makecell{32.54 \\[0.6ex] (2.74); 0\%}} \\ 
  \makecell{FGM WLS} & \makecell{104.83 \\[0.6ex] (13.48); 0\%} & \makecell{107.12 \\[0.6ex] (14.59); 0\%} & \makecell{172.07 \\[0.6ex] (17.22); 0\%} & \makecell{62.07 \\[0.6ex] (8.96); 0\%} \\ 
   \hline
\end{tabular}
\end{table}

\begin{table}[ht]
\centering
\caption{Mean of ISEs, its standard error (in parentheses) (unit: $10^{-2}$) and percentage of failures of four parametric methods (exponential, Mat\~ern, spherical and Gaussian) with initial value for $\sigma_e = 10^{-1}$ and nonparametric method (finite Gaussian mixtures method with number of support points $s=100$) using REML estimation}
\label{tab:ISE_REML}
\vspace{-0.5em}
\scriptsize
\par\vspace{0.8em}
\noindent\text{(a) When $\sigma_e = 0$}\par\smallskip
\begin{tabular}{ccccc}
  \hline
Method & Exponential & Matérn & Linear Matérn & Spherical \\ 
  \hline
\makecell{Exponential REML} & \makecell{37.17 \\[0.6ex] (8.85); 0\%} & \makecell{4328.83 \\[0.6ex] (1108.96); 1\%} & \makecell{116.94 \\[0.6ex] (21.08); 0\%} & \makecell{442.83 \\[0.6ex] (99.09); 0\%} \\ 
  \makecell{Matérn REML} & \makecell{23.24 \\[0.6ex] (3.54); 0\%} & \makecell{11.88 \\[0.6ex] (1.62); 0\%} & \makecell{13.62 \\[0.6ex] (1.00); 0\%} & \makecell{67.59 \\[0.6ex] (13.46); 0\%} \\ 
  \makecell{Matérn 1.5 REML} & \textbf{\makecell{17.42 \\[0.6ex] (1.32); 0\%}} & \textbf{\makecell{10.08 \\[0.6ex] (1.74); 0\%}} & \textbf{\makecell{12.37 \\[0.6ex] (0.78); 0\%}} & \textbf{\makecell{21.51 \\[0.6ex] (3.23); 0\%}} \\ 
  \makecell{Spherical REML} & \makecell{78.14 \\[0.6ex] (0.28); 0\%} & \makecell{78.85 \\[0.6ex] (0.32); 0\%} & \makecell{78.12 \\[0.6ex] (0.24); 0\%} & \makecell{96.00 \\[0.6ex] (0.35); 0\%} \\ 
  \makecell{Gaussian REML} & \makecell{31.18 \\[0.6ex] (1.26); 0\%} & \makecell{26.49 \\[0.6ex] (1.04); 0\%} & \makecell{28.70 \\[0.6ex] (0.90); 0\%} & \makecell{33.06 \\[0.6ex] (1.79); 0\%} \\ 
  \makecell{FGM REML} & \makecell{29.89 \\[0.6ex] (5.25); 0\%} & \makecell{21.62 \\[0.6ex] (3.34); 0\%} & \makecell{12.65 \\[0.6ex] (1.03); 0\%} & \makecell{40.62 \\[0.6ex] (6.71); 0\%} \\ 
   \hline
\end{tabular}
\par\vspace{0.8em}
\noindent\text{(b) When $\sigma_e = 1$}\par\smallskip
\begin{tabular}{ccccc}
  \hline
Method & Exponential & Matérn & Linear Matérn & Spherical \\ 
  \hline
\makecell{Exponential REML} & \makecell{50.47 \\[0.6ex] (12.43); 0\%} & \makecell{46.01 \\[0.6ex] (10.38); 0\%} & \makecell{16.07 \\[0.6ex] (1.87); 0\%} & \makecell{176.39 \\[0.6ex] (40.70); 0\%} \\ 
  \makecell{Matérn REML} & \makecell{885.68 \\[0.6ex] (436.29); 13\%} & \makecell{22.78 \\[0.6ex] (4.89); 18\%} & \makecell{14.63 \\[0.6ex] (1.32); 18\%} & \makecell{2309.97 \\[0.6ex] (1325.08); 20\%} \\ 
  \makecell{Matérn 1.5 REML} & \makecell{22.56 \\[0.6ex] (3.37); 0\%} & \makecell{20.29 \\[0.6ex] (3.92); 0\%} & \textbf{\makecell{14.06 \\[0.6ex] (1.17); 0\%}} & \makecell{56.68 \\[0.6ex] (12.64); 0\%} \\ 
  \makecell{Spherical REML} & \makecell{73.51 \\[0.6ex] (0.18); 0\%} & \makecell{73.48 \\[0.6ex] (0.17); 0\%} & \makecell{73.39 \\[0.6ex] (0.13); 0\%} & \makecell{91.01 \\[0.6ex] (0.23); 0\%} \\ 
  \makecell{Gaussian REML} & \textbf{\makecell{22.33 \\[0.6ex] (1.49); 7\%}} & \textbf{\makecell{13.44 \\[0.6ex] (1.26); 15\%}} & \makecell{17.95 \\[0.6ex] (1.14); 0\%} & \textbf{\makecell{21.69 \\[0.6ex] (2.04); 22\%}} \\ 
  \makecell{FGM REML} & \makecell{45.12 \\[0.6ex] (7.89); 0\%} & \makecell{35.42 \\[0.6ex] (5.11); 0\%} & \makecell{22.43 \\[0.6ex] (1.92); 0\%} & \makecell{56.26 \\[0.6ex] (8.28); 0\%} \\ 
   \hline
\end{tabular}
\par\vspace{0.8em}
\noindent\text{(c) When $\sigma_e = 2$}\par\smallskip
\begin{tabular}{ccccc}
  \hline
Method & Exponential & Matérn & Linear Matérn & Spherical \\ 
  \hline
\makecell{Exponential REML} & \makecell{308386.57 \\[0.6ex] (179965.71); 0\%} & \makecell{56.31 \\[0.6ex] (12.39); 0\%} & \makecell{36.47 \\[0.6ex] (4.56); 0\%} & \makecell{73498.54 \\[0.6ex] (64216.55); 0\%} \\ 
  \makecell{Matérn REML} & \makecell{1766.80 \\[0.6ex] (1284.69); 25\%} & \makecell{333.49 \\[0.6ex] (210.50); 39\%} & \makecell{52.79 \\[0.6ex] (5.89); 35\%} & \makecell{1186.38 \\[0.6ex] (587.07); 40\%} \\ 
  \makecell{Matérn 1.5 REML} & \makecell{58.86 \\[0.6ex] (8.78); 0\%} & \makecell{42.03 \\[0.6ex] (7.05); 0\%} & \makecell{37.75 \\[0.6ex] (4.43); 0\%} & \makecell{83.42 \\[0.6ex] (17.17); 0\%} \\ 
  \makecell{Spherical REML} & \makecell{172.43 \\[0.6ex] (2.55); 0\%} & \makecell{170.20 \\[0.6ex] (2.58); 0\%} & \makecell{172.01 \\[0.6ex] (2.38); 0\%} & \makecell{188.86 \\[0.6ex] (2.79); 0\%} \\ 
  \makecell{Gaussian REML} & \textbf{\makecell{45.59 \\[0.6ex] (7.23); 19\%}} & \textbf{\makecell{26.60 \\[0.6ex] (3.94); 13\%}} & \textbf{\makecell{35.91 \\[0.6ex] (6.68); 8\%}} & \textbf{\makecell{51.10 \\[0.6ex] (9.04); 23\%}} \\ 
  \makecell{FGM REML} & \makecell{129.64 \\[0.6ex] (15.78); 0\%} & \makecell{120.84 \\[0.6ex] (14.00); 0\%} & \makecell{104.46 \\[0.6ex] (12.17); 0\%} & \makecell{147.17 \\[0.6ex] (16.85); 0\%} \\ 
   \hline
\end{tabular}
\end{table}

\begin{table}[ht]
\centering
\caption{Mean of MSPEs, its standard error (in parentheses) (unit: $10^{-2}$) and percentage of failures of four parametric methods (exponential, Mat\~ern, spherical and Gaussian) with initial value for $\sigma_e = 10^{-1}$ and two nonparametric methods (B-splines method with order $p=3$ and number of knots $m=5$ and finite Gaussian mixtures method with number of support points $s=100$) using WLS estimation}
\label{tab:MSPE_WLS}
\vspace{-0.5em}
\scriptsize
\par\vspace{0.8em}
\noindent\text{(a) When $\sigma_e = 0$}\par\smallskip
\begin{tabular}{ccccc}
  \hline
Method & Exponential & Matérn & Linear Matérn & Spherical \\ 
  \hline
\makecell{Exponential WLS} & \makecell{27.84 \\[0.6ex] (1.38); 1\%} & \makecell{11.85 \\[0.6ex] (1.17); 0\%} & \makecell{27.78 \\[0.6ex] (1.45); 0\%} & \makecell{17.21 \\[0.6ex] (1.15); 0\%} \\ 
  \makecell{Matérn WLS} & \makecell{32.90 \\[0.6ex] (1.35); 4\%} & \makecell{21.01 \\[0.6ex] (2.09); 3\%} & \makecell{33.90 \\[0.6ex] (1.94); 3\%} & \makecell{20.78 \\[0.6ex] (1.22); 2\%} \\ 
  \makecell{Matérn 1.5 WLS} & \makecell{34.50 \\[0.6ex] (1.88); 0\%} & \makecell{20.42 \\[0.6ex] (2.49); 0\%} & \makecell{34.40 \\[0.6ex] (2.38); 0\%} & \makecell{22.81 \\[0.6ex] (1.81); 0\%} \\ 
  \makecell{Spherical WLS} & \makecell{86.02 \\[0.6ex] (2.55); 0\%} & \makecell{91.53 \\[0.6ex] (2.75); 0\%} & \makecell{95.07 \\[0.6ex] (2.33); 0\%} & \makecell{88.84 \\[0.6ex] (3.08); 0\%} \\ 
  \makecell{Gaussian WLS} & \makecell{38.40 \\[0.6ex] (1.72); 0\%} & \makecell{21.53 \\[0.6ex] (2.13); 0\%} & \makecell{35.20 \\[0.6ex] (2.04); 0\%} & \makecell{96.80 \\[0.6ex] (69.55); 0\%} \\ 
  \makecell{Bspl WLS} & \makecell{340.89 \\[0.6ex] (134.06); 0\%} & \makecell{46.56 \\[0.6ex] (17.77); 0\%} & \makecell{67.84 \\[0.6ex] (49.16); 0\%} & \makecell{422.67 \\[0.6ex] (116.04); 0\%} \\ 
  \makecell{Bspl WLS Nugget} & \makecell{135.35 \\[0.6ex] (97.84); 0\%} & \makecell{24.13 \\[0.6ex] (13.01); 0\%} & \textbf{\makecell{24.13 \\[0.6ex] (0.64); 0\%}} & \makecell{173.43 \\[0.6ex] (78.71); 0\%} \\ 
  \makecell{FGM WLS} & \textbf{\makecell{26.42 \\[0.6ex] (0.57); 0\%}} & \textbf{\makecell{10.16 \\[0.6ex] (0.37); 0\%}} & \makecell{25.67 \\[0.6ex] (0.69); 0\%} & \textbf{\makecell{16.06 \\[0.6ex] (0.34); 0\%}} \\ 
   \hline
\end{tabular}
\par\vspace{0.8em}
\noindent\text{(b) When $\sigma_e = 1$}\par\smallskip
\begin{tabular}{ccccc}
  \hline
Method & Exponential & Matérn & Linear Matérn & Spherical \\ 
  \hline
\makecell{Exponential WLS} & \makecell{136.37 \\[0.6ex] (2.29); 0\%} & \makecell{126.92 \\[0.6ex] (2.18); 0\%} & \textbf{\makecell{142.67 \\[0.6ex] (2.61); 0\%}} & \makecell{127.50 \\[0.6ex] (2.11); 0\%} \\ 
  \makecell{Matérn WLS} & \makecell{140.06 \\[0.6ex] (2.41); 1\%} & \makecell{130.20 \\[0.6ex] (2.52); 1\%} & \makecell{147.52 \\[0.6ex] (3.00); 3\%} & \makecell{128.12 \\[0.6ex] (2.31); 4\%} \\ 
  \makecell{Matérn 1.5 WLS} & \makecell{140.17 \\[0.6ex] (2.52); 0\%} & \makecell{130.62 \\[0.6ex] (2.67); 0\%} & \makecell{146.74 \\[0.6ex] (2.93); 0\%} & \makecell{129.39 \\[0.6ex] (2.43); 0\%} \\ 
  \makecell{Spherical WLS} & \makecell{184.99 \\[0.6ex] (3.75); 0\%} & \makecell{190.62 \\[0.6ex] (4.02); 0\%} & \makecell{194.19 \\[0.6ex] (3.73); 0\%} & \makecell{187.73 \\[0.6ex] (4.19); 0\%} \\ 
  \makecell{Gaussian WLS} & \makecell{142.40 \\[0.6ex] (2.58); 0\%} & \makecell{131.67 \\[0.6ex] (2.75); 0\%} & \makecell{148.09 \\[0.6ex] (2.95); 0\%} & \makecell{130.73 \\[0.6ex] (2.52); 0\%} \\ 
  \makecell{Bspl WLS} & \makecell{179.40 \\[0.6ex] (3.08); 0\%} & \makecell{164.19 \\[0.6ex] (2.87); 0\%} & \makecell{174.73 \\[0.6ex] (3.07); 0\%} & \makecell{172.00 \\[0.6ex] (2.93); 0\%} \\ 
  \makecell{Bspl WLS Nugget} & \textbf{\makecell{135.60 \\[0.6ex] (2.27); 0\%}} & \textbf{\makecell{126.64 \\[0.6ex] (2.40); 0\%}} & \makecell{145.47 \\[0.6ex] (2.80); 0\%} & \textbf{\makecell{125.55 \\[0.6ex] (2.07); 0\%}} \\ 
  \makecell{FGM WLS} & \makecell{138.93 \\[0.6ex] (2.44); 0\%} & \makecell{130.65 \\[0.6ex] (3.02); 0\%} & \makecell{154.80 \\[0.6ex] (3.56); 0\%} & \makecell{125.64 \\[0.6ex] (2.07); 0\%} \\ 
   \hline
\end{tabular}
\par\vspace{0.8em}
\noindent\text{(c) When $\sigma_e = 2$}\par\smallskip
\begin{tabular}{ccccc}
  \hline
Method & Exponential & Matérn & Linear Matérn & Spherical \\ 
  \hline
\makecell{Exponential WLS} & \textbf{\makecell{447.90 \\[0.6ex] (7.50); 0\%}} & \textbf{\makecell{442.97 \\[0.6ex] (7.47); 0\%}} & \textbf{\makecell{458.97 \\[0.6ex] (7.80); 0\%}} & \makecell{438.62 \\[0.6ex] (7.28); 0\%} \\ 
  \makecell{Matérn WLS} & \makecell{455.05 \\[0.6ex] (7.54); 3\%} & \makecell{453.32 \\[0.6ex] (7.61); 1\%} & \makecell{468.97 \\[0.6ex] (8.17); 2\%} & \makecell{445.68 \\[0.6ex] (7.53); 4\%} \\ 
  \makecell{Matérn 1.5 WLS} & \makecell{455.72 \\[0.6ex] (7.48); 0\%} & \makecell{454.19 \\[0.6ex] (7.60); 0\%} & \makecell{468.89 \\[0.6ex] (8.13); 0\%} & \makecell{447.17 \\[0.6ex] (7.45); 0\%} \\ 
  \makecell{Spherical WLS} & \makecell{482.51 \\[0.6ex] (8.23); 0\%} & \makecell{488.35 \\[0.6ex] (8.48); 0\%} & \makecell{492.03 \\[0.6ex] (8.34); 0\%} & \makecell{485.16 \\[0.6ex] (8.51); 0\%} \\ 
  \makecell{Gaussian WLS} & \makecell{463.50 \\[0.6ex] (7.98); 0\%} & \makecell{459.06 \\[0.6ex] (8.03); 0\%} & \makecell{468.58 \\[0.6ex] (8.01); 0\%} & \makecell{454.76 \\[0.6ex] (7.85); 0\%} \\ 
  \makecell{Bspl WLS} & \makecell{631.53 \\[0.6ex] (10.80); 0\%} & \makecell{616.40 \\[0.6ex] (10.59); 0\%} & \makecell{626.12 \\[0.6ex] (10.75); 0\%} & \makecell{624.37 \\[0.6ex] (10.67); 0\%} \\ 
  \makecell{Bspl WLS Nugget} & \makecell{448.97 \\[0.6ex] (7.44); 0\%} & \makecell{444.88 \\[0.6ex] (7.45); 0\%} & \makecell{459.92 \\[0.6ex] (7.66); 0\%} & \textbf{\makecell{438.44 \\[0.6ex] (7.19); 0\%}} \\ 
  \makecell{FGM WLS} & \makecell{456.85 \\[0.6ex] (7.88); 0\%} & \makecell{459.36 \\[0.6ex] (8.30); 0\%} & \makecell{479.42 \\[0.6ex] (8.80); 0\%} & \makecell{440.42 \\[0.6ex] (7.30); 0\%} \\ 
   \hline
\end{tabular}
\end{table}

\begin{table}[ht]
\centering
\caption{Mean of MSPEs, its standard error (in parentheses) (unit: $10^{-2}$) and percentage of failures of four parametric methods (exponential, Mat\~ern, spherical and Gaussian) with initial value for $\sigma_e = 10^{-1}$ and nonparametric method (finite Gaussian mixtures method with number of support points $s=100$) using REML estimation}
\label{tab:MSPE_REML}
\vspace{-0.5em}
\scriptsize
\par\vspace{0.8em}
\noindent\text{(a) When $\sigma_e = 0$}\par\smallskip
\begin{tabular}{ccccc}
  \hline
Method & Exponential & Matérn & Linear Matérn & Spherical \\ 
  \hline
\makecell{Exponential REML} & \textbf{\makecell{21.99 \\[0.6ex] (0.38); 0\%}} & \makecell{7.11 \\[0.6ex] (0.15); 1\%} & \makecell{19.51 \\[0.6ex] (0.37); 0\%} & \textbf{\makecell{13.53 \\[0.6ex] (0.24); 0\%}} \\ 
  \makecell{Matérn REML} & \makecell{22.13 \\[0.6ex] (0.38); 1\%} & \makecell{5.56 \\[0.6ex] (0.12); 0\%} & \makecell{17.74 \\[0.6ex] (0.35); 0\%} & \makecell{13.60 \\[0.6ex] (0.24); 1\%} \\ 
  \makecell{Matérn 1.5 REML} & \makecell{22.11 \\[0.6ex] (0.38); 0\%} & \textbf{\makecell{5.53 \\[0.6ex] (0.12); 0\%}} & \textbf{\makecell{17.70 \\[0.6ex] (0.35); 0\%}} & \makecell{13.62 \\[0.6ex] (0.24); 0\%} \\ 
  \makecell{Spherical REML} & \makecell{86.02 \\[0.6ex] (2.55); 0\%} & \makecell{91.53 \\[0.6ex] (2.75); 0\%} & \makecell{95.07 \\[0.6ex] (2.33); 0\%} & \makecell{88.84 \\[0.6ex] (3.08); 0\%} \\ 
  \makecell{Gaussian REML} & \makecell{22.95 \\[0.6ex] (0.38); 0\%} & \makecell{6.04 \\[0.6ex] (0.13); 0\%} & \makecell{18.50 \\[0.6ex] (0.38); 0\%} & \makecell{14.39 \\[0.6ex] (0.25); 0\%} \\ 
  \makecell{FGM REML} & \makecell{22.15 \\[0.6ex] (0.39); 0\%} & \makecell{5.61 \\[0.6ex] (0.12); 0\%} & \makecell{17.85 \\[0.6ex] (0.35); 0\%} & \makecell{13.61 \\[0.6ex] (0.24); 0\%} \\ 
   \hline
\end{tabular}
\par\vspace{0.8em}
\noindent\text{(b) When $\sigma_e = 1$}\par\smallskip
\begin{tabular}{ccccc}
  \hline
Method & Exponential & Matérn & Linear Matérn & Spherical \\ 
  \hline
\makecell{Exponential REML} & \makecell{134.12 \\[0.6ex] (2.20); 0\%} & \makecell{123.40 \\[0.6ex] (2.04); 0\%} & \makecell{137.82 \\[0.6ex] (2.29); 0\%} & \textbf{\makecell{124.61 \\[0.6ex] (2.04); 0\%}} \\ 
  \makecell{Matérn REML} & \textbf{\makecell{133.76 \\[0.6ex] (2.17); 20\%}} & \textbf{\makecell{122.71 \\[0.6ex] (2.15); 26\%}} & \textbf{\makecell{137.17 \\[0.6ex] (2.31); 23\%}} & \makecell{125.29 \\[0.6ex] (2.36); 26\%} \\ 
  \makecell{Matérn 1.5 REML} & \makecell{134.35 \\[0.6ex] (2.20); 0\%} & \makecell{123.20 \\[0.6ex] (2.02); 0\%} & \makecell{137.87 \\[0.6ex] (2.27); 0\%} & \makecell{124.62 \\[0.6ex] (2.03); 0\%} \\ 
  \makecell{Spherical REML} & \makecell{184.99 \\[0.6ex] (3.75); 0\%} & \makecell{190.62 \\[0.6ex] (4.02); 0\%} & \makecell{194.19 \\[0.6ex] (3.73); 0\%} & \makecell{187.73 \\[0.6ex] (4.19); 0\%} \\ 
  \makecell{Gaussian REML} & \makecell{135.48 \\[0.6ex] (2.36); 9\%} & \makecell{124.13 \\[0.6ex] (2.09); 4\%} & \makecell{138.73 \\[0.6ex] (2.28); 1\%} & \makecell{126.94 \\[0.6ex] (2.37); 24\%} \\ 
  \makecell{FGM REML} & \makecell{134.61 \\[0.6ex] (2.19); 0\%} & \makecell{123.56 \\[0.6ex] (2.01); 0\%} & \makecell{138.35 \\[0.6ex] (2.29); 0\%} & \makecell{124.82 \\[0.6ex] (2.03); 0\%} \\ 
   \hline
\end{tabular}
\par\vspace{0.8em}
\noindent\text{(c) When $\sigma_e = 2$}\par\smallskip
\begin{tabular}{ccccc}
  \hline
Method & Exponential & Matérn & Linear Matérn & Spherical \\ 
  \hline
\makecell{Exponential REML} & \textbf{\makecell{447.22 \\[0.6ex] (7.32); 0\%}} & \textbf{\makecell{440.23 \\[0.6ex] (7.18); 0\%}} & \textbf{\makecell{455.75 \\[0.6ex] (7.45); 0\%}} & \textbf{\makecell{437.29 \\[0.6ex] (7.12); 0\%}} \\ 
  \makecell{Matérn REML} & \makecell{455.50 \\[0.6ex] (9.62); 33\%} & \makecell{443.41 \\[0.6ex] (9.65); 40\%} & \makecell{456.29 \\[0.6ex] (8.37); 37\%} & \makecell{451.17 \\[0.6ex] (9.75); 43\%} \\ 
  \makecell{Matérn 1.5 REML} & \makecell{450.95 \\[0.6ex] (7.61); 0\%} & \makecell{444.09 \\[0.6ex] (7.64); 0\%} & \makecell{458.65 \\[0.6ex] (7.64); 0\%} & \makecell{441.41 \\[0.6ex] (7.51); 0\%} \\ 
  \makecell{Spherical REML} & \makecell{482.51 \\[0.6ex] (8.23); 0\%} & \makecell{488.35 \\[0.6ex] (8.48); 0\%} & \makecell{492.03 \\[0.6ex] (8.34); 0\%} & \makecell{485.16 \\[0.6ex] (8.51); 0\%} \\ 
  \makecell{Gaussian REML} & \makecell{448.95 \\[0.6ex] (8.38); 17\%} & \makecell{442.73 \\[0.6ex] (7.72); 9\%} & \makecell{456.73 \\[0.6ex] (7.80); 4\%} & \makecell{439.25 \\[0.6ex] (8.18); 16\%} \\ 
  \makecell{FGM REML} & \makecell{447.38 \\[0.6ex] (7.28); 0\%} & \makecell{440.68 \\[0.6ex] (7.18); 0\%} & \makecell{456.32 \\[0.6ex] (7.45); 0\%} & \makecell{437.38 \\[0.6ex] (7.10); 0\%} \\ 
   \hline
\end{tabular}
\end{table}

\subsection{Additional FSW Data Analysis Results}
\label{sub: appendix_real_data}

Figures~\ref{fig:gaussianity}-~\ref{fig:grid_wls} and Tables~\ref{tab:coef_comparison_methods_without}-~\ref{tab:computation_times} provide additional results related to the FSW data analysis.

\begin{figure}[htbp!]
  \centering
  \subfigure[Histogram of logit-proportions]{%
    \includegraphics[width=0.48\textwidth]{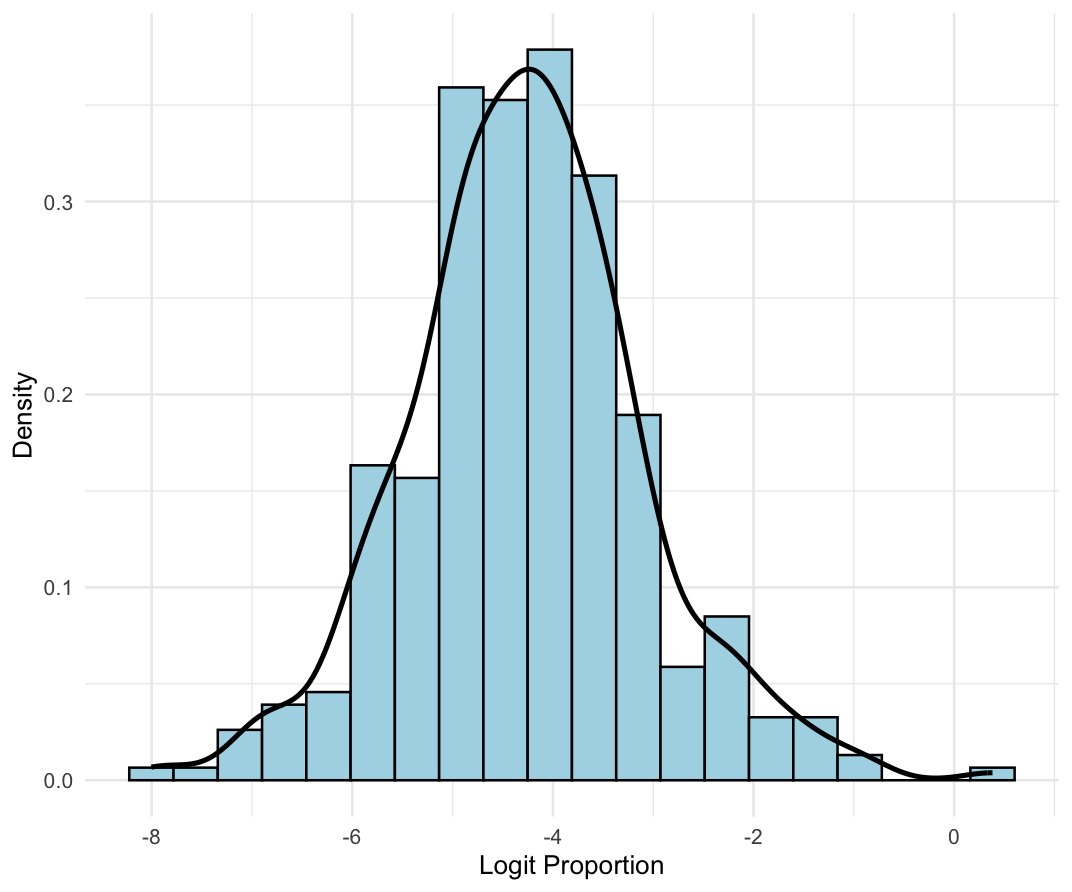}%
    \label{fig:histogram_logit_proportions}%
  }
  \hfill
  \subfigure[Normal Q-Q plot of logit-proportions]{%
    \includegraphics[width=0.48\textwidth]{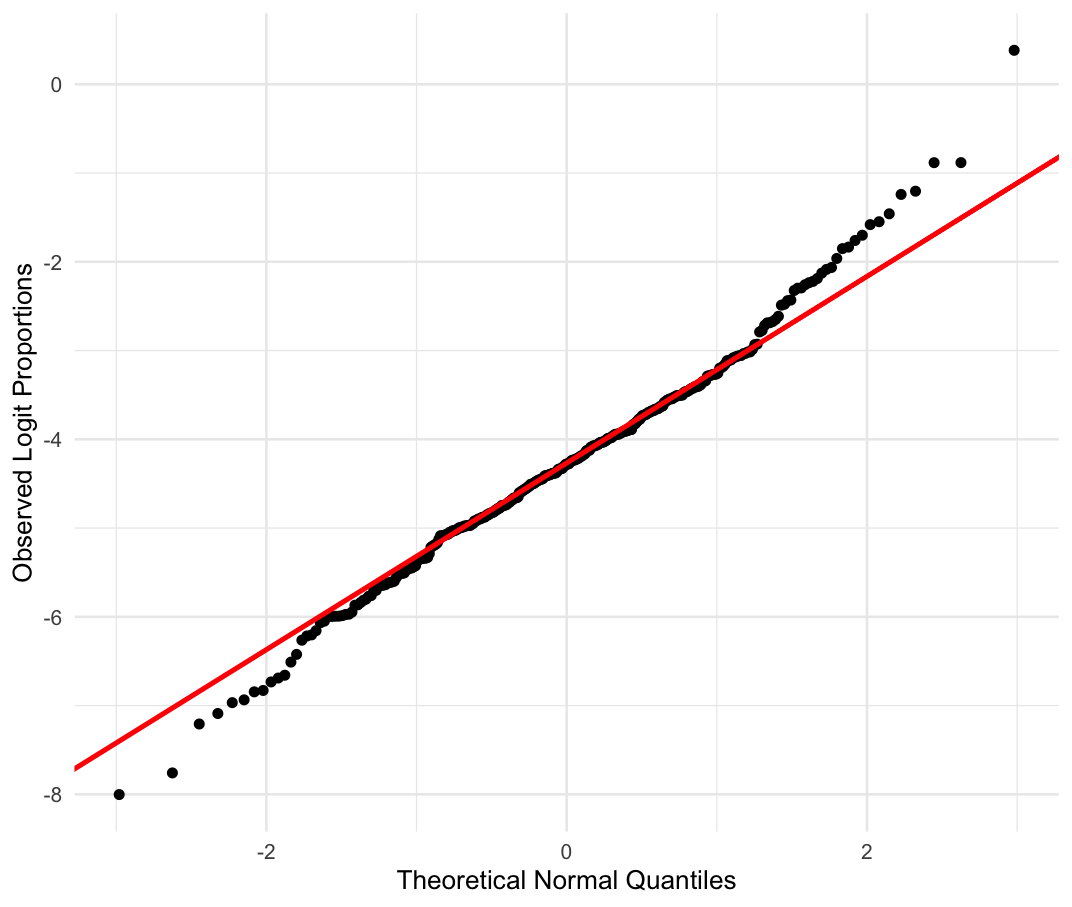}%
    \label{fig:Qqplot_logit_proportions}%
  }
  \caption{Assessing the Gaussianity of logit-transformed proportions using the: 
    (a) histogram and
    (b) normal Q-Q plot.}
  \label{fig:gaussianity}
\end{figure}

\begin{table}[ht]
\centering
\caption{Coefficient estimates and their standard errors (in parentheses) across methods based on data without outliers}
\label{tab:coef_comparison_methods_without}
\begin{tabular}{lcccc}
\hline
& OLS & ML & CAR-INLA \\
\hline
Intercept 
& -4.137 (0.054) 
& -4.070 (0.163) 
& -4.091 (0.053) \\

Urban Population Percentage 
& 0.229 (0.061) 
& 0.216 (0.066) 
& 0.127 (0.079) \\

Reference Population
& -0.583 (0.045) 
& -0.538 (0.044) 
& -0.564 (0.053) \\
\hline
\end{tabular}
\end{table}

% latex table generated in R 4.6.0 by xtable 1.8-8 package
% Sun Jul 26 17:32:51 2026
\begin{table}[ht]
\centering
\caption{Mean of MSPEs and its standard error (in parentheses) for the four parametric methods (exponential, Mat\'{e}rn, spherical and Gaussian) with initial value for $\sigma_e = 10^{-1}$ and the two nonparametric methods (B-splines method with order $p=3$ and number of knots $m=20$ and finite Gaussian mixtures method with number of grid points $s=100$) using WLS estimation from five-fold cross-validation. Values on the logit scale are reported in units of $10^{-2}$, whereas values on the probability scale are reported in units of $10^{-4}$.} 
\label{tab:mspe_realdata_WLS1}
\begin{tabular}{lcccc}
  \multicolumn{1}{c}{Method} & \multicolumn{2}{c}{Logit Scale} & \multicolumn{2}{c}{Probability Scale} \\

\cmidrule(lr){2-3} \cmidrule(lr){4-5}

 & With Outliers & Without Outliers & With Outliers & Without Outliers \\
 \hline
  \hline
Exponential WLS & \textbf{96.91 (10.76)} & 94.37 (5.33) & 20.13 (6.89) & 20.08 (6.75) \\ 
  Mat\'{e}rn WLS & 101.06 (10.94) & 101.03 (6.66) & 20.41 (6.76) & 20.85 (6.49) \\ 
  Mat\'{e}rn 1.5 WLS & 102.30 (11.68) & 101.50 (6.18) & 20.52 (6.76) & 20.83 (6.48) \\ 
  Spherical WLS & 97.46 (10.92) & 93.54 (5.29) & \textbf{20.07 (6.92)} & 19.97 (6.83) \\ 
  Gaussian WLS & 119.46 (13.21) & 116.24 (7.76) & 22.28 (7.38) & 22.33 (7.30) \\ 
  Bspl WLS & 343847.28 (342629.93) & 9081.37 (7853.55) & 409.02 (194.56) & 305.22 (107.86) \\ 
  Bspl WLS Nugget & 93180.80 (93091.02) & \textbf{86.84 (5.18)} & 181.13 (158.79) & \textbf{19.84 (6.92)} \\ 
  FGM WLS & 784874.54 (684999.26) & 488684.56 (488572.02) & 534.36 (344.11) & 369.74 (347.24) \\ 
   \hline
\end{tabular}
\end{table}

% latex table generated in R 4.6.0 by xtable 1.8-8 package
% Sun Jul 26 17:33:17 2026
\begin{table}[ht]
\centering
\caption{Mean of MSPEs and its standard error (in parentheses) for the four parametric methods (exponential, Mat\'{e}rn, spherical and Gaussian) with initial value for $\sigma_e = 10^{-1}$ and nonparametric method (finite Gaussian mixtures method with number of grid points $s=100$) using REML estimation from five-fold cross-validation. Values on the logit scale are reported in units of $10^{-2}$, whereas values on the probability scale are reported in units of $10^{-4}$.} 
\label{tab:mspe_realdata_REML1}
\begin{tabular}{lcccc}
  \multicolumn{1}{c}{Method} & \multicolumn{2}{c}{Logit Scale} & \multicolumn{2}{c}{Probability Scale} \\

\cmidrule(lr){2-3} \cmidrule(lr){4-5}

 & With Outliers & Without Outliers & With Outliers & Without Outliers \\
 \hline
  \hline
Exponential REML & \textbf{86.01 (9.12)} & \textbf{81.66 (4.23)} & 20.32 (7.21) & 20.21 (7.17) \\ 
  Mat\'{e}rn REML & 90.10 (12.43) & 86.22 (7.40) & 20.50 (7.28) & 20.37 (7.20) \\ 
  Mat\'{e}rn 1.5 REML & 90.03 (12.47) & 85.77 (7.39) & 20.49 (7.27) & 20.34 (7.21) \\ 
  Spherical REML & 95.55 (10.38) & 92.32 (4.76) & 20.34 (7.07) & \textbf{20.19 (6.96)} \\ 
  Gaussian REML & 102.25 (11.14) & 94.21 (5.58) & \textbf{13.76 (3.22)} & 21.42 (8.78) \\ 
  FGM REML & 88.15 (9.29) & 82.39 (4.27) & 20.55 (7.27) & 20.31 (7.15) \\ 
   \hline
\end{tabular}
\end{table}

% latex table generated in R 4.6.0 by xtable 1.8-8 package
% Sun Jul 26 17:37:02 2026
\begin{table}[ht]
\centering
\caption{Mean of MSPEs and its standard error (in parentheses) for the CAR-INLA model and the four parametric methods (exponential, Mat\'{e}rn, spherical and Gaussian) with initial value for $\sigma_e = 0$ and the two nonparametric methods (B-splines method with order $p=3$ and number of knots $m=20$ and finite Gaussian mixtures method with number of grid points $s=100$) using ML estimation from five-fold cross-validation. Values on the logit scale are reported in units of $10^{-2}$, whereas values on the probability scale are reported in units of $10^{-4}$.} 
\label{tab:mspe_realdata_ML2}
\begin{tabular}{lcccc}
  \multicolumn{1}{c}{Method} & \multicolumn{2}{c}{Logit Scale} & \multicolumn{2}{c}{Probability Scale} \\

\cmidrule(lr){2-3} \cmidrule(lr){4-5}

 & With Outliers & Without Outliers & With Outliers & Without Outliers \\
 \hline
  \hline
CAR INLA & 97.63 (10.43) & 94.06 (5.00) & 20.63 (7.35) & 20.66 (7.35) \\ 
  Exponential ML & 89.00 (11.82) & 85.77 (7.40) & 20.45 (7.21) & 20.32 (7.18) \\ 
  Mat\'{e}rn ML & 89.95 (11.99) & 86.14 (9.45) & 20.54 (7.28) & 23.11 (8.61) \\ 
  Mat\'{e}rn 1.5 ML & 89.49 (11.90) & 85.80 (7.36) & 20.52 (7.28) & 20.36 (7.22) \\ 
  Spherical ML & 96.81 (10.16) & 92.29 (4.76) & 20.33 (7.06) & \textbf{20.19 (6.96)} \\ 
  Gaussian ML & 101.36 (12.30) & 94.94 (6.26) & \textbf{13.66 (3.24)} & 21.39 (8.78) \\ 
  Bspl ML & 1027.99 (439.13) & 1048.64 (452.92) & 189.89 (88.30) & 190.79 (89.70) \\ 
  FGM ML & \textbf{88.17 (9.38)} & \textbf{82.42 (4.26)} & 20.56 (7.28) & 20.32 (7.16) \\ 
   \hline
\end{tabular}
\end{table}

% latex table generated in R 4.6.0 by xtable 1.8-8 package
% Sun Jul 26 17:37:45 2026
\begin{table}[ht]
\centering
\caption{Mean of MSPEs and its standard error (in parentheses) for the four parametric methods (exponential, Mat\'{e}rn, spherical and Gaussian) with initial value for $\sigma_e = 0$ and the two nonparametric methods (B-splines method with order $p=3$ and number of knots $m=20$ and finite Gaussian mixtures method with number of grid points $s=100$) using WLS estimation from five-fold cross-validation. Values on the logit scale are reported in units of $10^{-2}$, whereas values on the probability scale are reported in units of $10^{-4}$.} 
\label{tab:mspe_realdata_WLS2}
\begin{tabular}{lcccc}
  \multicolumn{1}{c}{Method} & \multicolumn{2}{c}{Logit Scale} & \multicolumn{2}{c}{Probability Scale} \\

\cmidrule(lr){2-3} \cmidrule(lr){4-5}

 & With Outliers & Without Outliers & With Outliers & Without Outliers \\
 \hline
  \hline
Exponential WLS & 102.37 (12.06) & 93.82 (4.85) & \textbf{13.69 (3.20)} & 20.07 (6.75) \\ 
  Mat\'{e}rn WLS & 103.15 (10.43) & 103.47 (6.63) & 20.59 (6.69) & 21.13 (6.34) \\ 
  Mat\'{e}rn 1.5 WLS & 104.23 (11.15) & 104.32 (6.50) & 20.67 (6.69) & 21.18 (6.33) \\ 
  Spherical WLS & \textbf{97.56 (10.92)} & 93.62 (5.29) & 20.07 (6.92) & 19.97 (6.83) \\ 
  Gaussian WLS & 1867.38 (706.54) & 5969.37 (3727.18) & 201.41 (66.19) & 300.02 (78.54) \\ 
  Bspl WLS & 343847.28 (342629.93) & 9081.37 (7853.55) & 409.02 (194.56) & 305.22 (107.86) \\ 
  Bspl WLS Nugget & 93180.80 (93091.02) & \textbf{86.84 (5.18)} & 181.13 (158.79) & \textbf{19.84 (6.92)} \\ 
  FGM WLS & 784874.54 (684999.26) & 488684.56 (488572.02) & 534.36 (344.11) & 369.74 (347.24) \\ 
   \hline
\end{tabular}
\end{table}

% latex table generated in R 4.6.0 by xtable 1.8-8 package
% Sun Jul 26 17:38:10 2026
\begin{table}[ht]
\centering
\caption{Mean of MSPEs and its standard error (in parentheses) for the four parametric methods (exponential, Mat\'{e}rn, spherical and Gaussian) with initial value for $\sigma_e = 0$ and nonparametric method (finite Gaussian mixtures method with number of grid points $s=100$) using REML estimation from five-fold cross-validation. Values on the logit scale are reported in units of $10^{-2}$, whereas values on the probability scale are reported in units of $10^{-4}$.} 
\label{tab:mspe_realdata_REML2}
\begin{tabular}{lcccc}
  \multicolumn{1}{c}{Method} & \multicolumn{2}{c}{Logit Scale} & \multicolumn{2}{c}{Probability Scale} \\

\cmidrule(lr){2-3} \cmidrule(lr){4-5}

 & With Outliers & Without Outliers & With Outliers & Without Outliers \\
 \hline
  \hline
Exponential REML & \textbf{86.01 (9.12)} & 85.72 (7.44) & 20.32 (7.21) & 20.32 (7.18) \\ 
  Mat\'{e}rn REML & 86.41 (8.99) & 86.22 (7.40) & 20.41 (7.28) & 20.37 (7.20) \\ 
  Mat\'{e}rn 1.5 REML & 89.43 (11.90) & 85.77 (7.39) & 20.52 (7.27) & 20.34 (7.21) \\ 
  Spherical REML & 96.83 (10.17) & 92.34 (4.76) & 20.34 (7.06) & \textbf{20.21 (6.96)} \\ 
  Gaussian REML & 101.15 (12.10) & 94.95 (6.26) & \textbf{13.73 (3.29)} & 21.39 (8.78) \\ 
  FGM REML & 88.15 (9.29) & \textbf{82.39 (4.27)} & 20.55 (7.27) & 20.31 (7.15) \\ 
   \hline
\end{tabular}
\end{table}

\begin{figure}[h!]
  \centering
  \subfigure[With outliers]{%
    \includegraphics[width=0.45\textwidth]{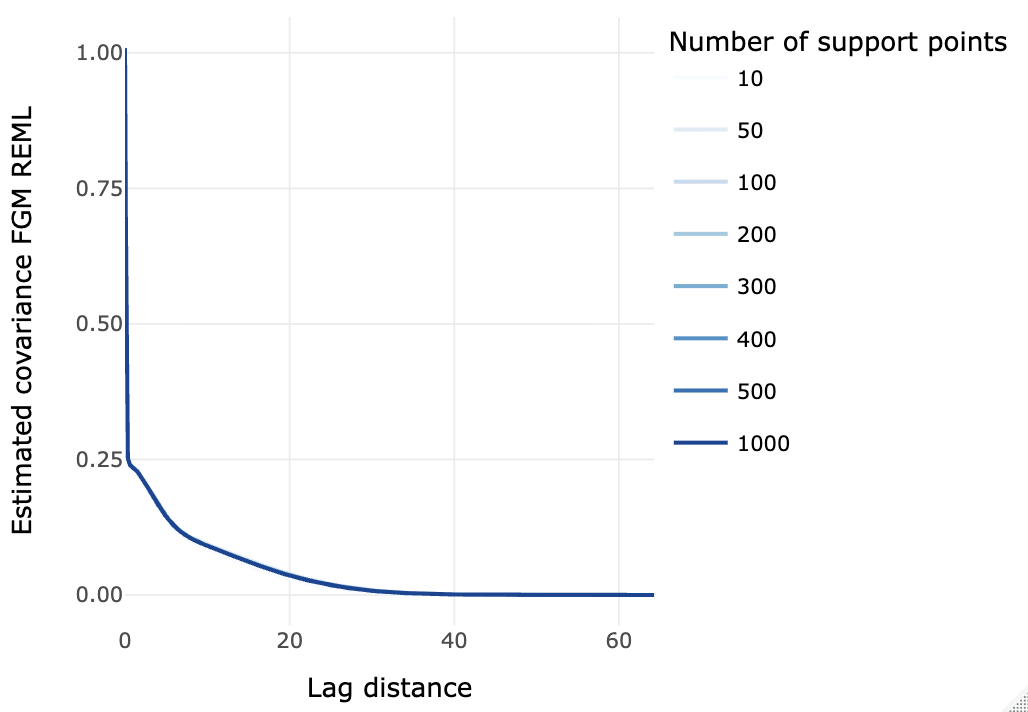}%
    \label{fig:grid_with_reml}%
  }
  \hfill
  \subfigure[Without outliers]{%
    \includegraphics[width=0.45\textwidth]{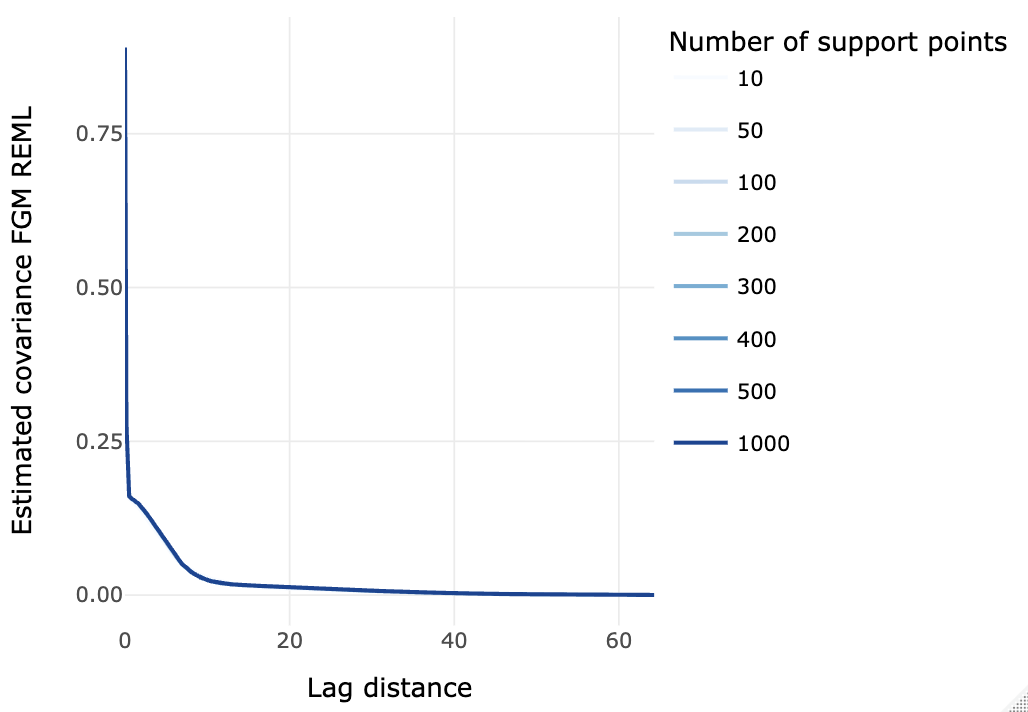}%
    \label{fig:grid_without_reml}%
  }
  \caption{Estimated covariances using the nonparametric method, FGM using REML estimation over different grid sizes: 
    (a) using the dataset with outliers, 
    (b) using the dataset without outliers.}
  \label{fig:grid_reml}
\end{figure}

\begin{figure}[h!]
  \centering
  \subfigure[With outliers]{%
    \includegraphics[width=0.45\textwidth]{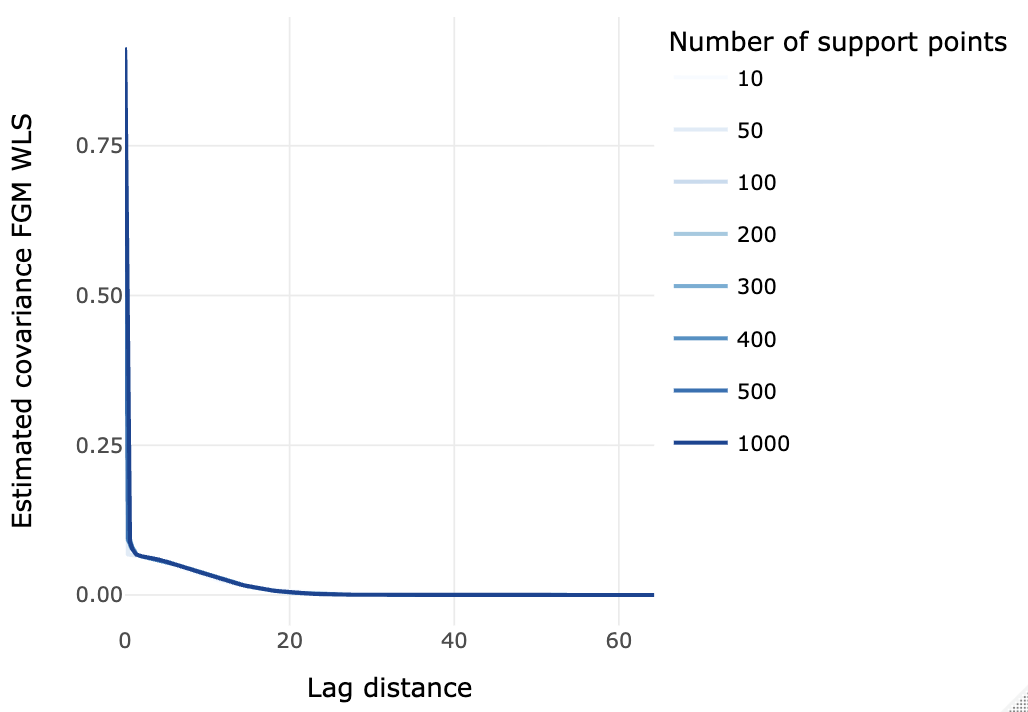}%
    \label{fig:grid_with_wls}%
  }
  \hfill
  \subfigure[Without outliers]{%
    \includegraphics[width=0.45\textwidth]{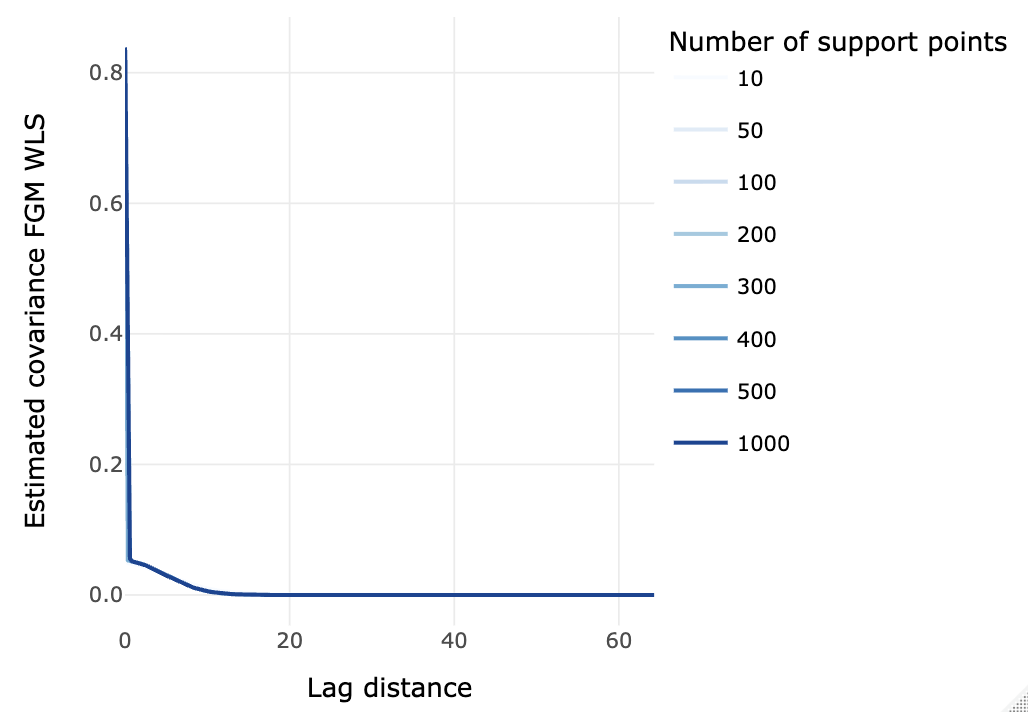}%
    \label{fig:grid_without_wls}%
  }
  \caption{Estimated covariances using the nonparametric method, FGM using WLS estimation over different grid sizes: 
    (a) using the dataset with outliers, 
    (b) using the dataset without outliers.}
  \label{fig:grid_wls}
\end{figure}

\begin{table}[htbp]
\centering
\footnotesize
\caption{MSPEs of the proposed nonparametric method using ML, REML, and WLS estimation methods under different numbers of support points using datasets with and without outliers on the logit (unit: $10^{-2}$) and probability (unit: $10^{-4}$) scales.}
\label{tab:mspe_support_points}
\begin{tabular}{c ccc ccc ccc ccc}
\hline
& \multicolumn{6}{c}{Logit Scale ($10^{-2}$)} 
& \multicolumn{6}{c}{Probability Scale ($10^{-4}$)} \\
\cline{2-7} \cline{8-13}
\text{Support Points}
& \multicolumn{3}{c}{With Outliers}
& \multicolumn{3}{c}{Without Outliers}
& \multicolumn{3}{c}{With Outliers}
& \multicolumn{3}{c}{Without Outliers} \\
\cline{2-4} \cline{5-7} \cline{8-10} \cline{11-13}
& ML & REML & WLS
& ML & REML & WLS
& ML & REML & WLS
& ML & REML & WLS \\
\hline
10
& 100.74 & 100.68 & 121.27
& 104.31 & 103.44 & 112.21
& 47.33 & 47.66 & 46.39
& 20.91 & 20.82 & 21.41 \\

50
& 100.78 & 100.71 & 110.78
& 104.80 & 103.84 & 112.34
& 47.40 & 47.67 & 45.39
& 21.05 & 20.95 & 21.49 \\

100
& 100.74 & 100.69 & 107.64
& 104.85 & 103.84 & 112.33
& 47.35 & 47.66 & 45.12
& 21.06 & 20.95 & 21.49 \\

200
& 100.75 & 100.69 & 121.17
& 104.86 & 103.84 & 201.41
& 47.36 & 47.65 & 46.67
& 21.06 & 20.95 & 161.91 \\

300
& 100.74 & 100.68 & 108.72
& 104.86 & 103.85 & 112.34
& 47.36 & 47.64 & 45.23
& 21.06 & 20.95 & 21.49 \\

400
& 100.74 & 100.68 & 107.73
& 104.85 & 103.85 & 208.43
& 47.36 & 47.64 & 45.14
& 21.06 & 20.95 & 178.53 \\

500
& 100.75 & 100.68 & 108.51
& 104.86 & 103.86 & 198.87
& 47.36 & 47.65 & 45.20
& 21.06 & 20.95 & 155.29 \\

1000
& 100.75 & 100.69 & 120.74
& 104.86 & 103.85 & 201.52
& 47.36 & 47.65 & 46.47
& 21.06 & 20.95 & 162.17 \\
\hline
\end{tabular}
\end{table}

\begin{table}[htbp]
\centering
\caption{Computation times (in seconds) for the proposed nonparametric method using ML, REML, and WLS estimation methods using datasets with and without outliers.}
\label{tab:computation_times}
\begin{tabular}{c ccc ccc}
\hline
& \multicolumn{3}{c}{With Outliers}
& \multicolumn{3}{c}{Without Outliers} \\
\cline{2-4} \cline{5-7}
\text{Support Points}
& ML & REML & WLS
& ML & REML & WLS \\
\hline
10
& 1.52 & 1.34 & 0.001
& 1.16 & 1.11 & 0.001 \\

50
& 4.74 & 5.06 & 0.002
& 3.69 & 3.71 & 0.001 \\

100
& 9.37 & 8.82 & 0.002
& 8.63 & 7.95 & 0.002 \\

200
& 23.44 & 22.27 & 0.002
& 18.74 & 17.18 & 0.002 \\

300
& 39.62 & 35.73 & 0.004
& 24.90 & 27.37 & 0.003 \\

400
& 46.30 & 45.34 & 0.003
& 41.39 & 37.20 & 0.004 \\

500
& 58.42 & 57.58 & 0.004
& 43.46 & 48.25 & 0.004 \\

1000
& 122.39 & 124.87 & 0.007
& 88.79 & 100.45 & 0.007 \\
\hline
\end{tabular}
\end{table}

\end{document}